\definecolor{ForestGreen}{rgb}{0.1333,0.5451,0.1333}
\definecolor{DarkRed}{rgb}{0.8,0,0}
\definecolor{Red}{rgb}{1,0,0}
\newtheorem{theorem}{Theorem}[section]
\newtheorem{lemma}[theorem]{Lemma}
\newtheorem{observation}[theorem]{Observation}
\newtheorem{invariant}[theorem]{Invariant}
\newtheorem{property}[theorem]{Property}
\newtheorem{definition}[theorem]{Definition}
\newtheorem{remark}[theorem]{Remark}
\newtheorem*{theorem*}{Theorem}
\newtheorem*{corollary*}{Corollary}
\newtheorem*{conjecture*}{Conjecture}
\newtheorem*{lemma*}{Lemma}
\newtheorem*{thm*}{Theorem}
\newtheorem*{prop*}{Proposition}
\newtheorem*{obs*}{Observation}
\newtheorem*{definition*}{Definition}
\newtheorem*{remark*}{Remark}
\newtheorem*{rec*}{Recommendation}
\newenvironment{fminipage}%
  {\begin{Sbox}\begin{minipage}}%
  {\end{minipage}\end{Sbox}\fbox{\TheSbox}}
\def\floor#1{\left\lfloor #1 \right\rfloor}
\def\ceil#1{\left\lceil #1 \right\rceil}
\def\union{\cup}
\DeclareMathOperator*{\argmin}{arg\,min}
\newcommand{\polylog}{\text{ polylog}}
\DeclareMathOperator{\Bef}{Bef}
\DeclareMathOperator{\suffix}{suffix}
\newcommand{\tHalf}{\Bef (t) }
\newcommand{\tHalfInText}{\Bef (t) }
\newcommand{\DSReal}{\mathcal{DS}}
\newcommand{\DSFront}{\widehat{\mathcal{DS}}}
\newcommand{\DSBack}{\widetilde{\mathcal{DS}}}
\newcommand{\SSSPAbbre}{SSSP }
\newcommand{\APSPAbbrev}{APSP }
\newcommand*\samethanks[1][\value{footnote}]{\footnotemark[#1]}
\title{Incremental SSSP for Sparse Digraphs Beyond the Hopset Barrier}
\author{Rasmus Kyng\thanks{The research leading to these results has received funding from grant no. 200021 204787 of the Swiss National Science Foundation.}, \\ ETH Zurich \\ kyng@inf.ethz.ch \\  \and Simon Meierhans\samethanks, \\ ETH Zurich \\ mesimon@inf.ethz.ch \and Maximilian Probst Gutenberg\samethanks, \\ ETH Zurich \\ maximilian.probst@inf.ethz.ch}
\begin{document}
\date{}
\maketitle
\pagenumbering{gobble}

\begin{abstract}
    Given a directed, weighted graph $G=(V,E)$ undergoing edge insertions, the \emph{incremental} single-source shortest paths (SSSP) problem asks for the maintenance of approximate distances from a dedicated source $s$ while optimizing the total time required to process the insertion sequence of $m$ edges.
    
    Recently, Gutenberg, Williams and Wein [\hyperlink{cite.PGVWW20}{STOC'20}] introduced a deterministic $\tilde{O}(n^2)$ algorithm for this problem, achieving near linear time for very dense graphs. For sparse graphs, Chechik and Zhang [\hyperlink{cite.CZ21}{SODA'21}] recently presented a deterministic $\tilde{O}(m^{5/3})$ algorithm, and an adaptive randomized algorithm with run-time $\tilde{O}(m\sqrt{n} + m^{7/5})$. This algorithm is remarkable for two reasons: 1) in very spare graphs it reaches the directed hopset barrier of $\tilde{\Omega}(n^{3/2})$ that applied to all previous approaches for partially-dynamic SSSP  [\hyperlink{cite.HKN14}{STOC'14}, \hyperlink{cite.PGWN20}{SODA'20}, \hyperlink{cite.BPGWL20}{FOCS'20}] \emph{and} 2) it does not resort to a directed hopset technique itself. 
    
    In this article we introduce \emph{propagation synchronization}, a new technique for controlling the error build-up on paths throughout batches of insertions. This leads us to a significant improvement of the approach in [\hyperlink{cite.CZ21}{SODA'21}] yielding a \emph{deterministic} $\tilde{O}(m^{3/2})$ algorithm for the problem. By a very careful combination of our new technique with the sampling approach from [\hyperlink{cite.CZ21}{SODA'21}],  we further obtain an adaptive randomized algorithm with total update time $\tilde{O}(m^{4/3})$. This is the first partially-dynamic SSSP algorithm in sparse graphs to bypass the notorious directed hopset barrier which is often seen as the fundamental challenge towards achieving truly near-linear time algorithms.
\end{abstract}
\clearpage
\pagenumbering{arabic}

\section{Introduction}

The single source shortest paths (SSSP) problem is among the first algorithmic problems taught in undergraduate courses. Its broad applicability has caused its numerous variants to be well studied. One such variant is the fully or partially dynamic \SSSPAbbre problem, where distances have to be maintained in a changing environment. 

A dynamic graph refers to an $initial$ graph  $G = (V, E)$ and a sequence of $insertions$ and $deletions$, where an $insertion$ causes an element to be added to the edge set $E$, and a $deletion$ removes an element from said edge set. In the \emph{partially dynamic} setting, these changes are limited to either just insertions or just deletions, and we call the graph $incremental$ or $decremental$ respectively.

Partially dynamic graph algorithms are interesting for numerous reasons. They often serve as a proxy for the harder fully dynamic setting to develop new techniques, and even more often, reductions from the fully dynamic setting to a partially dynamic setting exist. Additionally, partially dynamic \SSSPAbbre in particular is often a crucial building block for solving more complex dynamic problems, such as All-Pairs Shortest Paths (APSP) \cite{K99, RZ11}, maintaining the diameter \cite{AHR19, CG20} or maintaining a matching in a bipartite graph \cite{BHR19}. Finally, algorithms developed for partially dynamic graphs can be used as building blocks when constructing static graph algorithms, most prominently, a recent decremental SSSP algorithm has served as a key component in the first near-linear time algorithm for approximate, undirected min-cost flow \cite{bernstein2021deterministic}. 

In this article, we tackle the approximate \SSSPAbbre problem for sparse, weighted, directed and incremental graphs. Given a dedicated source $s$, we aim to maintain distance estimates $\hat{d}(s,v)$ for all $v \in V$ throughout a sequence of edge insertions, such that the approximation guarantee $d(s,v) \leq \hat{d}(s,v) < (1 + \epsilon)d(s,v)$ always holds. In the randomized setting, the edge insertions are assumed to be generated by an adaptive adversary, meaning that the adversary may adapt the sequence of insertions based on information about the state of our algorithm revealed by its answers to queries. This enables the usage of the developed algorithms as a black box subroutine. 

\subsection{Prior Work}
\label{subsec:priorWork}
We discuss prior work that is directly related to the incremental SSSP problem. For a more detailed overview of related work, we refer the reader to \Cref{subsec:relWork}. Below, we let $m$ denote the total number of edges, $n$ the number of vertices and $W$ the ratio between the largest and smallest edge weight. For $(1+\epsilon)$-approximate algorithms, we assume that $\epsilon > 0$ is constant to ease presentation.

\paragraph{Lower Bounds.} Solving the fully dynamic \SSSPAbbre problem in time $O(n^{3 - \epsilon})$, for a graph incurring $O(n)$ updates, would imply a solution to the \APSPAbbrev problem in truly subcubic time $O(n^{3 - \epsilon})$ in a trivial way. It is widely believed that this is impossible \cite{Wil15, WW18}. 

If exact distances have to be maintained, the advantage of the partially dynamic setting is believed to be limited to reducing the runtime of a combinatorial algorithm to $\Omega(mn^{1 - o(1)})$, since numerous popular conjectures, such as the APSP hypothesis, the BMM conjecture and the OMv conjecture, forbid the existence of a faster such algorithm \cite{RZ04,AW14,HKT15b}. Conditional on the $k$-Cycle hypothesis, this holds for any algorithm, not just combinatorial ones \cite{PGVWW20}. All known lower bound techniques carry over from the incremental to the decremental setting by playing the insertion sequence backwards. These lower bounds are matched by the ES-tree, which we introduce in the next paragraph. 

\paragraph{The ES-tree and its Variants.} Research on the partially dynamic \SSSPAbbre problem was initiated by the ES-tree data structure developed by Even and Shiloach \cite{SE81}, which was subsequently extended to weighted, directed graphs by Henzinger and King \cite{HK95}. For maximum edge weight $W$, it can maintain exact distances in $O(mnW)$ total update time. Edge rounding techniques improve said update time to $\tilde{O}(mn\log(W)/\epsilon)$, trading the exactness for a $(1 + \epsilon)$-approximation guarantee. The ES-tree remains an integral building block for faster approximation schemes.

\paragraph{Decremental SSSP beyond the ES-tree.} 
The $O(mn)$ barrier was first surpassed in the decremental setting. Henzinger, Krinninger and Nanongkai \cite{HKN14, HKN15} provided a randomized, oblivious $\tilde{O}(mn^{0.9} \log W)$ algorithm, which was improved on by Bernstein, Probst Gutenberg and Wulff-Nilsen \cite{PGWN20, BPGWL20} culminating in a randomized $\tilde{O}(\min\{n^2 \log^4 W, mn^{2/3} \log^3 W \})$ algorithm against an oblivious adversary.

\paragraph{The Directed Hopset Barrier.} All these improvements over the classic ES-tree for sparse directed graphs stem from using hop sets of hop $h$, which are a small collection of edges (typically we aim for set sizes $\tilde{O}(m)$ and are interested in the case where $m \sim n$) that when added to the graph ensure that there always is an approximate shortest path in the new graph that consists of at most $h$ edges. Lower bounds for $h$ currently achieve $\Omega(n^{1/11})$ \cite{Hesse03,HP18}. While hop sets with hop $O(\sqrt{n})$ and $\tilde{O}(n)$ edges are almost trivial to construct, hop sets with hop $n^{0.5 - \epsilon}$ for any constant $\epsilon > 0$ are not known to exist. In a recent breakthrough result, giving the first near-linear time algorithm to construct hop sets with hop $n^{0.5+o(1)}$, the authors explicitly state that their techniques are unlikely to achieve an improvement in the hop \cite{JLS19}. We refer to this lack of progress on directed hopsets as the \emph{directed hopset barrier}, and point out that all current approaches relying on hopsets require at least $\tilde{\Omega}(n^{3/2})$ total update time if the directed hopset barrier cannot be broken. This follows since these approaches rely on running an ES-tree on the union of the graph and the hopset where the running time of the ES-tree becomes $\tilde{\Theta}(nh)$ for very sparse graphs. We refer the reader to \Cref{subsec:hop_set_barrier} for an introductory explanation of the usage of an ES-tree with a hop set.

\paragraph{Incremental SSSP beyond the ES-tree.} While the algorithm of Henzinger, Krinninger and Nanongkai \cite{HKN14, HKN15} seems to translate over to incremental graphs, Probst Gutenberg, Vassilevska Williams and Wein \cite{PGVWW20} were the first to explicitly study the incremental setting, introducing a deterministic algorithm with total update time $\tilde{O}(n^2 \log W)$. This result only left room for polynomial improvements on sparse graphs, for which Chechik and Zhang \cite{CZ21} recently proposed a deterministic algorithm with total update time $\tilde{O}(m^{5/3} \log W)$ and a randomized version with total update time $\tilde{O}((mn^{1/2} + m^{7/5}) \log W)$ which works against an adaptive adversary.  

\paragraph{Partially-Dynamic Single-Source Reachability (SSR) and Strongly-Connected Components (SCC). } 

Bernstein, Probst, and Wulff-Nilsen \cite{BPW19} introduced the first near linear algorithms for decremental SSR and SCC. While incremental SSR can also be solved in near linear time with a link-cut tree, incremental SCC poses more of a challenge. The deterministic algorithms of Haeupler et al. \cite{HKM12} and Bender et al. \cite{BFGT16} achieve total update time $O(m^{3/2})$ for incremental SCC, and Bender et al. \cite{BFGT16} present another deterministic algorithm for dense graphs with total update time $\tilde{O}(n^2)$. Very recently, Bernstein, Dudeja and Pettie \cite{bernsteinNewSCC} improved the bound for sparse graphs to $\tilde{O}(m^{4/3})$ using randomization. 

\newpage

\subsection{Our Contribution}

We first give a simple deterministic algorithm.

\begin{restatable}{theoremmain}{thmdet}
We construct a deterministic algorithm for the $(1 + \epsilon)$-approximate \SSSPAbbre problem in incremental, directed and weighted graphs with total update time $\tilde{O}(m^{3/2} \log(W)/\epsilon)$. Distance queries are answered in time $O(1)$ and an approximate shortest path $\pi_{s,x}$ is reported in time $O(|\pi_{s,x}|)$.
\label{thm:det}
\end{restatable}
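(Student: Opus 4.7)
The plan is to combine a hop-limited structure for short paths with a periodic global recomputation for long paths, gluing them together via the new propagation synchronization scheme so that the combined distance estimates maintain a $(1+\epsilon)$ approximation at all times. Fix the hop parameter $h = \sqrt{m}$.

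First, I would maintain an $h$-hop approximate ES-tree rooted at $s$, which tracks $(1+\epsilon)$-approximate distances for all $s$-$v$ shortest paths of at most $h$ hops. To tolerate the compounding of approximation error along an $h$-hop path, edges are rounded to integer multiples of $(1+\epsilon/h)$-scaled powers of two, introducing only an $\tilde{O}(1)$ overhead. A standard charging argument bounds the total maintenance cost of this ES-tree at $\tilde{O}(m h \log(W)/\epsilon) = \tilde{O}(m^{3/2}\log(W)/\epsilon)$, since each edge is relaxed at most $O(h \log(W)/\epsilon)$ times throughout the insertion sequence. Second, I partition the insertion stream into $\lceil m/h \rceil = O(\sqrt{m})$ batches of $h$ insertions each, and at the end of each batch run a static Dijkstra on the current graph, costing $\tilde{O}(m)$ per batch and $\tilde{O}(m^{3/2})$ overall. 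The answer to a query $\hat{d}(v)$ at any time is then the minimum of the ES-tree estimate and the most recent Dijkstra estimate.

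The main obstacle is correctness within a batch for vertices whose shortest path has more than $h$ hops and uses edges inserted within the current batch, since the last full Dijkstra may be stale for such vertices while the ES-tree does not see them. This is precisely where I would deploy propagation synchronization. Concretely, I maintain, in addition to the frozen Dijkstra estimates $\hat{d}^\star(v)$ from the start of the batch, a buffered set of improvements produced by newly inserted edges and propagate them in a single synchronous sweep across at most $O(\log(W)/\epsilon)$ distance buckets per batch; within each bucket, every vertex is relaxed at most once against $\hat{d}^\star$ rather than against other buffered improvements. This guarantees that any shortest path of the form (long prefix captured by $\hat{d}^\star$) + (short suffix of newly inserted edges) incurs only one $(1+\epsilon)$ rounding factor instead of one per inserted edge on the path.

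Putting it together, I would prove by induction over batches that after each insertion, $\hat{d}(v) \le (1+\epsilon) d(s,v)$ for every $v$, using the synchronization argument within the batch and the exactness of the Dijkstra at batch boundaries. The dominant costs are the hop-bounded ES-tree ($\tilde{O}(m^{3/2}\log(W)/\epsilon)$), the $O(\sqrt{m})$ Dijkstras ($\tilde{O}(m^{3/2})$), and the per-batch synchronized propagation ($\tilde{O}(m)$ per batch, totaling $\tilde{O}(m^{3/2})$), yielding the claimed total update time. Query and path reporting in $O(1)$ and $O(|\pi_{s,x}|)$ follow from storing the minimum estimate and a parent pointer per vertex throughout. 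The hardest step in this plan is rigorously establishing that synchronized propagation preserves the $(1+\epsilon)$ invariant throughout a batch without requiring re-propagation cost proportional to the batch size per vertex; I expect the key lemma to express this as a monovariant on a suitably defined potential over bucket transitions.
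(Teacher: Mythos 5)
Your high-level architecture (a bounded-hop ES-tree for short paths, periodic Dijkstra rebuilds after $\Theta(\sqrt{m})$ insertions, and some intra-batch correction glued on with ``propagation synchronization'') superficially resembles the paper's, but the central mechanism you sketch for the intra-batch correction is not the one the paper uses, and I do not believe it works as described.

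The gap is in how you handle paths that interleave pre-batch segments with newly inserted edges \emph{many times} within a single batch. You describe the troublesome path as ``(long prefix captured by $\hat{d}^\star$) + (short suffix of newly inserted edges),'' and propose a ``single synchronous sweep'' per batch in which each vertex is relaxed once against the frozen estimates $\hat{d}^\star$. That picture is too simple: the adversary can insert $B$ edges that stitch together $B$ stale segments so that the resulting $s$--$x$ path alternates repeatedly between old segments and new edges, and the error can grow quadratically in $B$ (this is exactly the counterexample depicted in Figure~\ref{fig:counterexample}, and why the $\tilde{O}(m^{5/3})$ bound of Chechik--Zhang is tight for their scheme). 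A one-shot sweep that relaxes only against $\hat{d}^\star$ cannot chase these chained improvements across the batch, and a naive re-propagation after each insertion is too slow. The paper's actual ``propagation synchronization'' is a dyadic \emph{time}-bucketing rule: on the $b$-th insertion of a phase, with $j$ the largest power of two dividing $b$, it calls \textsc{PartialDijkstra} seeded by \emph{all} vertices whose estimate crossed an $\epsilon\delta$ threshold since insertion $b - 2^j$ (Algorithm~\ref{alg:det_insert}). The point is that each threshold-crossing re-enters $V_{\text{input}}$ only $O(\log B)$ times (\Cref{lem:part_dijksta_log}), yet any segment present since the phase start accumulates only $O(\epsilon\delta\log B)$ slack (\Cref{lem:qual_dec}, \Cref{lem:presentpath}), which then yields \Cref{lem:fullpath}. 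Your proposal contains no analogue of this rule or of the slack lemma.

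A second, related issue: the paper's main data structures for the long-range case are not hop-limited ES-trees at all but \emph{lazy} ES-trees, one per distance range $[\tau,2\tau)$, with propagation threshold $\delta=\tau/\sqrt{m}$; the hop-limited ES-tree (\Cref{thm:es}) is used only for paths of length $<2\sqrt{m}$. The per-range tolerance $\delta$ is what makes the charging in \Cref{lem:rt_det} give $O(m\tau/(\epsilon\delta))=O(m^{3/2}/\epsilon)$ per range. Your proposal keeps a single hop-limited tree and a single periodic Dijkstra, so it has neither the per-range tolerance parameter nor the associated potential argument, and the intended ``monovariant on bucket transitions'' you allude to is precisely the missing content. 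As written, you have the scaffolding of the algorithm but not the idea that makes it correct with the claimed runtime.
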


This algorithm matches the directed hop set barrier $\tilde{\Omega}(n^{3/2})$ for very sparse graphs, and should be compared to the previously best deterministic algorithm that achieves $\tilde{O}(m^{5/3} \log(W)/\epsilon)$ update time \cite{CZ21}. 

Furthermore, we present a significantly faster randomized algorithm. 

\begin{restatable}{theoremmain}{thmrand}
We construct a randomized algorithm for the $(1 + \epsilon)$-approximate \SSSPAbbre problem in incremental, directed and weighted graphs with total expected update time $\tilde{O}(m^{4/3} \log(W)/ \epsilon^2)$ that answers all distance queries correctly with high probability $1 - O(1/n)$ against an adaptive adversary. Distance queries are answered in time $O(1)$ and an approximate shortest path $\pi_{s,x}$ is reported in time $O(|\pi_{s,x}|)$.
\label{thm:rand}
\end{restatable}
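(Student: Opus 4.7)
The plan is to combine the propagation synchronization technique underlying Theorem \ref{thm:det} with a random sampling layer in the spirit of \cite{CZ21}. Recall that the deterministic $\tilde{O}(m^{3/2})$ bound arises from balancing the cost of the propagation work (which scales linearly with the hop parameter $h$) against a secondary structure whose cost scales inversely with $h$, with the balance point at $h = \Theta(m^{1/2})$. Introducing randomness allows us to replace the expensive secondary structure with a small hitting set of vertices, shrinking the effective hop parameter that the propagation machinery must process in any single batch and so shifting the balance to $h = \tilde{\Theta}(m^{1/3})$.

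Concretely, I would fix the threshold $h = \tilde{\Theta}(m^{1/3})$ and sample a set $S \subseteq V$ of $\tilde{\Theta}(n/h)$ vertices uniformly at random, so that with high probability every path of hop-length at least $h$ is hit by $S$. The distance estimates $\hat{d}(s,v)$ are then maintained in two coordinated layers. A \emph{short-range} layer invokes propagation synchronization as in Theorem \ref{thm:det}, but restricted to path segments of hop-length at most $h$; over the whole insertion sequence this costs $\tilde{O}(m \cdot h) = \tilde{O}(m^{4/3})$. A \emph{long-range} layer maintains approximate distances from $s$ to every landmark in $S$ by relaying through adjacent landmarks along the sampled paths, with each relayed update charged to one of the $\tilde{O}(n/h)$ landmarks using the hitting property to bound the amortized work at $\tilde{O}(m^{4/3})$ as well. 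Composing the $(1+O(\epsilon))$ approximations of the two layers multiplicatively introduces an extra $1/\epsilon$ factor, yielding the claimed $\tilde{O}(m^{4/3}\log(W)/\epsilon^2)$ total update time. Query answers are read off in $O(1)$ via the short-range estimates, and path reporting in $O(|\pi_{s,x}|)$ is obtained by following the usual parent pointers through both layers.

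The main obstacle is the adaptive adversary. The hitting-set property is a statement about the graph sequence produced by the adversary, and if the adversary can infer $S$ from the sequence of query answers it could insert edges whose approximate shortest paths avoid $S$, simultaneously breaking the hop guarantee and the cost accounting. Following the paradigm of \cite{CZ21}, the remedy is to make the algorithm's output distribution conditionally independent of $S$ given the adversary's view: the long-range layer is engineered so that it depends on $S$ only through inputs (for instance, exact distances into and out of $S$ computed from the current graph) that the adversary could in principle reconstruct without knowledge of $S$. Formalizing this conditional independence carefully, so that a union bound over the polynomially many queries yields correctness with probability $1 - O(1/n)$, is the principal technical hurdle; once it is in place, the two-layer accounting above directly yields the claimed expected running time.
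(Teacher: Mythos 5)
Your proposal takes a genuinely different route than the paper, and I believe it has a fundamental gap: the long-range layer you describe is precisely the directed-hopset/landmark technique that the paper is explicitly designed to bypass. Sampling a hitting set $S$ of size $\tilde{\Theta}(n/h)$ and maintaining approximate distances from $s$ to every landmark of $S$ ``by relaying through adjacent landmarks'' is an incremental SSSP problem on the landmark shortcut graph, which can have $\tilde{\Theta}((n/h)^2)$ edges and hop diameter as large as $\tilde{\Theta}(n/h)$ in a directed graph. For very sparse inputs ($m \sim n$) with $h = m^{1/3}$ this gives $\tilde{\Theta}(n^{2/3})$ landmarks, up to $\tilde{\Theta}(n^{4/3})$ shortcut edges, and hop diameter $\tilde{\Theta}(n^{2/3})$, so any ES-tree-style maintenance of that layer costs $\tilde{\Omega}(n^2)$, not $\tilde{O}(m^{4/3})$. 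The one-sentence amortization claim (``each relayed update charged to one of the landmarks'') is not substantiated and does not account for the superlinear number of landmark--landmark interactions. In short, you have reintroduced the directed hopset barrier that the theorem claims to beat, and no balancing of $h$ escapes it.

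The paper's actual argument avoids landmarks and hopsets entirely. It keeps two copies of the deterministic synchronized-propagation structure, a public $\DSReal$ from which queries are answered and a hidden $\DSBack$. Every $B=m^{1/3}$ insertions (or whenever the potential $\Phi_{\DSBack}=\sum_v \hat d_{\DSBack}(v)$ drops by $\epsilon m^{1/3}\tau/4$) a \emph{global fixing phase} runs: $\DSReal$ and $\DSBack$ are pointwise-minimized, then $O(\log n/\epsilon)$ random distance-estimate intervals of width $O(\delta)$ are sampled and \textsc{PartialDijkstra} is called on those vertex slabs \emph{in $\DSBack$ only}. The analysis (\Cref{lma:keyLemmaPotentialReduction}) shows that if some $s$--$x$ path later accrues $100\epsilon\tau\lg n$ error in $\DSReal$, then at the sampling moment $\DSBack$ must contain many pairwise $\epsilon\delta$-separated \emph{tense} vertices, one of which is hit with high probability; the fix then propagates across the newly inserted edges (``fix now, propagate later''), forcing a large potential drop in $\DSBack$ and hence a fresh rebuild before the error ever manifests in $\DSReal$. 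Adaptivity is handled by the concrete mechanism of never exposing $\DSBack$'s state, not by an abstract conditional-independence argument. Your proposal captures neither the tense-vertex/potential machinery nor the two-data-structure hiding trick, and the layer it substitutes does not meet the claimed running time.
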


For $m = O(n^{9/8 - \epsilon})$ this algorithm surpasses the directed hop set barrier. It is the first algorithm in the partially dynamic setting to do so. It dominates the previous state-of-the-art randomized result of Chechik and Zhang \cite{CZ21} in the entire sparsity range $m = O(n^{3/2})$. For denser graphs the $\tilde{O}(n^2 \log W)$ algorithm of Probst Gutenberg, Vassilevska Williams and Wein \cite{PGVWW20} remains dominant.

\paragraph{Improving over the Hopset Barrier.} 

Whereas the randomized algorithm of Chechik and Zhang \cite{CZ21} seemed to approach the best total update time $\tilde{\Omega}(n^{3/2})$ perceivable under the directed hopset barrier, our randomized algorithm breaks this barrier for sparse graphs. Since all current lower bound techniques from the decremental setting carry over to the incremental one, this could either be interpreted as a sign that it is possible to 1) improve hopsets and break the hop set barrier, 2) design techniques in the decremental setting that avoid an approach relying on hopsets altogether, and/or 3) that current conditional lower bound tools lack some key distinguishing property between the incremental and the decremental world.

\paragraph{Relation to Incremental SCC.} 

In conjunction with the $\tilde{O}(n^2)$ algorithm by Probst Gutenberg, Vassilevska Williams and Wein \cite{PGVWW20}, our results for incremental single source shortest paths exactly match the bounds for incremental SCC: $\tilde{O}(\min(m^{3/2}, n^2))$ in the deterministic case, and $\tilde{O}(\min(m^{4/3}, n^2))$ when using randomization against an adaptive adversary. Although this might be a coincidence, we wonder whether there is a deeper connection between these two problems. 

\section{Preliminaries}

We consider an incremental weighted and directed graph $G = (V,E)$, undergoing edge insertions, but never deletions. Each edge $e = (u,v)$ can have any integer weight $\omega(u,v)$ between $1$ and some maximum weight $W$. We use $n = |V|$ as shorthand for the number of vertices and $m$ for the total number of edges added to the graph. We assume throughout that the graph remains simple. This is w.l.o.g. since our run-times do not scale in the number of vertices so a multi-edge can be inserted by splitting it via a vertex. 

Distances $d(u, v)$ refer to the minimal cumulative weight $\sum_{i = 1}^{l - 1} \omega(v_i, v_{i+1})$ achieved by any path $u= v_1, ..., v_l = v$ connecting $u$ and $v$ and $\infty$ if there is no such path. For a path segment $\sigma = v_1, ..., v_l$, we let $d_\sigma(v_i, v_j)$ denote $\sum_{k = i}^{j - 1} \omega(v_k, v_{k+1})$ for $i < j$. We use the shorthand $\hat{d}(v)$ for distance estimates $\hat{d}(s,v)$.

Further, we define the suffix of a path segment $\sigma = v_1, ..., v_l$. 
\begin{definition}
The $\suffix(\sigma, v_i)$ of a path segment $\sigma = v_1, ..., v_l$ is the segment $v_i, ..., v_l$.
\label{def:suffix}
\end{definition}

Short paths are easy to maintain using standard ES-trees. We use the following fact that is standard in the literature. 

\begin{theorem}[ES-tree, see \cite{bernstein2011improved}]
Let the number of hops $h$ denote the number of edges on a path. There exists a deterministic algorithm for maintaining single source shortest paths in incremental weighted directed graphs to hop $h$ in time $O(m h \log(nW) / \epsilon)$, maintaining $\hat{d}(v)$ such that $d(s,v) \leq \hat{d}(v) < (1+\epsilon) d(s,v)$ for each $v$ if there exists a shortest path connecting $s$ and $v$ with at most $h$ hops, and $d(s,v) \leq \hat{d}(v)$ otherwise.
\label{thm:es}
\end{theorem}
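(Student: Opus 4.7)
The plan is to follow the weighted incremental variant of the Even--Shiloach tree: round edge weights onto a geometric grid so that each vertex's distance estimate can take only $\tilde{O}(h/\epsilon)$ distinct values over the whole insertion sequence, maintain the estimates by Dijkstra-style relaxations triggered at each edge insertion, and charge the total work to the discrete decreases of these estimates.

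First, I would bucket the edges by weight class: an edge $e$ is in class $i$ if $\omega(e) \in [2^i, 2^{i+1})$. For each $e$ in class $i$, round $\omega(e)$ up to the nearest multiple of $\beta_i := \epsilon \cdot 2^i / (c h)$ for a small constant $c$. Every edge weight is thus increased by at most an $\epsilon/h$ multiplicative factor, so any path of at most $h$ hops is stretched by a factor of at most $(1 + \epsilon/h)^h \le 1 + O(\epsilon)$; it therefore suffices to $(1+O(\epsilon))$-approximate distances in the rounded graph. The rounded distance from $s$ to any vertex along a $\le h$-hop path whose heaviest edge lies in class $i$ is a multiple of $\beta_i$ in the range $[2^i, h \cdot 2^{i+1})$, giving $O(h/\epsilon)$ attainable values per class, and thus $O(h \log(nW)/\epsilon)$ across all $O(\log(nW))$ classes.

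The algorithm maintains $\hat{d}(v)$, initialized to $+\infty$ except $\hat{d}(s) = 0$. On the insertion of edge $(u,v)$ we seed a (bucketed) priority queue with the candidate $\hat{d}(u) + \omega(u,v)$ for $v$, then execute a Dijkstra-style sweep: repeatedly extract the vertex $x$ with the smallest candidate value $\delta_x$; if $\delta_x < \hat{d}(x)$ and $\delta_x \le (1 + \epsilon)hW$, set $\hat{d}(x) \gets \delta_x$ and enqueue $\delta_x + \omega(x,y)$ for each outgoing edge $(x,y)$; otherwise discard. Candidates above the cap are dropped so paths longer than $h$ hops never trigger further propagation. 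Correctness follows by induction on the number of insertions: the sweep is exactly Dijkstra's algorithm on the rounded graph restricted to the hop-bounded ball, so $\hat{d}(v)$ equals the rounded distance from $s$ to $v$ whenever a $\le h$-hop path exists, which upper bounds $d(s,v)$ and is at most $(1+\epsilon)d(s,v)$ by the rounding analysis.

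For the running time, observe that $\hat{d}(v)$ is monotonically nonincreasing and only takes values on the grid above, so it decreases at most $O(h \log(nW)/\epsilon)$ times across the entire sequence. Each such decrease triggers at most $\deg^+(v)$ edge relaxations and priority-queue operations, yielding total work
\[
\sum_v \deg^+(v) \cdot O(h \log(nW)/\epsilon) \;=\; O\!\left(mh \log(nW)/\epsilon\right),
\]
with the priority-queue logarithm either absorbed into the $\tilde{O}$ constant or eliminated by a bucketed queue. The main subtlety is ensuring the hop-cap $(1+\epsilon)hW$ does not prematurely truncate sweeps that would have legitimately improved an $\le h$-hop estimate: since any such path has rounded length in $[d(s,v), (1+\epsilon)d(s,v)] \subseteq [0, (1+\epsilon)hW]$, it fits under the threshold, so every relevant relaxation is processed. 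Reporting a path $\pi_{s,x}$ in time $O(|\pi_{s,x}|)$ is handled by storing the predecessor pointer $p(v)$ alongside $\hat{d}(v)$ whenever it is updated, so the path can be traced back in the claimed time.
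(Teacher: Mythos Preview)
Your rounding scheme has a genuine gap. You round each edge $e$ in weight class $i$ to a multiple of $\beta_i = \epsilon \cdot 2^i/(ch)$, and then assert that any $\le h$-hop path whose heaviest edge lies in class $i$ has rounded length equal to a multiple of $\beta_i$ in the range $[2^i, h\cdot 2^{i+1})$. This is false: a path that mixes edges from classes $j < i$ and $i$ produces a sum that is only guaranteed to be a multiple of $\beta_j$, the \emph{finest} granularity appearing on the path, not $\beta_i$. Consequently the number of attainable values of $\hat{d}(v)$ in the interval $[2^i, h\cdot 2^{i+1})$ is not $O(h/\epsilon)$ but can be as large as $\Theta(h \cdot 2^{i}/\beta_j) = \Theta(h^2 \cdot 2^{i-j}/\epsilon)$, and your potential argument charging each edge relaxation to a discrete decrease of $\hat{d}$ no longer yields the stated bound. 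The error is that a single grid cannot simultaneously be coarse enough to bound the number of levels and fine enough to preserve short edges.

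The paper treats this theorem as a citation but sketches the standard argument in \Cref{subsec:hop_set_barrier}, and the fix there is exactly what is missing from your attempt: instead of one data structure with per-edge rounding, one maintains $O(\log(nW))$ separate ES-trees, one for each distance scale $\tau = 2^i$. Inside the tree for scale $\tau$, \emph{all} edge weights are rounded uniformly via $\tilde{\omega}(e) = \lceil h\,\omega(e)/(\epsilon\tau)\rceil$, and the maximum maintained estimate is capped at $3\lceil h/\epsilon\rceil$. Every distance estimate in that tree is then an integer in $\{0,\dots,3\lceil h/\epsilon\rceil\}$, so each $\hat{d}(v)$ decreases at most $O(h/\epsilon)$ times per tree and $O(h\log(nW)/\epsilon)$ times across all trees, giving the desired $O(mh\log(nW)/\epsilon)$ total work. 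The crucial point is that the rounding granularity is tied to the target \emph{path length} $\tau$, not to individual edge weights; this is what makes the grid uniform and the level-count bounded.
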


Throughout the article, we often write $x^{t}$ to denote a variable $x$ at the point in time when the algorithm has finished processing the $t$-th insertion. We let $x^0$ denote the variable after preprocessing the initial input graph $G$, and $E^{end}$ is the final edge set after all insertions were processed. When we want to emphasize that a variable $x$ belongs to a data structure $\DSReal$, we indicate it with a subscript, writing $x_{\DSReal}$.

We use $\lg(n)$ as a short-hand for $\log_2(n)$.

\section{Algorithms Overview}

\paragraph{Lazy ES-tree.} The ES-tree data structure has been very influential in the design of dynamic shortest path data structures. It stores the distance from the dedicated source $s$ for each vertex, and whenever an edge $(u,v)$ is inserted it checks if vertex $v$ profits from using this edge, and if so it recursively explores all the outgoing edges of $v$. A standard technique to create faster algorithms is to relax this behaviour, and only recursively explore if the stored distance of $v$ has decreased by a significant amount over time, which we denote as $\epsilon \delta$. This reduces the \emph{propagation} of changes significantly and the resulting algorithm is called a \emph{lazy} ES-tree. Frequently, separate lazy ES-trees are maintained for different path length ranges $[\tau, 2\tau)$, enabling $\delta$ to be chosen depending on $\tau$. 

\paragraph{Partial Dijkstra.} In our algorithm, we additionally use a procedure $\textsc{PartialDijkstra}(V_{input}, \epsilon)$ that takes a set of vertices $V_{input}$ and then runs Dijkstra on the induced graph $G[V_{fixed}]$, with $V_{fixed}$ being $V_{input}$ initially, where we then add additional vertices $v$ to $V_{fixed}$ on-the-go if relaxing an edge with tail in $V_{fixed}$ would decrease the distance estimate of $v$ by at least $\epsilon\delta$. By carefully implementing this procedure, we ensure that no path in the final graph $G[V_{fixed}]$ accumulates error.

We invoke this procedure on special sets of vertices $V_{input}$ regularly to mitigate large error on path segments due to the Lazy ES-tree implementation. For example a path $\pi_{s,x}$ might have an additive error of $\epsilon\delta/2$ on each edge due to the Lazy ES-tree update scheme. But inserting the first three vertices on $\pi_{s,x}$ into $V_{input}$ and invoking $\textsc{PartialDijkstra}(V_{input}, \epsilon)$ effectively removes the entire error on the path $\pi_{s,x}$. Our algorithm chooses sensible sets $V_{input}$.

\paragraph{The Deterministic Algorithm of Chechik and Zhang \cite{CZ21}.} Let us briefly review the deterministic algorithm in \cite{CZ21}: the algorithm checks for each inserted edge $e =(u,v)$ whether vertex $v$ \emph{profits} by $\epsilon\delta$, i.e. $\hat{d}(v) \geq \hat{d}(u) + \omega(u,v) + \epsilon \delta$. If not, then the algorithm moves to the next insertion. Otherwise, it sets $\hat{d}(v) =  \hat{d}(u) + \omega(u,v)$ and calls $\textsc{PartialDijkstra}(\{v\}, \epsilon)$. This triggers the Dijkstra-like propagation discussed above until nobody profits by $\epsilon \delta$ anymore. Since \textsc{PartialDijkstra} rules out error in-between explored vertices, it is natural to hope that after $B$ calls to $\textsc{PartialDijkstra}(\{v\}, \epsilon)$ the error on any $s - x$ path might only increase by $B\epsilon \delta$. Then we could batch $B = \sqrt{m}$ insertions to a phase where we update as described before for error tolerance $\delta \approx d(s,x)/\sqrt{m}$ within each phase. Upon ending a phase, we run Dijkstra to recompute all exact distance estimates from scratch, which is called a rebuild. Observe that there are a total of $m/B = \sqrt{m}$ phases. If one distinguishes between path length ranges, this would yield an $\tilde{O}(m^{3/2})$ algorithm. Chechik and Zhang \cite{CZ21} show that the error accumulation is no more than $B^2 \epsilon \delta$, and balancing the parameters yields their deterministic $\tilde{O}(m^{5/3})$ algorithm. 

Unfortunately, this analysis is tight, and the reason for this behaviour is simple:
although the way we set up the algorithm ensures that the error on \emph{each existing path} scales linearly, the adversary can insert new edges that \emph{connect} multiple paths which then ramps up more error. In fact, after $B$ insertions, there might be many paths that have error $\sim B$ each, and then another $B$ insertions can connect these paths to achieve combined error $\sim B^2$ on a single path. We sketch such an unfavourable sequence of edge insertions in Figure \ref{fig:counterexample}.

\begin{figure}[ht]
    \centering
    \includegraphics[width=13.5cm]{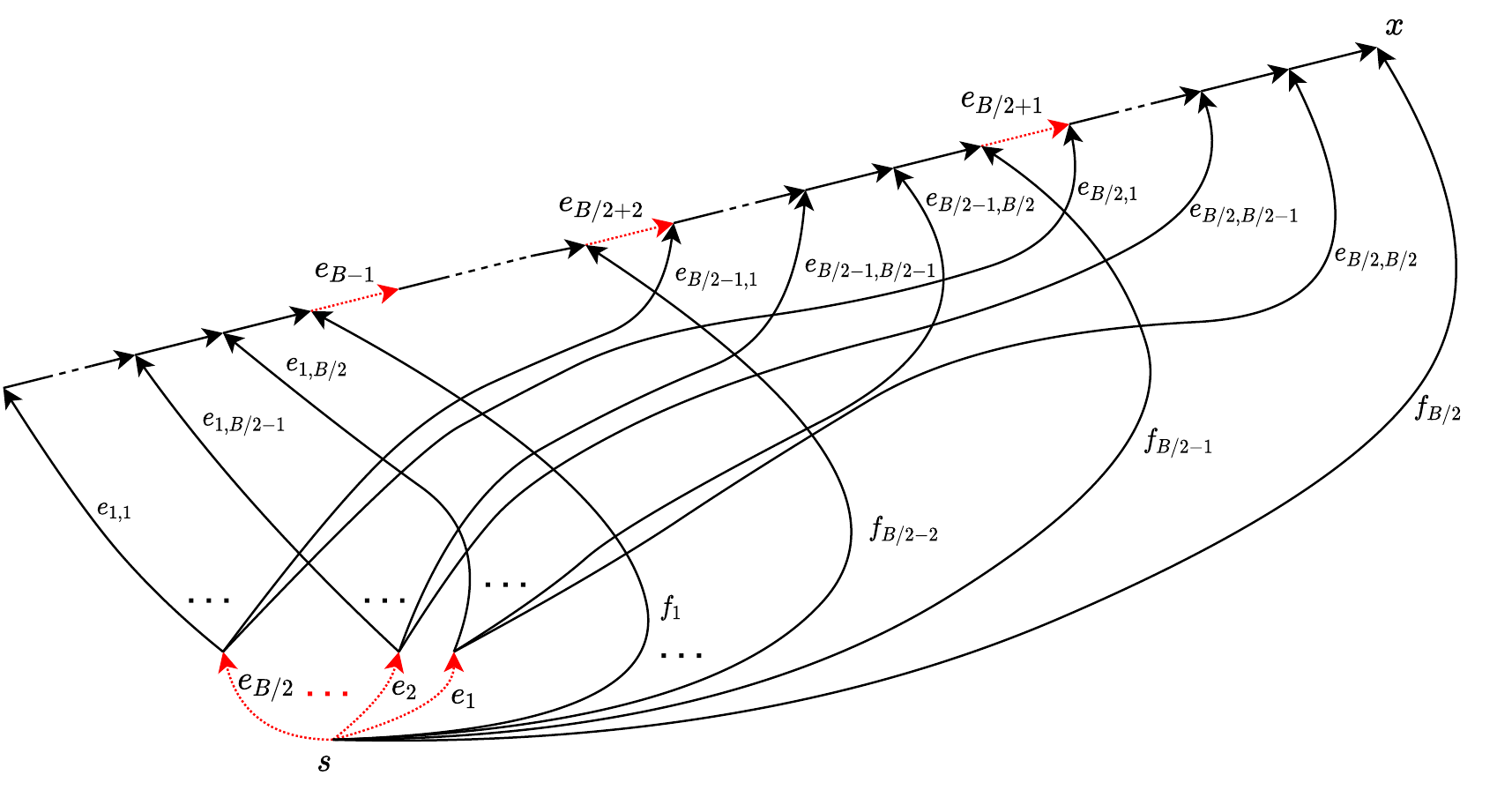}
    \caption{The dotted red edges $e_1, ..., e_{B - 1}$ are inserted consecutively during a single phase, the others were there before. The edges $f_i$ have weight $\omega(f_i) = i(1 + \frac{\epsilon \delta}{2})B/2 + i + 1$ and the edges $e_{i,j}$ have weight $\omega(e_{i,j}) = \omega(f_i) - (B/2 - j + 1)(1 + \frac{\epsilon \delta}{2}) - 1$. All the other edges, including the newly inserted red dotted ones, have weight one. All the vertices on the long path not incident to any edge $f_i$ start out having distance infinity. Then, the edge insertions cause the vertices on any segment to be lowered in distance estimate, one by one, in reverse order. Finally, the second half of the edge insertions connects the segments. Throughout all these insertions, $\hat{d}(x) = \omega(f_{B/2}) = B^2(1 + \frac{\epsilon \delta}{2})/4 + B/2 + 1 $ does not change, although there now is a path of length $B^2/4 + B/2 + 1$. This means that after roughly $m^{1/3}$ insertions, the additive error may reach $m^{2/3} \epsilon \delta$, making it necessary to exit the current phase and rebuild.}
    \label{fig:counterexample}
\end{figure}

\paragraph{Propagation synchronization.}
If we closely inspect the insertion pattern sketched in Figure \ref{fig:counterexample}, we note that the only way for error to build up in a path present since the last rebuild is if the distance estimate decreases happen in a very structured way: from the end to the start of the path segment, consecutively, and one by one. Had the insertions been done in any other order, the changes would have been propagated through. A very simple algorithm fueled by this insight would just add all vertices that were decreased in the current phase to $V_{input}$, which would rule out any build up of error on a segment present throughout the execution. But this operation is computationally expensive. However, by generalizing the pattern seen in the counterexample, we show that it suffices to add each such vertex to $\lg B$ sets $V_{input}$, drastically reducing the computational cost. 

Concretely, during the $i$-th insertion in a phase, we compute the maximum integer $j \in \mathbb{N}_0$ such that $i$ is divisible by $2^j$. We then add all the vertices that decreased significantly in distance estimate since time step $i - 2^j$ to $V_{input}$. On the one hand, this behaviour makes sure that a significant decrease causes a vertex to enter at most $\lg B$ sets $V_{input}$. On the other hand, after less than $B$ insertions happened in a phase, this scheme allows us to group the insertions into $\lg B$ batches, such that all the vertices touched by insertions of a given batch were in $V_{input}$ together at some point. At this moment, the propagation of these vertices are synchronized.

To illustrate this effect, we take a closer look at a segment $v_1, ..., v_{B/2 - 1}$ present throughout the phase from the worst case example sketched in Figure \ref{fig:counterexample}, taking $B$ a power of $2$ (see Figure \ref{fig:example_periodic}). The vertices $v_i$ get decreased one after another, from the end of the path segment to the front. But now, using our scheme, at time $B/4$, all the vertices decreased so far enter $V_{input}$, limiting the accumulation of error on their part of the segment to $\epsilon\delta$, because we propagate through vertices in $V_{input}$ no matter what. Iterating this argument limits the build up of error to $O(\epsilon \delta \log B)$. 

\begin{figure}[ht]
    \centering
    \includegraphics[width=15cm]{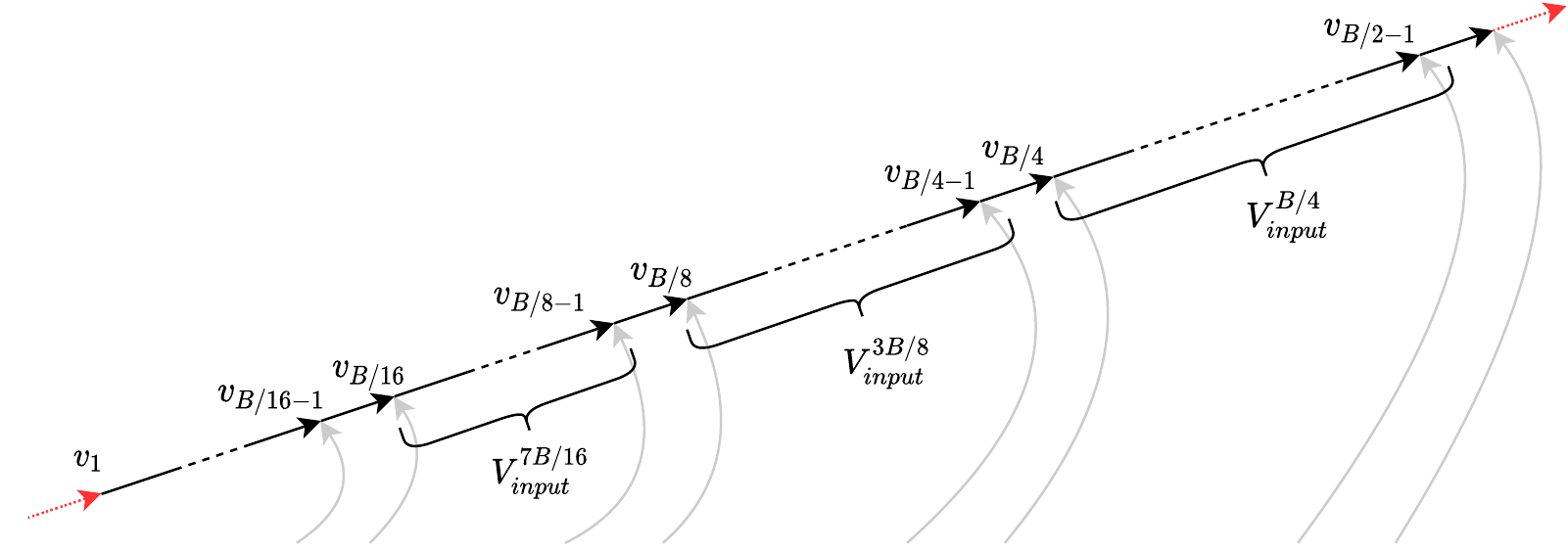}
    \caption{Zoom in on a segment present throughout a phase from the worst case example depicted in Figure \ref{fig:counterexample}. $V^{i}_{input}$ denotes the set $V_{input}$ passed to \textsc{PartialDijkstra} in step $i$ of the phase. Each batch of vertices within a curly bracket contributes at most $\epsilon\delta$ additive error, since all these vertices are in $V_{input}$ at some common moment. Therefore, this segment accumulates at most $O(\epsilon \delta \lg B)$ total additive error.}
    \label{fig:example_periodic}
\end{figure}

Generalizing this behaviour to arbitrary insertion patterns allows us to show that merely $2 \epsilon \delta$ error gets accumulated per batch, tallying up to $2 \epsilon \delta \lg B$. Thus, error is mainly contributed by connecting many different path segments where each segment makes only a small overall contribution. This technique essentially recovers the original intuition that error accumulation should be (near-)linear in the number of insertions. Moreover, the error on any high error path can be directly attributed to edges inserted to the path.

\begin{restatable*}{lemma}{lemfullpath}
Let $\pi_{s,x}$ be a shortest path from the source $s$ to $x$, after $b < B$ insertions happened during a phase starting at time $t$. Then $\hat{d}^{t+b}(x) \leq d^{t+b}(s,x) + 2 B\epsilon \delta \lg B + B \epsilon \delta$.
\label{lem:fullpath}
\end{restatable*}

We point out, however, that many details in proving \Cref{lem:fullpath} were swept under the rug in this exposition and it turns out that a very careful argument is required to prove the statement.

This result is the foundation of our deterministic algorithm, since it allows us to wait for considerably longer before we have to rebuild. In particular, after $B$ edges have been inserted, we incur an additive error of at most $3\epsilon \delta B\lg B$, whereas the additive error scaled with $B^2  \epsilon \delta$ before. As mentioned when introducing the ES-tree, we will build separate algorithms for distinct distance ranges. For a particular distance range $[\tau, 2\tau)$ the runtime is given by $\tilde{O}(m\tau/\epsilon\delta + m^2/B)$, where the first term is contributed by the operations performed throughout phases, and the second by the rebuilds. Setting $B \approx \sqrt{m}$ and $\delta \approx \tau / \sqrt{m}$ makes sure the error is no more than $\epsilon \delta B\ln B \approx \epsilon \tau$ whilst balancing the terms, and yields our deterministic algorithm.

\thmdet*

\paragraph{Randomization to Accelerate Rebuilding.} Fully rebuilding after $\sqrt{m}$ insertions is expensive, but seems necessary in the deterministic case. That is because, even when the adversary never inserts an edge with additive error more than $\epsilon \delta$, and thus \textsc{PartialDijkstra} is never called, there might be a high error path after $\sqrt{m}$ insertions. It is difficult to certify if these $\sqrt{m}$ insertions are part of the same shortest path.

To further speed-up our algorithm, we resort to randomization where we shorten phases to $B=m^{1/3}$ insertions, increase sensitivity to $\delta \approx \epsilon d(s,x)/m^{1/3}$ and aim to spend only $\tilde{O}(m^{2/3})$ time per rebuild. Note that this almost immediately implies a $\tilde{O}(m^{4/3})$ algorithm. 

\paragraph{A First Attempt at Randomization.} Again, we use a simple example to illustrate the main idea of the randomized rebuild. We start by considering the problem of certifying large error on some path. Consider therefore that we ran the first phase, and at the end of the phase, we have to identify for a vertex $x$ whether its shortest path has a total of at least $\epsilon d(s,x)$ additive error on the path. If so, the described technique is guaranteed to reduce $\hat{d}(x)$ by at least a bit.

We partition the interval $[0, \hat{d}(x)]$ into $O(m^{1/3})$ equal-sized sub-intervals $[8i\delta, 8(i+1)\delta)$ for all $i$, and then sample a single such interval $[8i\delta, 8(i+1)\delta)$ uniformly at random, and insert all vertices with distance estimate $\hat{d}(v) \in [8i\delta, 8(i+1)\delta)$ into a set $V^\star$ and invoke $\textsc{PartialDijkstra}(V^{\star}, \epsilon)$. Now assume the shortest path $\pi_{s,x}$ from $s$ to $x$ carries $\epsilon d(s,x)$ additive error. Since the error on any single edge is bounded by $\epsilon \delta$, the error must be well distributed among the distance ranges. This allows us to show that with constant probability, the error on the path segment on $\pi_{s,x}$ that crosses the sampled set $V^{\star}$ carries $\geq \epsilon\delta$ additive error. Therefore, this procedure causes a propagation that decreases the distance estimates up to  $\hat{d}(x)$ with good probability. Repeating this process $\Theta(\log n)$ times allows us to conclude that $\hat{d}(x)$ got decreased with high probability. 

\paragraph{A Path with Gaps.}

Now, consider a path $\pi_{s,x}$ from $s$ to $x$ that eventually becomes the shortest path but still has $m^{1/3}$ edges missing, waiting to be inserted in the upcoming phase to build a high error path. This partitions the path into $m^{1/3} + 1$ segments, and let us assume that each segment carries additive error $2\epsilon\delta$ so combined they carry $\sim 2\epsilon d(s,x)$ error.

The above sampling procedure still might sample a segment and would remove the error there but after $\Theta(\log n)$ samples we cannot correctly conclude that there is no path with huge error build-up. An adaptive adversary could now just observe which segments were sampled and increase the weight of the edge that connects the fixed segment to the upcoming one, only losing $2\epsilon \delta$ additive error on the whole path per fixed segment. To deal with this issue, Chechik and Zhang \cite{CZ21} fix \emph{many} segments, paying with a decrease of the potential $\Phi = \sum_{v \in V}  \hat{d}(v)$. This strategy is enabled by inserting fewer edges per phase, causing more error per segment and thus a more significant potential drop\footnote{They also shorten phases in case the potential gets decreased a lot. We omit this technical detail and we believe it could be removed.}. However, this remedy increases the runtime drastically.

\paragraph{Fix Now, Propagate Later.} Our techniques offer a solution to this issue. We run our algorithm against an oblivious adversary, and show how to recover adaptiveness later. Reconsidering the troublesome example where $m^{1/3} + 1$ segments contribute roughly $2\epsilon \delta$ error each, we discover that fixing a single segment might just be enough in this setting. Since the insertion sequence of an oblivious adversary is fixed, they have no way of adapting to which of the segments we fixed. However, if they do not adapt their sequence, the decrease of the fixed segment slowly propagates to the end of the path throughout the $m^{1/3}$ insertions. That is, since the edge connecting the fixed segment to the next carries more than $\epsilon \delta$ error when inserted, causing a propagation that fixes the next segment, and so on. See Figure \ref{fig:rand_fix} for an illustration of this crucial concept. 

Unfortunately, proving this propagation is quite involved, and requires new technical insights into the structure of the problem. In particular, we need a more careful argument about our synchronized propagation technique to bound the additive error building up on segments throughout the insertion of the new edges. Here, we omit these details and only focus on the central approach.

\begin{figure}[htp]
    \centering
    \includegraphics[width=15cm]{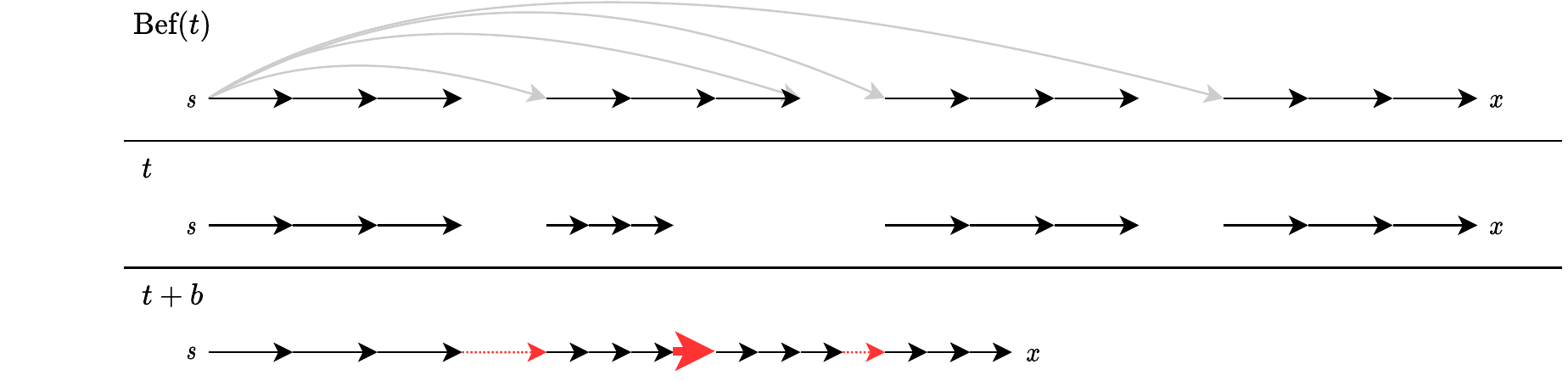}
    \caption{We distinguish between three important times: $t + b$ denotes the time for which we want to certify low error, $\tHalfInText$ denotes the time just before the last randomized rebuild before $t+b$, and $t$ the time right after said randomized rebuild. At time $\tHalfInText$, four segments are lined up to be connected to a high error $s-x$ path. During the randomized rebuild, the second segment is randomly chosen and fixed subsequently. But then, when the bold red edge is added between time $t$ and $t +  b$, it carries a lot of additive error because of our fix, causing propagations to ultimately remove the error from the remainder of the path. Notice that an adaptive adversary could just increase the weight of the bold red edge, foiling our fixing attempt.}
    \label{fig:rand_fix}
\end{figure}

\paragraph{Hiding the State from an Adaptive Adversary.}

To deal with an adaptive adversary, we hide which segment we fixed during the last rebuild. A simple technique that allows us to do so is to maintain two data structures $\DSReal$ and $\DSBack$, one from which we report distances, and an internal one whose state is hidden from the adversary. Whenever distances differ significantly between the two, we simply synchronize the two data structures $\DSReal$ and $\DSBack$ by setting all distance estimates to the minimum observed among the two data structures. Then, we trigger a rebuild on $\DSBack$ that remains hidden from the adversary. Such a rebuild only happens if the potential $\Phi_{\DSBack} = \sum_{v \in V} \hat{d}_{\DSBack}(v)$ has decreased drastically or many insertions happened, and therefore a very limited number of times. This combination of randomization and synchronized propagation yields our fastest algorithm. 

\thmrand*

\paragraph{Conclusion.} The newly introduced synchronized  propagation technique linearizes the build-up of error, enabling us to significantly improve on the state of the art for sparse digraphs. It also directly relates the additive error of a path to edge insertions to the same path, but it remains difficult to use this insight in the construction of an algorithm. 

\section{A Deterministic Algorithm}
\label{sec:detAlgorithm}

Our deterministic algorithm introduces the key concept of synchronized propagation, which we will also heavily make use of in our randomized scheme. 

\subsection{Algorithm Description}

As usual in the context of lazy ES-trees, we maintain $\tilde{O}(\log W)$ data structures for different distance ranges, where $W$ denotes the maximum weight on any edge. Namely, for each power of two $\tau = 2^i$ between $\sqrt{m}$ and $nW$, we maintain a data structure responsible for paths of length $[\tau, 2\tau)$. Each such data structure keeps distance estimates up to a maximum of $(1 + \epsilon)2\tau$, and returns infinity if a queried vertex is at this maximum distance. For short paths with length less than $2 \sqrt{m}$ we use a standard ES-tree (\Cref{thm:es}). Whenever a query has to be answered, the minimum distance among all data structures is returned.

We split the execution of our deterministic algorithm for maintaining such a distance range $[\tau, 2\tau)$ into multiple phases. After each phase, our distance estimates might have deteriorated, and we call Dijkstra with source $s$ to restore exact distance estimates. We call this a \emph{rebuild}. Each individual phase consists of $B = \floor{\sqrt{m}/(6\lg n)}$ edge insertions.

\begin{algorithm}
\caption{\textsc{insert}($u, v, \omega(u,v)$)}
\begin{algorithmic}[1]
\STATE Global variables $b$ and $last\_touched(v)$ get initialized to $0$.\\  Whenever a rebuild happens they are re-set to $0$ thereafter.
\STATE //\emph{insert edge}
\STATE $b \leftarrow b + 1$
\IF{$\ceil{\hat{d}(v)/\epsilon \delta} > \ceil{(\hat{d}(u) + \omega(u,v) )/\epsilon \delta}$}
    \STATE $\hat{d}(v) \leftarrow \hat{d}(u) + \omega(u,v)$
    \STATE $last\_touched(v) \leftarrow b$
\ENDIF
\STATE //\emph{synchronized propagation}
\STATE $j \leftarrow \max(j \in \mathbb{N}_0: \exists k \in
    \mathbb{N} \text{ s.t. } b = k \cdot 2^{j})$
\STATE $k \leftarrow b / 2^j $
\STATE $V_{input} \leftarrow \{v \in V: (k - 1) 2^j < last\_touched(v) \leq k 2^j = b\}$
\STATE $V_{touched} \leftarrow \textsc{PartialDijkstra}(V_{input}, \epsilon)$
\FORALL{$v \in V_{touched}$}
    \STATE $last\_touched(v) \leftarrow b$
\ENDFOR
\end{algorithmic}
\label{alg:det_insert}
\end{algorithm}

Let us next explain how we deal with the $i$-th edge insertion to our data structure (the corresponding pseudo-code is in \Cref{alg:det_insert}): when the $i$-th edge $e = (u,v)$ is inserted, we check if $\ceil{\hat{d}(v)/\epsilon \delta} > \ceil{(\hat{d}(u) + \omega(u,v) )/\epsilon \delta}$, where $\delta = \tau/\sqrt{m}$ is an error tolerance parameter depending on the path length range $\tau$. If so, we set  $\hat{d}(v) = \hat {d}(u) + \omega(u,v)$. Afterwards, if we added the $b$-th edge in this phase, we do the following. Let 
\begin{align*}
    j = \max(j \in \mathbb{N}_0: \exists k \in \mathbb{N} \text{ s.t. } b = k \cdot 2^{j})
\end{align*}
and let $k$ be the corresponding $k$ such that $b = k \cdot 2^{j}$. We measure the decrease in distance estimate of each vertex in steps of size $\epsilon \delta$, and we let $V_{input}$ denote the set of vertices that decreased to or past such an $\epsilon \delta$ step in distance estimate since the edge insertion $(k-1)2^j$ of the current phase was processed. Note that this includes all the vertices that accumulated a $\epsilon \delta$ decrease within this time frame. We call $\textsc{PartialDijkstra}(V_{input}, \epsilon)$ (\Cref{alg:partialdijkstra}) to propagate changes from these vertices, as well as along all edges that would end up with additive error $\epsilon \delta$ or more.

\begin{algorithm}
\caption{\textsc{PartialDijkstra}($V_{input}$, $\epsilon$)}
\begin{algorithmic}[1]
\STATE Initialize priority queue $Q \leftarrow V_{input}$ sorted by $\hat{d}(v)$ for $v \in V_{input}$.
\STATE $V_{touched} = \emptyset$
\WHILE{$Q \neq \emptyset$} 
\STATE extract min $u$ from $Q$
\FORALL{$v: (u,v) \in E$}
\IF{$\ceil{\hat{d}(v)/\epsilon \delta} > \ceil{(\hat{d}(u) + \omega(u,v) )/\epsilon \delta}$} \label{alg:line:touched}
    \STATE $\hat{d}(v) \leftarrow \hat{d}(u) + \omega(u,v)$
    \STATE $V_{touched} \leftarrow V_{touched} \cup \{v\}$
    \STATE add $v$ to $Q$ if $v \notin Q$
\ELSIF{$v \in Q$}
    \STATE $\hat{d}(v) \leftarrow \min(\hat{d}(v), \hat{d}(u) + \omega(u,v))$
\ENDIF
\ENDFOR
\ENDWHILE
\RETURN $V_{touched}$
\end{algorithmic}
\label{alg:partialdijkstra}
\end{algorithm}

\subsection{Runtime}

\begin{lemma}
A single decrease past an $\epsilon \delta$ step causes a vertex $v$ to enter $O(\log B)$ sets $V_{input}$.
\label{lem:part_dijksta_log}
\end{lemma}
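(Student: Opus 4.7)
The plan is to translate the membership condition for $V_{input}$ into a clean arithmetic statement about the step index $b$, and then reduce the bound to a standard counting fact about 2-adic valuations.

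First, I would read off the following characterization directly from \Cref{alg:det_insert}: at step $b$, letting $j = v_2(b)$ (the 2-adic valuation of $b$) and $k = b/2^j$ (which is odd), the set $V_{input}$ consists of those $v$ with $(k-1)2^j < last\_touched(v) \leq b$. Equivalently, writing $b_0$ for the current value of $last\_touched(v)$, vertex $v$ enters $V_{input}$ at step $b$ if and only if
\begin{equation*}
  b_0 \leq b < b_0 + 2^{v_2(b)}.
\end{equation*}

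Next, I would tie this characterization to the lemma statement. A single decrease of $v$ past an $\epsilon\delta$ step is exactly what sets $last\_touched(v)$ to a specific value $b_0$ (in line 6 of \Cref{alg:det_insert}, or in the $last\_touched$ update following the return of \textsc{PartialDijkstra}). This value persists until the next decrease past an $\epsilon\delta$ step, which constitutes a separate event and is accounted for on its own. So it suffices to bound the number of steps $b \geq b_0$ with $last\_touched(v)$ still equal to $b_0$ at which the displayed inequality holds.

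Finally, for each fixed $j \geq 0$, the integers $b$ with $v_2(b) = j$ are precisely the odd multiples of $2^j$, and consecutive such integers differ by $2^{j+1}$. Hence the half-open interval $[b_0, b_0 + 2^j)$, which has length $2^j$, contains at most one integer $b$ with $v_2(b) = j$. Since every step within a phase satisfies $1 \leq b \leq B$, we only need to consider $j \in \{0, 1, \ldots, \lfloor \log_2 B \rfloor\}$, giving a total of at most $\lfloor \log_2 B \rfloor + 1 = O(\log B)$ qualifying steps.

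The only place where care is needed is the translation of the pseudocode condition into the inequality $b_0 \leq b < b_0 + 2^{v_2(b)}$, and in making clear that a new decrease simply overwrites $b_0$ and restarts the accounting; once these points are settled, the counting step is immediate from the spacing of odd multiples of $2^j$.
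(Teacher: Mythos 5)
Your proof is correct and follows essentially the same approach as the paper: both fix the time $b_0$ at which $last\_touched(v)$ was set, characterize membership in $V_{input}$ at step $b$ via an interval condition of length $2^{v_2(b)}$ around $b_0$, and conclude that for each 2-adic valuation $j \le \lfloor\log_2 B\rfloor$ there is at most one qualifying step. The paper phrases the per-$j$ uniqueness as a direct contradiction between two hypothetical entry times $k_1 2^j$ and $k_2 2^j$, whereas you phrase it as the interval $[b_0, b_0+2^j)$ containing at most one (odd) multiple of $2^j$; these are the same counting fact.
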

\begin{proof}
Say the decrease happened at step $b$ since the start of a phase. To arrive at a contradiction, we assume this decrease caused $v$ to enter into $V_{input}$ at step $i_1 = k_1 2^j$ and $i_2 = k_2 2^j$ of the phase for $k_1 < k_2$. But then $(k_2 - 1) 2^j < b \leq k_1 2^j$ which is a contradiction since $k_1$ and $k_2$ are integers. The conclusion directly follows from the fact that the decrease causes $v$ to enter a set $V_{input}$ at most once for each $2^j$, and $2^j \leq B$.
\end{proof}

\begin{lemma}
The total runtime of the algorithm is $O(m^{3/2}\log^3(n)/\epsilon + m^{3/2}\log (W) \log^2(n)/\epsilon)$.
\label{lem:rt_det}
\end{lemma}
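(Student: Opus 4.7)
The plan is to decompose the total running time into three parts and sum them: (i) the short-path ES-tree used for paths of length less than $2\sqrt{m}$; (ii) the work performed by \textsc{PartialDijkstra} calls inside each $[\tau,2\tau)$ data structure; and (iii) the rebuild cost between phases. Finally I sum over the $O(\log n + \log W)$ distance-range data structures (those powers of two $\tau \in [\sqrt{m}, nW]$).

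For (i), \Cref{thm:es} with hop bound $h = 2\sqrt{m}$ immediately yields $O(m^{3/2}\log(nW)/\epsilon)$, which is absorbed by the claimed bound. For (iii), each data structure undergoes at most $m/B = O(\sqrt{m}\log n)$ rebuilds (since $B = \lfloor\sqrt{m}/(6\lg n)\rfloor$), and each rebuild is a single Dijkstra costing $O(m\log n)$; this gives $O(m^{3/2}\log^2 n)$ per data structure. The main effort goes into (ii). The key amortization is to count \textsc{PartialDijkstra} work over the \emph{entire} lifetime of one data structure rather than per phase. For this I would first establish global monotonicity of $\hat d(v)$: the invariant $\hat d(v) \geq d(s,v)$ is preserved by insertions and by \textsc{PartialDijkstra} (since both only write $\hat d(u)+\omega(u,v) \geq d(s,v)$), and a rebuild resets $\hat d(v)$ to $d(s,v)$, which is at most the pre-rebuild estimate. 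Hence $\lceil \hat d(v)/(\epsilon\delta)\rceil$ is non-increasing over the whole execution, and since $\hat d(v)$ never exceeds $(1+\epsilon)2\tau$, the number of ``touches'' of $v$ (strict decreases of this ceiling) is at most $(1+\epsilon)2\tau/(\epsilon\delta) = O(\sqrt{m}/\epsilon)$ across the entire life of the data structure. By \Cref{lem:part_dijksta_log} each touch causes $v$ to enter at most $O(\log B) = O(\log n)$ sets $V_{input}$ later in the same phase. Each such entry leads to exactly one processing of $v$ in a \textsc{PartialDijkstra} call (priority-queue insert, extract-min, and relaxation of its outgoing edges), costing $O((1+\deg(v))\log n)$. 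Multiplying, the per-vertex propagation cost for one data structure is $O(\deg(v)\sqrt{m}\log^2 n/\epsilon)$, which sums to $O(m^{3/2}\log^2 n/\epsilon)$.

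Adding (ii) and (iii) for a single data structure gives $O(m^{3/2}\log^2 n/\epsilon)$, and summing over the $O(\log n + \log W)$ data structures and adding the ES-tree bound from (i) yields the claimed $O(m^{3/2}\log^3 n/\epsilon + m^{3/2}\log W\log^2 n/\epsilon)$. The main obstacle is setting up the amortization correctly: one must argue monotonicity of $\hat d(v)$ \emph{across} rebuilds—not just within a phase—so that the $O(\sqrt{m}/\epsilon)$ touch bound applies to the whole execution; otherwise a naive per-phase bound of $O(\sqrt{m}/\epsilon)$ touches times $O(\sqrt{m}\log n)$ phases would blow the target by a factor of $\sqrt{m}$. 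The remaining care is purely book-keeping: verifying that each processing inside a \textsc{PartialDijkstra} call can indeed be charged either to an initial $V_{input}$ entry or to a prior touch (so that no vertex-processing escapes the count), and that the $O(\log B)$ factor from \Cref{lem:part_dijksta_log} already covers all $V_{input}$ sets a given touch can be responsible for across the $j$-levels of the phase.
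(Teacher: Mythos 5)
Your proposal is correct and follows essentially the same decomposition as the paper's proof: rebuild cost $O(m/B)$ Dijkstras per range, \textsc{PartialDijkstra} propagation cost amortized against $O(\tau/\epsilon\delta)$ touches per vertex with an extra $O(\log B \cdot \log n)$ factor from \Cref{lem:part_dijksta_log} and priority-queue overhead, and a short-path ES-tree absorbed by the final bound. Your per-vertex (degree-weighted) accounting is equivalent to the paper's per-edge potential $\Psi = \sum_{(u,v)\in E^{\mathrm{end}}}\hat d(u)$, and the global monotonicity of $\hat d(v)$ across rebuilds that you spell out is indeed the reason the paper can amortize against the \emph{initial} potential rather than per phase.
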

\begin{proof}
We first analyse a specific data structure for some distance range $\tau$. Two parts contribute to the runtime.

Firstly, we call Dijkstra on the whole graph after a batch of $B$ insertions, using $O(m^2 \log(n)/B)$ time. 

Secondly, for any edge $(u,v)$ we spend up to $O(\log B)$ time whenever $\hat{d}(u)$ gets decreased past an $\epsilon \delta$ step immediately and throughout subsequent calls to $\textsc{PartialDijkstra}(V_{input}, \epsilon)$ by \Cref{lem:part_dijksta_log}. While doing so, we spend at most another $\log n$ factor for sorting the priority queues. Since the total number of decreases past such steps is bounded by the initial potential $\Psi = \sum_{(u,v) \in E^{end}}  \hat{d}(u) = O(m\tau)$ divided by $\epsilon \delta$, such decreases can happen at most $O(m\tau/\epsilon\delta)$ times.

Therefore the total runtime is $O(\frac{m^2 \log(n)}{B} + \frac{m \tau \log^2(n)}{\epsilon \delta})$. Using the definitions $\delta = \tau/\sqrt{m}$ and $B = \floor{\sqrt{m}/(6\log n)}$ we get $O(\frac{m^{3/2} \log^2(n)}{\epsilon})$ per such data structure.

We maintain $O(\log n + \log W)$ data structures in total and keep the distances of paths shorter than $2\sqrt{m}$ in time $O(m^{3/2}/\epsilon)$ using \Cref{thm:es}. The claimed runtime follows. 
\end{proof}

\subsection{Correctness}

It is left to show that the described algorithm maintains the approximation guarantee. To do so, we focus on a path length in a specific range $[\tau, 2\tau)$, and show that the data structure responsible for said range does so. We first prove an important yet simple lemma about the $\textsc{PartialDijkstra}(V_{input}, \epsilon)$ routine. 

\begin{lemma}
Consider the graph right after calling $V_{touched} = $ \textsc{PartialDijkstra}($V_{input}, \epsilon$) for some vertex set $V_{input}$. Let $\sigma = v_1, ..., v_l$ be a path segment such that for all $i \in [l]: v_i \in V_{fixed}$ where $V_{fixed} = V_{touched} \cup  V_{input}$. Then 
\begin{align*}
    \hat{d}(v_l) \leq \hat{d}(v_1) + d_\sigma(v_1, v_l)
\end{align*}
where $d_\sigma(v_1, v_l)$ denotes the length of the path.
\label{lem:partialdijkstra}
\end{lemma}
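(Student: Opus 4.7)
The plan is to reduce the segment inequality to a per-edge inequality and then telescope. Concretely, I will show that for every consecutive pair $(v_i, v_{i+1})$ along $\sigma$ (both endpoints in $V_{fixed}$), the final estimates satisfy $\hat{d}(v_{i+1}) \leq \hat{d}(v_i) + \omega(v_i, v_{i+1})$; summing over $i = 1, \dots, l-1$ then yields the lemma. Throughout, let $t_v$ denote the time $v$ is extracted from $Q$. I will rely on two elementary properties of \textsc{PartialDijkstra}: (i) once $v$ is extracted, $\hat{d}(v)$ is frozen, and while $v \in Q$ the value $\hat{d}(v)$ only decreases, via either branch of the inner conditional; and (ii) extracted keys are non-decreasing in extraction time, which follows by induction on extractions using positive weights together with the fact that any key ever placed in $Q$ (initially in $V_{input}$ or later via a relaxation $\hat{d}(w)+\omega(w,\cdot)$) is at least the key of the most recently extracted vertex.

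I then split on the relative extraction order of $v_i$ and $v_{i+1}$. If $t_{v_{i+1}} < t_{v_i}$, then (ii) immediately gives $\hat{d}(v_{i+1}) \leq \hat{d}(v_i) \leq \hat{d}(v_i) + \omega(v_i,v_{i+1})$. Otherwise $t_{v_i} < t_{v_{i+1}}$, and I examine the iteration at $t_{v_i}$ that scans the edge $(v_i, v_{i+1})$. If $v_{i+1} \in Q$ at that moment, then either the \textbf{IF} or the \textbf{ELSIF} branch fires and writes $\hat{d}(v_{i+1}) \leftarrow \min(\hat{d}(v_{i+1}),\, \hat{d}(v_i)+\omega(v_i,v_{i+1}))$; combined with (i) this bound persists until $v_{i+1}$ is extracted. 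The remaining subcase is $v_{i+1} \notin Q$ at $t_{v_i}$: since $v_{i+1} \in V_{fixed}$ is still to be extracted, it must belong to $V_{touched}$ and be inserted into $Q$ at some later time $t_{\mathrm{add}} \in (t_{v_i},\, t_{v_{i+1}}]$ by the extraction of a neighbour $u'$. The ceiling test for $(v_i, v_{i+1})$ at $t_{v_i}$ must then \emph{fail} (else $v_{i+1}$ would already have been inserted at $t_{v_i}$), and $\hat{d}(v_{i+1})$ is untouched between $t_{v_i}$ and $t_{\mathrm{add}}$, so chaining the passing ceiling at $t_{\mathrm{add}}$ with the failing ceiling at $t_{v_i}$ yields
\[
\ceil{\frac{\hat{d}(u')+\omega(u',v_{i+1})}{\epsilon\delta}} \;<\; \ceil{\frac{\hat{d}(v_{i+1})}{\epsilon\delta}} \;\leq\; \ceil{\frac{\hat{d}(v_i)+\omega(v_i,v_{i+1})}{\epsilon\delta}}.
\]
A strict inequality between ceilings of non-negative reals implies the same strict inequality on the underlying values, so the value $\hat{d}(u')+\omega(u',v_{i+1})$ written to $\hat{d}(v_{i+1})$ at $t_{\mathrm{add}}$ is already strictly below $\hat{d}(v_i)+\omega(v_i,v_{i+1})$, and (i) preserves this bound until $v_{i+1}$ is extracted.

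The step I expect to be the main obstacle is precisely this final subcase: the ceiling-rounded acceptance test is what lets \textsc{PartialDijkstra} skip cheap propagations, but naively it threatens to introduce an $\epsilon\delta$ slack per edge and hence up to $l\cdot\epsilon\delta$ across $\sigma$. The observation that kills the slack is that the relaxation which eventually promotes $v_{i+1}$ into $V_{touched}$ must strictly beat the very ceiling that the $(v_i, v_{i+1})$ relaxation failed to beat at $t_{v_i}$, so the two ceilings cancel without overhead. With the per-edge inequality established in both cases, telescoping along $\sigma$ finishes the proof.
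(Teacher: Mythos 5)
Your proof is correct, and its backbone matches the paper's own argument: establish the per-edge inequality $\hat{d}(v_{i+1}) \leq \hat{d}(v_i) + \omega(v_i, v_{i+1})$ by splitting on whether $v_{i+1}$ is extracted from $Q$ before or after $v_i$, then telescope along $\sigma$. The genuine added value of your write-up is the third subcase: $v_{i+1}\in V_{fixed}$, $v_{i+1}\notin Q$ at $t_{v_i}$, and the ceiling acceptance test for $(v_i,v_{i+1})$ fails. The paper's proof asserts that at the moment $v_i$ leaves $Q$ either $v_{i+1}$ already left, or ``we set $\hat{d}(v_{i+1}) = \min(\hat{d}(v_{i+1}), \hat{d}(v_{i}) + \omega(v_i, v_{i+1}))$''; but \Cref{alg:partialdijkstra} performs that assignment only if the ceiling test passes or $v_{i+1}$ is currently in $Q$, so the assertion does not literally cover your third subcase. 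Your ceiling-chaining step — comparing the failed test at $t_{v_i}$ to the eventually-passing test at $t_{\mathrm{add}}$ and using that $\ceil{a}<\ceil{b}$ forces $a<b$ for non-negative reals — is precisely what is needed to close this, and it is sound: $\hat{d}(v_{i+1})$ is untouched on $(t_{v_i}, t_{\mathrm{add}})$ because $v_{i+1}\notin Q$ there and any firing of the \textbf{IF} branch would have inserted it. So: same decomposition, same telescoping, same appeal to Dijkstra monotonicity, but your version explicitly and correctly handles a corner case that the published proof leaves implicit.
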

\begin{proof}
For some $i \in [l - 1]$, consider the moment the vertex $v_i$ leaves the priority queue. Note that this is well defined, since each vertex $v_i \in V_{fixed}$ was added to the priority queue at some point, and therefore has to leave it for the algorithm to terminate. At that moment, either $v_{i+1}$ has already left beforehand, and thus $\hat{d}(v_{i+1}) \leq \hat{d}(v_{i})$ or we set $\hat{d}(v_{i+1}) = \min(\hat{d}(v_{i+1}), \hat{d}(v_{i}) + \omega(v_i, v_{i+1})$. The lemma follows directly by iterative applications of this inequality.  
\end{proof}

Next we show a simple and standard invariant, that often is useful in relaxed ES-trees. It summarizes the fact that an additive error of $\epsilon \delta$ on any edge causes a propagation, fixing said error. 

\begin{invariant}
If $(u,v) \in E$, then $\hat{d}(v) \leq \hat{d}(u) + \omega(u,v) + \epsilon \delta$.
\label{inv:max_str_edge}
\end{invariant}
\begin{proof}
Whenever an edge $(u,v)$ is added to the graph and $v$ decreases by $\epsilon \delta$ in distance estimate, we have $v \in V_{input}$ for the subsequent call to $\textsc{PartialDijkstra}(V_{input}, \epsilon)$. Thereafter, whenever a vertex distance estimate decreases by any amount, we check its out-neighbourhood and decrease the distance estimates of vertices in it if they violate the invariant as part of the $\textsc{PartialDijkstra}(V_{input}, \epsilon)$ routine. 
\end{proof}

With these basics out of the way, we focus on path segments that were already present in their current form at the start of a given phase, and aim to show that they can only accumulate a very limited amount of error. To do so, we define the slack of a segment, which is the maximum additive error any vertex on the segment witnesses with respect to the last vertex on the segment.

\begin{figure}[htp]
    \centering
    \includegraphics[width=5cm]{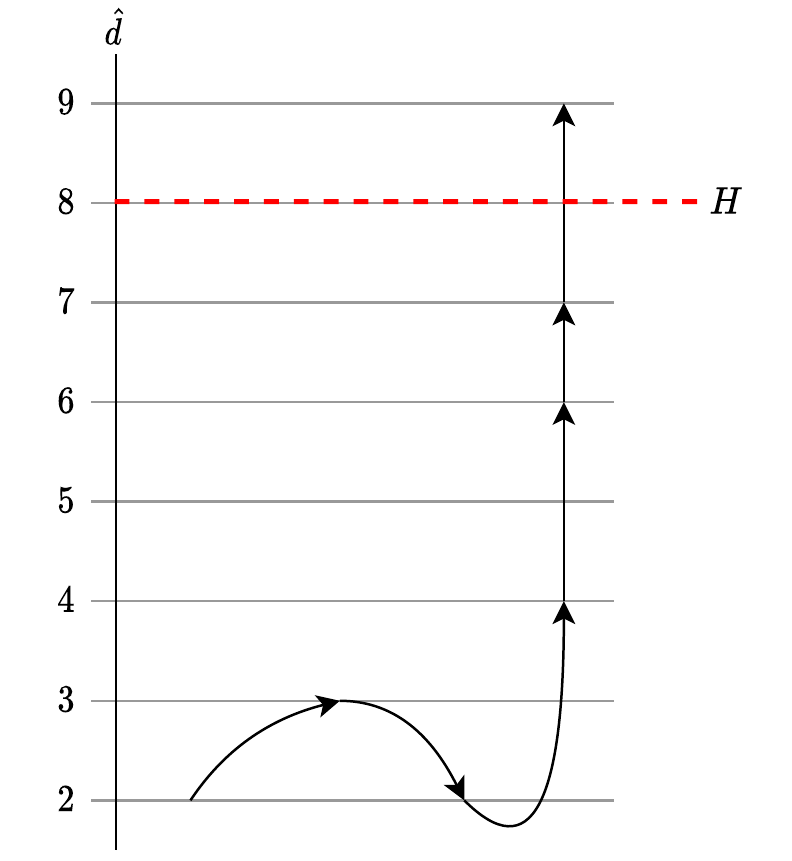}
    \caption{Illustration of \Cref{def:slack}. All displayed edges have weight $1$, but sometimes distance estimates increase by more than $1$ along the path, and sometimes they even decrease. The current slack of the segment is $slack(\sigma) = 3$, since the vertex witnessing the most error has distance $4$ to the last vertex, but the difference in distance estimates is $9 - 2 = 7$. If we measure slack with respect to some fixed height $H = 8$, the segment just has $slack(\sigma, H) = 2$. Slack with respect to fixed heights will only become important in the analysis of our randomized algorithm.}
    \label{fig:slack}
\end{figure}

\begin{definition}
Given a path segment $\sigma = v_1, ..., v_l$ of length $d_\sigma(v_1,v_l)$, we define its slack at time $t$ as
\begin{align*}
    slack^t(\sigma) := \max_{i \in [l]}(\hat{d}^t(v_l) - \hat{d}^t(v_i) - d_\sigma(v_i,v_l))
\end{align*}
and the slack with respect to some fixed height $H$ as
\begin{align*}
    slack^t(\sigma, H) := \max_{i \in [l]}(H - \hat{d}^t(v_i) - d_\sigma(v_i,v_l)).
\end{align*}
We sometimes omit the time if it is clear from the context. 
\label{def:slack}
\end{definition}

\begin{remark}
The definition of slack is with regard to the distance estimates maintained by the data structure. When we want to emphasize that we are using slack defined with regard to a data structure $\DSReal$, we use the variable name of the data structure as a subscript, for example we write $slack^t_{\DSReal}(\sigma)$.
\end{remark}

In the following, we want to show that the slack only accumulates slowly on paths that are already present in the graph at the start of a phase. For now we focus on a batch of insertions within a quite specific time interval, such that they get fixed by a a call to $\textsc{PartialDijkstra}(V_{input}, \epsilon)$ in the end. 

\begin{lemma}
Consider the $i$-th insertion during a phase that started at time $t$, along with the associated values $j$ and $k$ computed by our algorithm. If an existing path segment $\sigma = v_1, ..., v_l$ had $slack^{t+t_1}(\sigma) \leq \mu$ for $t_1 = (k - 1)2^j$ and $\mu \geq 0$, we have $slack^{t+t_2}(\sigma) \leq \mu + 2 \epsilon \delta$ for $t_2 = i = k 2^j$.
\label{lem:qual_dec}
\end{lemma}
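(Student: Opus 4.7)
Let $v_{i^*}$ attain the maximum in the definition of $slack^{t+t_2}(\sigma)$, so
\[
slack^{t+t_2}(\sigma) = \hat d^{t+t_2}(v_l) - \hat d^{t+t_2}(v_{i^*}) - d_\sigma(v_{i^*},v_l).
\]
Since estimates are monotonically non-increasing and $slack^{t+t_1}(\sigma)\le \mu$ applied at index $i^*$ gives $\hat d^{t+t_1}(v_l)\le \hat d^{t+t_1}(v_{i^*})+d_\sigma(v_{i^*},v_l)+\mu$, chaining the two inequalities yields
\[
slack^{t+t_2}(\sigma)\le \mu+\bigl(\hat d^{t+t_1}(v_{i^*})-\hat d^{t+t_2}(v_{i^*})\bigr).
\]
So the whole task reduces to bounding the decrease of $v_{i^*}$'s distance estimate during $(t_1,t_2]$ by $2\epsilon\delta$.

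The engine of the argument is the observation that a vertex $v$ lies in $V_{fixed}=V_{input}\cup V_{touched}$ at step $t_2$ if and only if its ceiling $\lceil\hat d(v)/(\epsilon\delta)\rceil$ was decreased at some step within $(t_1,t_2]$. This comes from tracking the algorithm's $last\_touched$ bookkeeping: $last\_touched(v)$ is set to the current step precisely when the ceiling crosses an $\epsilon\delta$-threshold (either from an insertion or from appearing in $V_{touched}$ of an earlier \textsc{PartialDijkstra} within the interval), and $V_{input}$ at step $t_2$ collects exactly the vertices with $last\_touched\in(t_1,t_2]$. Consequently, if $v\notin V_{fixed}$ at step $t_2$, then the ceiling of $\hat d(v)$ is unchanged across $(t_1,t_2]$, so by monotonicity $\hat d^{t+t_1}(v)-\hat d^{t+t_2}(v)<\epsilon\delta$.

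Now I split cases. If $v_{i^*}\notin V_{fixed}$, the observation immediately bounds its decrease by $\epsilon\delta$, which suffices. Otherwise let $b\ge i^*$ be the largest index such that $v_{i^*},\ldots,v_b$ are all in $V_{fixed}$ at step $t_2$. When $b=l$, \Cref{lem:partialdijkstra} gives $\hat d^{t+t_2}(v_l)\le \hat d^{t+t_2}(v_{i^*})+d_\sigma(v_{i^*},v_l)$ and the slack is nonpositive. When $b<l$, the vertex $v_{b+1}$ is outside $V_{fixed}$ and I combine, at time $t+t_2$: the bound $\hat d(v_b)\le \hat d(v_{i^*})+d_\sigma(v_{i^*},v_b)$ from \Cref{lem:partialdijkstra}; the bound $\hat d(v_{b+1})\le \hat d(v_b)+\omega(v_b,v_{b+1})+\epsilon\delta$ from \Cref{inv:max_str_edge}; and $\hat d^{t+t_1}(v_{b+1})<\hat d^{t+t_2}(v_{b+1})+\epsilon\delta$ from the ceiling-invariance observation applied to $v_{b+1}$. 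Plugging the resulting chain into $\hat d^{t+t_1}(v_l)\le \hat d^{t+t_1}(v_{b+1})+d_\sigma(v_{b+1},v_l)+\mu$ (supplied by $slack^{t+t_1}(\sigma)\le\mu$ at index $b+1$), then using $\hat d^{t+t_2}(v_l)\le \hat d^{t+t_1}(v_l)$ and the additivity $d_\sigma(v_{i^*},v_{b+1})+d_\sigma(v_{b+1},v_l)=d_\sigma(v_{i^*},v_l)$, collapses everything to $slack^{t+t_2}(\sigma)<\mu+2\epsilon\delta$.

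\textbf{Main obstacle.} The delicate step is pinning down the equivalence between $v\notin V_{fixed}$ at step $t_2$ and the ceiling of $\hat d(v)$ being unchanged on the whole interval $(t_1,t_2]$. This needs an induction across the $2^j$ intermediate steps, accounting for every way $last\_touched$ can be mutated, and noting that line $11$ of \textsc{PartialDijkstra} can lower $\hat d$ only for vertices already in $V_{input}\cup V_{touched}$ and only by less than $\epsilon\delta$, so these "silent" updates cannot spoil the ceiling bookkeeping. Once that invariant is in hand, the rest of the argument is just the short triangle-inequality chain through $v_{i^*}$, $v_b$, $v_{b+1}$, and $v_l$ outlined above.
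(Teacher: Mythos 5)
Your proof is correct and follows essentially the same route as the paper's: you identify a maximal subsegment of $V_{fixed}$ starting at the slack-witnessing vertex, handle the ``reaches $v_l$'' subcase via \Cref{lem:partialdijkstra}, and otherwise chain \Cref{lem:partialdijkstra}, \Cref{inv:max_str_edge}, and the small-decrease bound for the first vertex outside $V_{fixed}$ against the $t_1$-slack assumption. The only presentational differences are that you fix the slack maximizer $i^*$ rather than argue for an arbitrary $p$, and you state the ``ceiling unchanged outside $V_{fixed}$'' observation explicitly (the paper invokes its contrapositive for $v_p$ and the direct form for $v_{q+1}$ without singling it out as a standalone fact); both are cosmetic.
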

\begin{proof}
Consider any vertex $v_p$ for $1 \leq p \leq l$ and let $d_{\sigma}(u,v)$ denote distances on the considered path. We aim to show 
\begin{align*}
        \hat{d}^{t+t_2}(v_l) \leq \hat{d}^{t+t_2}(v_p) + d_\sigma(v_p, v_l) + \mu + 2 \epsilon \delta.
\end{align*} 
We assume $\hat{d}^{t+t_1}(v_l) > \hat{d}^{t+t_2}(v_p) + d_\sigma(v_p, v_l) + \mu + \epsilon \delta$. Otherwise we are done, using $\hat{d}^{t+t_2}(v_l) \leq \hat{d}^{t+t_1}(v_l)$ since our algorithm only ever decreases distance estimates. Consider the set $V_{fixed} = V_{touched} \cup V_{input}$ for $V_{touched} = \textsc{PartialDijkstra}(V_{input}, \epsilon)$ as computed during the processing of the $i$-th edge insertion. By our assumption, we have $v_p \in V_{touched}$ and thus $v_p \in V_{fixed}$, since its distance estimate decreased by at least $\epsilon \delta$. Let $v_p, ..., v_q$ be a maximal subsegment starting at $v_p$ such that all vertices are contained in $V_{fixed}$. Either $v_q = v_l$ and thus all the vertices on this remainder of the path are in $V_{fixed}$, or $v_{q+1}$ is the first vertex not in $V_{fixed}$. If $v_q = v_l$, we have 
\begin{align*}
    \hat{d}^{t + i}(v_l) \leq \hat{d}^{t + i}(v_p) + d_\sigma(v_p, v_l) 
\end{align*}
by \Cref{lem:partialdijkstra} and are done. Otherwise 
we obtain
\begin{align*}
    \hat{d}^{t+t_2}(v_l) \leq \hat{d}^{t+t_1}(v_l) &\leq \hat{d}^{t+t_1}(v_{q+1}) + d_\sigma(v_{q+1}, v_l) + \mu \\
    &< \hat{d}^{t+t_2}(v_{q+1}) + d_\sigma(v_{q+1}, v_l) + \mu + \epsilon \delta\\
    &\leq \hat{d}^{t+t_2}(v_{q}) + d_\sigma(v_{q}, v_l) + \mu + 2 \epsilon \delta\\
    &\leq \hat{d}^{t+t_2}(v_{p}) + d_\sigma(v_{p}, v_l) + \mu + 2\epsilon \delta
\end{align*}
where the third inequality is due to $v_{q+1} \notin V_{touched}$, the fourth inequality is justified by \Cref{inv:max_str_edge} and the last inequality is justified by \Cref{lem:partialdijkstra}.
\end{proof}

We directly use the previous lemma to give a guarantee for an arbitrary number of insertions. 

\begin{lemma}
Consider a path segment $\sigma = v_1, ..., v_l$ that was present at the start of a given phase at time $t$. After $b \leq B$ insertions we have $slack^{t+b}(\sigma) \leq 2 \epsilon \delta \lg B$.
\label{lem:presentpath}
\end{lemma}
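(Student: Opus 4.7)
The plan is to induct on the binary expansion of $b$, reducing the claim to repeated applications of \Cref{lem:qual_dec}. Writing $b$ in binary gives a decomposition of $(0,b]$ into a sequence of dyadic intervals whose endpoints land exactly on the step numbers where \Cref{lem:qual_dec} is applicable, and each such interval contributes only $2\epsilon\delta$ to the slack.

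First I would establish the base case. Since the phase starts at time $t$ immediately after a rebuild (in which Dijkstra restores exact distances from $s$), we have $\hat{d}^{t}(v_i) = d^{t}(s, v_i)$ for every $i$. For every vertex $v_i$ on the segment $\sigma$, $d^{t}(s,v_l) \leq d^{t}(s,v_i) + d_\sigma(v_i, v_l)$ by the triangle inequality, so $slack^{t}(\sigma) \leq 0$.

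Next I would carry out the binary decomposition. Write $b = 2^{j_1} + 2^{j_2} + \dots + 2^{j_r}$ with $j_1 > j_2 > \dots > j_r \geq 0$, and set $b_s := \sum_{p \leq s} 2^{j_p}$, so $b_0 = 0$, $b_r = b$, and $b_s - b_{s-1} = 2^{j_s}$. I claim that step $i = b_s$ of the phase uses exactly $j = j_s$ and $k = b_s/2^{j_s}$ in the algorithm's choice. Indeed, $b_s / 2^{j_s} = 2^{j_1 - j_s} + \dots + 2^{j_{s-1}-j_s} + 1$ is odd since the last summand is $1$ and all earlier summands are even (as $j_p > j_s$ for $p < s$), so $2^{j_s}$ is the largest power of two dividing $b_s$. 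Moreover $(k-1) \cdot 2^{j_s} = b_{s-1}$, matching the hypothesis of \Cref{lem:qual_dec} with $t_1 = b_{s-1}$ and $t_2 = b_s$. Therefore \Cref{lem:qual_dec} yields $slack^{t + b_s}(\sigma) \leq slack^{t + b_{s-1}}(\sigma) + 2\epsilon\delta$, provided the previous slack is nonnegative (which holds after the first step; for the very first application we may use $\mu = 0$ since $slack^{t}(\sigma) \leq 0$).

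Finally I would bound $r$. Since $b \leq B$, the binary expansion has at most $\lfloor \lg B \rfloor + 1 \leq \lg B$ nonzero bits (where we absorb constants into the assumption that $B$ is at least a small power of two, or equivalently interpret $\lg B$ with the mild slack the statement allows), so iterating gives
\begin{equation*}
slack^{t + b}(\sigma) \leq 2 \epsilon \delta \cdot r \leq 2 \epsilon \delta \lg B.
\end{equation*}
The main subtlety here is verifying that the dyadic decomposition aligns exactly with the algorithm's scheduling of \textsc{PartialDijkstra} calls — once that combinatorial alignment is in place, the rest is a direct telescoping of \Cref{lem:qual_dec}.
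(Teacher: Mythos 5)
Your proof takes essentially the same route as the paper: both decompose the $b$ insertions into dyadic batches aligned with the binary expansion of $b$, verify these batch boundaries match the $(j,k)$ values computed by the algorithm so that \Cref{lem:qual_dec} applies to each, and then telescope starting from $slack^t(\sigma)\leq 0$. Your exposition is a touch more explicit (proving the base case and the divisibility claim, and flagging the harmless $\lfloor\lg B\rfloor+1$ vs.\ $\lg B$ slack), but it is the same argument.
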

\begin{proof}
Let us segment the $b$ insertions into consecutive disjoint batches. We let $b_1 = 2^{\floor{\lg b}}$, and $b_i = 2^{\floor{\lg (b - \sum_{q < i} b_q)} }$, denote the number of insertions belonging to the $i$-th batch.
Then, for the last insertion of the $i$-th batch, our algorithm computes $j = \floor{\lg (b - \sum_{j < i} b_j)}$ and $k = 1 + (\sum_{q < i} b_q)/2^j$, yielding $(k - 1)2^j = \sum_{q < i} b_q$. 

At the start of a phase the path has $slack^t(\sigma) \leq 0$, and each batch increases the slack by at most $2\epsilon \delta$ by \Cref{lem:qual_dec}. Since each batch at least halves the number of remaining insertions, we conclude that we have $slack^{t + b}(\sigma) \leq  2 \epsilon \delta \lg B$ at the end.
\end{proof}

Now consider a shortest $s-x$ path $\pi_{s,x}$. Our strategy is to simply break it up into segments that were already present at the start of a phase, yielding a bound on the total stretch.

\lemfullpath

\begin{proof}
We segment the path into at most $B$ segments, which were already present at the start of the phase. Then we apply \Cref{inv:max_str_edge} at most $B - 1$ 
times to bound the newly inserted edges and apply \Cref{lem:presentpath} $B$ times to bound the segments already present at the start of the phase. Chaining the inequalities yields the result.  
\end{proof}

By summarizing the above, we achieve our deterministic result. 

\thmdet*
\begin{proof}
Follows by combining the correctness guarantee of \Cref{lem:fullpath} with the runtime guarantee of \Cref{lem:rt_det}. To answer queries in constant time, we store an extra variable per vertex maintaining its minimum observed distance estimate among all data structures. Storing the vertex that caused the last decrease for every vertex in every data structure enables us to maintain an approximate shortest path tree $T$, which allows path reporting in time $O(|\pi_{s,x}|)$.
\end{proof}

\section{A Randomised Algorithm}

We now give the technical details of our main result: a Monte-Carlo randomized algorithm that achieves expected total update time $\tilde{O}(m^{4/3}\log W/\epsilon^2)$.

\subsection{Algorithm Description}
For each power of two $\tau = 2^i$ between $m^{1/3}$ and $nW$, we maintain a data structure responsible for paths of length $[\tau, 2\tau)$. Shorter paths are handled by a standard ES-tree, as in our deterministic algorithm.

We describe the algorithm for maintaining a $(1+\epsilon)$ distance approximation for shortest paths $\pi_{s,x}$ of length $d(s,x) \in [\tau, 2\tau)$ for a dedicated source $s$. Our algorithm uses the parameters $\delta = \tau/m^{1/3}$ for controlling the error per edge and $B = \floor{m^{1/3}}$ denoting the maximum number of insertions per phase. For each vertex $v \in V$, we maintain a distance estimate $\hat{d}(v)$, which initially is set to $\hat{d}(v) = \min\{ d(s,v), \tau_{max}\}$ where we let $\tau_{max} = (2 + 200\lg n \epsilon)\tau + 1$ denote the maximum distance estimate. During queries $\tau_{max}$ is treated as infinity. Whenever a query has to be answered, the minimum distance estimate among all data structures is returned.

Our algorithm maintains two distinct lazy ES-trees $\DSReal$ and $\DSBack$ employing synchronized propagation. We refer to variables in $\DSReal$ with $x_{\DSReal}$ and to variables in $\DSBack$ with $x_{\DSBack}$ respectively. When queried, we answer with the distance estimates stored in $\DSReal$. Our randomized data structures only differ from our deterministic algorithm in the nature and frequency of rebuild phases, and we still use \textsc{insert} (\Cref{alg:det_insert}) to insert, this time for both data structures $\DSReal$ and $\DSBack$ separately for each edge. 

If either the last rebuild happened $B$ insertions ago, or the potential $\Phi_{\DSBack} = \sum_{v \in V}  \hat{d}_{\DSBack}(v)$ got decreased by $m^{1/3}\tau/4\epsilon$ since before the last rebuild, we enter another rebuild. In the context of our randomized algorithm, we also call rebuilds global fixing phases. Notice that global fixing phases might happen back to back, without insertions in-between, if the previous global fixing phase decreased the potential $\Phi_{\DSBack}$ enough. We describe a global fixing phase next.  

\paragraph{Global Fixing Phase.} First, we replace each distance estimate of a vertex $v$ in $\DSReal$ and $\DSBack$ by the minimum of the two. Then we repeat the following for $\ceil{2000 \log n/\epsilon}$ iterations: Uniformly sample some $i$ from $0, ..., \lceil2m^{1/3} + 200 \epsilon m^{1/3} \lg n - 8\rceil$, and let 
\begin{align*}
    V^\star_j = \{v \in V: \hat{d}^{t+b}_{\DSBack}(v) \in [i \delta, (i+8) \delta)\}
\end{align*}
where $j$ denotes the iteration count.

Finally, we call $\textsc{PartialDijkstra}(\bigcup_{j = 1}^{\ceil{2000 \log n/\epsilon}} V^\star_j, \epsilon)$ (\Cref{alg:partialdijkstra}) on $\DSBack$ to propagate on these vertices all at once in data structure $\DSBack$. Pseudo-code for the global fixing phase is given in \Cref{alg:glob_fix}. We state the following observation which is straight-forward from the algorithm.

\begin{observation}
Our randomized algorithm preserves \Cref{inv:max_str_edge}.
\end{observation}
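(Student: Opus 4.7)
The plan is to verify that every operation the randomized algorithm performs on either $\DSReal$ or $\DSBack$ leaves \Cref{inv:max_str_edge} intact in both data structures, so that an inductive argument over the sequence of events (insertions, synchronizations, and partial Dijkstra propagations within a global fixing phase) carries through. Since the algorithm processes insertions through the same \textsc{insert} routine as in the deterministic algorithm, separately for each of $\DSReal$ and $\DSBack$, I would first note that the proof of \Cref{inv:max_str_edge} given in the deterministic section applies verbatim to each of the two data structures in isolation. Similarly, the call to \textsc{PartialDijkstra} on $\DSBack$ that concludes a global fixing phase preserves the invariant in $\DSBack$ by exactly the same reasoning: every extracted vertex $u$ has all of its outgoing edges $(u,v)$ relaxed so that no violation of size $\epsilon\delta$ can remain. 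Since this call touches only $\DSBack$ and no insertions are interleaved within a global fixing phase, $\DSReal$ is untouched and therefore still satisfies the invariant.

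The only genuinely new operation is the synchronization step at the beginning of a global fixing phase, where for every vertex $v$ we overwrite both $\hat{d}_{\DSReal}(v)$ and $\hat{d}_{\DSBack}(v)$ by $\hat{d}(v) := \min(\hat{d}_{\DSReal}(v), \hat{d}_{\DSBack}(v))$. The main thing to check here is that taking a vertex-wise minimum of two functions that each satisfy a triangle-style constraint again satisfies that constraint, even though the minimum at $u$ and the minimum at $v$ may be inherited from different data structures. I plan to handle this by case analysis on which data structure realises $\hat{d}(u)$: if $\hat{d}(u) = \hat{d}_{\DSReal}(u)$, then using $\hat{d}(v) \le \hat{d}_{\DSReal}(v)$ together with the pre-sync invariant in $\DSReal$ yields $\hat{d}(v) \le \hat{d}_{\DSReal}(u) + \omega(u,v) + \epsilon\delta = \hat{d}(u) + \omega(u,v) + \epsilon\delta$; the case $\hat{d}(u) = \hat{d}_{\DSBack}(u)$ is symmetric and uses the invariant in $\DSBack$. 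After the overwrite both data structures store the identical value $\hat{d}(v)$, so the invariant simultaneously holds in both.

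Putting these pieces together, I would conclude by induction on the event sequence: the invariant holds at initialization (where both data structures store exact distances truncated at $\tau_{\max}$, a setting that trivially satisfies the invariant), and each of (i) an edge insertion processed in either data structure, (ii) a synchronization step, and (iii) the \textsc{PartialDijkstra} on $\DSBack$ inside a global fixing phase preserves it in both $\DSReal$ and $\DSBack$. The main obstacle I anticipate is precisely the synchronization step, since mixing distance estimates from two different data structures is the only point where the straightforward inductive argument used in the deterministic setting could plausibly fail; the case analysis above is what makes it go through. All remaining steps reduce to applications of the deterministic argument already established in the proof of \Cref{inv:max_str_edge}.
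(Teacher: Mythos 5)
Your proof is correct and fills in carefully the details that the paper dismisses as ``straight-forward from the algorithm''; in particular your case analysis on which data structure realises $\hat{d}(u) = \min(\hat{d}_{\DSReal}(u), \hat{d}_{\DSBack}(u))$ at the synchronization step is exactly the one nontrivial observation needed, and it goes through because $\hat{d}(v) \leq \hat{d}_{\DSReal}(v)$ (resp.\ $\leq \hat{d}_{\DSBack}(v)$) always holds after taking the vertex-wise minimum. The remaining cases (insertions, \textsc{PartialDijkstra} on $\DSBack$) reduce to the deterministic argument as you say, so the induction closes.
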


\begin{algorithm}[ht]
\caption{\textsc{GlobalFixingPhase}}
\begin{algorithmic}[1]
\STATE //\emph{synchronization}
\FORALL{$v \in V$}
    \STATE $x = \min (\hat{d}_{\DSReal}(v), \hat{d}_{\DSBack}(v)) $
    \STATE $\hat{d}_{\DSReal}(v) \leftarrow x$; $\hat{d}_{\DSBack(v)} \leftarrow x$
\ENDFOR
\STATE //\emph{referred to as time} $\tHalfInText$ 
\STATE //\emph{fixing a segment in} $\DSBack$
\STATE $V^\star = \emptyset$
\FORALL{$j = 1, ..., \ceil{2000 \log n/\epsilon}$}
    \STATE sample $i$ uniformly from $0, ..., \ceil{2m^{1/3} + 200 \epsilon m^{1/3} \lg n - 8}$
    \STATE $V^\star_j = \{v \in V: i \delta \leq \hat{d}_{\DSBack}(v) \leq (i + 8)\delta \}$
    \STATE $V^\star \leftarrow V^\star \union V^\star_j$
\ENDFOR
\STATE call $\textsc{PartialDijkstra}(V^\star, \epsilon)$ on $\DSBack$
\end{algorithmic}
\label{alg:glob_fix}
\end{algorithm}

\subsection{Runtime}
\begin{lemma}
The combined runtime of all data structures is $\tilde{O}(m^{4/3}\log (W)/\epsilon^2)$ in expectation.
\label{lem:rt_rand}
\end{lemma}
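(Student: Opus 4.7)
I would bound the expected total update time separately for each of the $O(\log W)$ long-path data structures (one per range $[\tau,2\tau)$ with $\tau = 2^i$ and $m^{1/3} \leq \tau \leq nW$) and for the short-path ES-tree, then sum. The short-path ES-tree run to hop $h = 2m^{1/3}$ gives $\tilde{O}(m^{4/3}\log(nW)/\epsilon)$ deterministically by \Cref{thm:es}, which is already within the claimed bound. For a single range $\tau$, work splits into the \textsc{PartialDijkstra} calls triggered by \textsc{insert} (\Cref{alg:det_insert}) on both $\DSReal$ and $\DSBack$, and the calls from \textsc{GlobalFixingPhase} (\Cref{alg:glob_fix}) on $\DSBack$ (plus negligible bookkeeping). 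The goal is to show the per-range work is $\tilde{O}(m^{4/3}/\epsilon^2)$ in expectation.

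\textbf{Insertion work and phase count.} I would first mirror the deterministic accounting of \Cref{lem:rt_det}: an edge $(u,v)$ is relaxed only when $u \in V_{fixed}$ during a \textsc{PartialDijkstra} call, and by \Cref{lem:part_dijksta_log} each $\epsilon\delta$-step decrease of $\hat{d}(u)$ triggers at most $O(\log B)$ such memberships within insertion calls. Since each vertex can decrease at most $\tau_{max}/(\epsilon\delta) = \tilde{O}(m^{1/3}/\epsilon)$ times, summing $\deg^{out}(u)$ over $u$ yields $\tilde{O}(m \cdot m^{1/3}/\epsilon)$ edge relaxations across both copies, each at $O(\log n)$ priority-queue cost. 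Next I would bound the number of global fixing phases: the "$B$ insertions" trigger fires at most $m/B = m^{2/3}$ times, and the "$\Phi_{\DSBack}$ dropped by $m^{1/3}\tau/(4\epsilon)$" trigger fires at most $\Phi_{\DSBack}^0 \cdot 4\epsilon/(m^{1/3}\tau) = \tilde{O}(n\epsilon^2/m^{1/3}) = \tilde{O}(m^{2/3})$ times, using $\Phi_{\DSBack}^0 \leq n\tau_{max}$ and $n \leq m$. A naive $O(n)$ synchronization scan per phase would give $\tilde{O}(m^{5/3})$ and is too expensive, so I would maintain a doubly-linked "dirty" set of vertices with $\hat{d}_{\DSReal} \neq \hat{d}_{\DSBack}$, updated in $O(1)$ per $\epsilon\delta$-decrease; the cumulative sync cost is then charged to the same $\tilde{O}(m^{4/3}/\epsilon)$ decrease budget.

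\textbf{The main obstacle: randomized \textsc{PartialDijkstra}.} The hardest part is bounding the expected cost of \textsc{PartialDijkstra}$(V^\star,\epsilon)$ inside each \textsc{GlobalFixingPhase}, which requires the sampling randomness. In one phase we draw $K = \lceil 2000 \log n/\epsilon\rceil$ bands of width $8\delta$ uniformly from $N = O(m^{1/3}(1+\epsilon\log n))$ possible starting positions, and a fixed vertex $v$ lies in exactly $8$ such bands, so $\Pr[v \in V^\star] \leq 8K/N = \tilde{O}(1/(\epsilon m^{1/3}))$ per phase. Enumerating each $V^\star_j$ via a balanced BST on $\hat{d}_{\DSBack}$-values costs $O(\log n + |V^\star_j|)$, and \textsc{PartialDijkstra} afterwards pays $O(\log n)$ per edge incident to $V^\star \cup V_{touched}$. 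The $V_{touched}$ contribution and the BST's update cost are absorbed by the global $\tilde{O}(m^{4/3}/\epsilon)$ decrease budget (each $v \in V_{touched}$ records an $\epsilon\delta$-step drop). For the $V^\star$ contribution, linearity of expectation yields $\mathbb{E}[\sum_{v \in V^\star} \deg^{out}(v)] = \sum_v \deg^{out}(v)\Pr[v \in V^\star] = \tilde{O}(m/(\epsilon m^{1/3})) = \tilde{O}(m^{2/3}/\epsilon)$ per phase; multiplying by $\tilde{O}(m^{2/3})$ phases and $O(\log n)$ gives $\tilde{O}(m^{4/3}/\epsilon)$ expected initialization work per data structure. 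Combining all contributions and summing over the $O(\log W)$ ranges gives the claimed $\tilde{O}(m^{4/3}\log(W)/\epsilon^2)$ expected total, where one factor of $1/\epsilon$ comes from the per-range potential accounting and the other absorbs the $K = \Theta(\log n/\epsilon)$ samples per phase.
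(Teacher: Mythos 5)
Your decomposition mirrors the paper's own accounting closely: insertion work charged against the $\epsilon\delta$-decrease budget via \Cref{lem:part_dijksta_log}, a bound on the number of global fixing phases, and an expected-cost bound on the randomized \textsc{PartialDijkstra} call per phase using linearity of expectation over the $O(m^{1/3})$ equally likely bands. The dirty-set trick for amortizing the $O(n)$ synchronization scan is the same idea the paper uses (it stores a list of variables changed since the last fixing phase). The per-phase expected sampling cost $\tilde{O}(m^{2/3}/\epsilon)$ is computed correctly and matches the paper.

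There is, however, a genuine arithmetic gap in the count of \emph{potential-triggered} fixing phases. You read the trigger threshold as $m^{1/3}\tau/(4\epsilon)$ and therefore bound the number of such phases by $\Phi_{\DSBack}^0\cdot 4\epsilon/(m^{1/3}\tau)=\tilde O(m^{2/3})$. The paper's threshold is actually $\frac{1}{4}\epsilon m^{1/3}\tau$ (this is forced by internal consistency: \Cref{lem:pot_dec} only guarantees a drop of $\frac{1}{4}\epsilon m^{1/3}\tau$, and \Cref{lma:keyLemmaPotentialReduction} must use that drop to trigger the rebuild). With the correct threshold the count is $\Phi_{\DSBack}^0/(\frac14\epsilon m^{1/3}\tau)=\tilde O(m^{2/3}/\epsilon)$, one $1/\epsilon$-factor larger than yours. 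This matters: your intermediate tallies ($\tilde O(m^{2/3})$ phases times $\tilde O(m^{2/3}/\epsilon)$ per phase) sum to only $\tilde O(m^{4/3}/\epsilon)$ per range, so your own calculation supports $\tilde O(m^{4/3}\log W/\epsilon)$, not the $\tilde O(m^{4/3}\log W/\epsilon^2)$ you assert at the end. The closing sentence attributing the $1/\epsilon^2$ to ``potential accounting'' times ``$K$ samples'' is not consistent with the numbers you derived, because the insertion-side $1/\epsilon$ is additive with the fixing-phase cost, not multiplicative. The correct multiplicative pairing, as in the paper, is (number of potential-triggered phases, $\tilde O(m^{2/3}/\epsilon)$) times (expected work per phase, $\tilde O(m^{2/3}\log^2 n/\epsilon)$), which yields the dominant $\tilde O(m^{4/3}\log^2 n/\epsilon^2)$ term per range. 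Fixing the threshold reading repairs both the phase count and the final bookkeeping.
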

\begin{proof}
As previously, we analyse the run time of a single data structure for some distance range $[\tau, 2\tau)$ first. Let us analyze the following parts:
\begin{itemize}
    \item \underline{The time spent in $\DSReal$ and $\DSBack$:} for any edge $(u,v)$ we spend up to $O(\log B)$ time whenever $\hat{d}(u)$ gets decreased by $\epsilon \delta$ immediately and throughout subsequent calls from \textsc{Insert} to $\textsc{PartialDijkstra}(V_{input}, \epsilon)$ by  \Cref{lem:part_dijksta_log}. In doing so, we spend at most another $O(\log n)$ factor for sorting the priority queues. There can be at most $O(m\tau / \epsilon \delta)$ such decreases, since the potential $\Psi = \sum_{(u,v) \in E^{end}} \hat{d}_{\DSReal}(u) + \sum_{(u,v) \in E^{end}} \hat{d}_{\DSBack}(u)$ is at most $O(m \tau)$ at the start, and decreases by $\epsilon \delta$ whenever such a decrease happens. For our choice of $\delta$, we can therefore bound the total update time required by such updates with $O(m^{4/3} \log^2(n)/\epsilon)$. 
    \item \underline{The time spent in a global fixing phase caused by the insertion counter reaching $B = \floor{ m^{1/3}}$:} We have that there are at most $O(m^{2/3})$ such fixing phases since we have $\leq m$ insertions. Each global fixing phase has an expected extra runtime, i.e. runtime not payed for by a potential decrease which is accounted for in the data structures, of $O( m^{2/3} \log^2 (n)/\epsilon)$. That is, since there are $\Omega(m^{1/3})$ distance estimate ranges that could be sampled, and thus the sum of the degrees of the vertices in one of them is $O(m^{2/3})$ in expectation. We sample $O(\log (n)/\epsilon)$ such regions. This yields a total expected extra runtime of $O(m^{2/3} \log^2(n)/\epsilon)$ where we spend another $O(\log n)$ factor for sorting priority queues. The other edges that get explored during a global fixing phases phase are paid for by a decrease of the potential $\Psi$, as argued above. Computing the minimum of the distance estimates between $\DSReal$ and $\DSBack$ can be easily amortized over previous changes to the variables, if we store a list of all the variables in $\DSReal$ and $\DSBack$ that changed since the last global fixing phase, and only perform this operation on them. 
    \item \underline{Global fixing phases caused by a potential decrease:} Since the potential $\Phi_{\DSBack} = \sum_{v \in V} \hat{d}_{\DSBack}(v)$ is in $O(n\tau) \subseteq O(m \tau)$ at the start, at most $O(m^{2/3}/\epsilon)$ decreases by $m^{1/3}\tau/4\epsilon$ can happen. As argued in the previous item, this yields an expected extra runtime of $O(m^{2/3} \log^2(n)/\epsilon^2)$. Thus a single data structure can be maintained in $O(m^{4/3}\log^2(n)/\epsilon^2)$ total update time.
\end{itemize}

We use $\tilde{O}(\log W)$ separate data structures for different path lengths and a single standard ES-tree for paths with length less than $2m^{1/3}$ by \Cref{thm:es}. The claimed runtime follows.
\end{proof}

\subsection{Correctness}

\paragraph{Setup of the Proof.}

Some insertions are followed by one or multiple global fixing phases, while others are not. Let $t_1, t_2, \dots, t_k$ be the time stages where the global fixing phase is run at least once, where we also consider the initialization phase a fixing phase, so $t_1 = 0$. Then, we analyze our data structure by looking at the time steps $[t_i, t_{i+1})$ for all $i$. The last batch ends with the last insertion to the graph, and for convenience $t_{k+1}$ is set equal to the number of insertions plus one. We first state the main lemma of our proof, denoting the data structure responsible for path lengths in $[\tau, 2\tau)$ as $\DSReal_{[\tau, 2\tau)}$ in this paragraph.

\begin{lemma}
For $i \in [k]$, time step $t_i + b < t_{i+1}$ where $b \in \mathbb{N}_{\geq 0}$, and vertex $x$ such that $2m^{1/3} \leq d^{t_i + b}(s,x) \in [\tau, 2\tau)$. The probability that
\begin{align}\label{eq:fundamentalError}
    \hat{d}_{\DSReal_{[\tau, 2\tau)}}^{t_i + b}(x) \geq  d^{t_i + b}(s,x) + 100 \epsilon \tau \lg n
\end{align}
is less than $1/n^{5}$ against an adaptive adversary.
\label{lem:total_error_prob_start}
\end{lemma}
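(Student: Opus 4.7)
The plan is to bound the error by combining (i) a deterministic-style within-phase error growth bound obtained by porting the synchronized-propagation analysis of Section~\ref{sec:detAlgorithm} to the randomized setting, with (ii) a random-sampling argument that bounds the error inherited at each global fixing phase. The overall structure is an induction on the global fixing phases $t_1 < \cdots < t_k$, maintaining the invariant that, with sufficiently high probability, both $\hat{d}_{\DSReal}^{t_i}$ and $\hat{d}_{\DSBack}^{t_i}$ lie within $O(\epsilon \tau \lg n)$ of the true distance for every vertex at distance in $[\tau,2\tau)$.

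For (i), since the randomized algorithm calls \textsc{insert} (\Cref{alg:det_insert}) identically on both $\DSReal$ and $\DSBack$ between global fixing phases, \Cref{inv:max_str_edge} is preserved and an analog of \Cref{lem:presentpath} can be derived: any segment of the shortest $s$-$x$ path present at time $t_i$ accumulates at most $2\epsilon\delta\lg B$ additional slack in $\DSReal$ by time $t_i+b$. Decomposing $\pi_{s,x}$ at time $t_i+b$ into at most $b+1 \leq B$ such present segments joined by $\leq b$ newly inserted edges and telescoping via \Cref{inv:max_str_edge} yields
\begin{align*}
\hat{d}_{\DSReal}^{t_i+b}(x) - d^{t_i+b}(s,x) \leq (\text{slack inherited at }t_i) + 2B\epsilon\delta\lg B + B\epsilon\delta.
\end{align*}
With $B = \lfloor m^{1/3}\rfloor$ and $\delta = \tau/m^{1/3}$, the last two terms evaluate to $O(\epsilon\tau\lg n)$, so it remains to control the inherited slack, which reduces to bounding $\hat{d}_{\DSReal}^{t_i}(v) - d^{t_i}(s,v)$ for the relevant vertices $v$ on $\pi_{s,x}$.

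For (ii), the random sampling in \textsc{GlobalFixingPhase} (\Cref{alg:glob_fix}) is used to re-establish the invariant at each $t_i$. Suppose immediately after the synchronization step within the fixing phase at $t_i$ some vertex $x$ satisfies $\hat{d}_{\DSBack}(x) > d^{\tHalfInText}(s,x) + 50\epsilon\tau\lg n$. By \Cref{inv:max_str_edge}, the excess is distributed over $\Omega(\epsilon m^{1/3}\lg n)$ edges along $\pi_{s,x}$, whose vertex endpoints populate $\Omega(\epsilon m^{1/3}\lg n)$ of the $O(m^{1/3}(1+\epsilon\lg n))$ candidate intervals of width $8\delta$. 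A uniform sample catches a bad interval with probability $\Omega(\epsilon)$, and the $\lceil 2000\log n/\epsilon\rceil$ independent samples catch at least one with probability $\geq 1 - 1/n^{10}$. The ensuing \textsc{PartialDijkstra} invocation on $\DSBack$ then reduces $\hat{d}_{\DSBack}^{t_i}(x)$ below the threshold. This reduction propagates to $\DSReal$ via the synchronization step of a subsequent global fixing phase, which is guaranteed to fire (either by the $B$-insertion trigger or the $\Phi_{\DSBack}$-decrease trigger) within a bounded number of phases. Adaptivity is handled because the random samples are confined to $\DSBack$, whose internal state is never exposed; the adversary's insertions depend only on $\hat{d}_{\DSReal}$, making them independent of the sampling randomness at each phase. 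A union bound over vertices, time steps, and $O(\log W)$ distance ranges amplifies the per-vertex failure probability to $<1/n^5$.

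The main obstacle is rigorously handling the interplay between $\DSReal$ and $\DSBack$ across phases, especially since fixing affects only $\DSBack$ while $\DSReal$ only benefits at the next synchronization. The Fix-Now-Propagate-Later mechanism (illustrated in Figure~\ref{fig:rand_fix}) must be formalized: an inserted edge connecting a fixed segment to the next must trigger cascading propagations in $\DSReal$ via the shared \textsc{insert} calls, so that $\DSReal$ eventually benefits even before the next synchronization. A careful slack-tracking argument handling the asymmetric evolution of the two data structures, the possibility of back-to-back fixing phases triggered by the potential rule, and the delicate correspondence between sampled intervals and segment endpoints will be required to complete the proof.
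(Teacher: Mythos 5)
Your high-level shape---a deterministic within-phase slack bound of $O(\epsilon\delta B\lg B)$ via synchronized propagation, combined with a random-sampling argument tied to \textsc{GlobalFixingPhase}---matches the paper's. However, the proposal has a genuine gap in step (ii), and this gap is precisely the ``path with gaps'' difficulty the paper highlights as the crux of the randomized analysis.

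Your step (ii) argues: \emph{if} at time $t_i$ (after synchronization) some vertex $x$ already has error $> 50\epsilon\tau\lg n$ in $\DSBack$, then the sampled intervals hit a bad region on $\pi_{s,x}$ with high probability and \textsc{PartialDijkstra} reduces $\hat{d}_{\DSBack}^{t_i}(x)$. But this only addresses the easy case. The hard case is when $\hat{d}_{\DSBack}^{t_i}(x)$ is still within bounds at $t_i$, yet the segments that will later form a high-error $s$--$x$ path at $t_i + b$ each carry only $O(\epsilon\delta\lg n)$ error individually, and the connecting edges have not yet been inserted. Your argument has nothing to catch this, and it is exactly the scenario depicted in Figure~\ref{fig:rand_fix}. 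The paper handles it by conditioning on the randomness at $\tHalfInText$, defining an auxiliary \emph{deterministically-evolving} analysis structure $\DSFront$, and picking the \emph{minimal} future time $t+b$ at which $\DSFront$ would exceed the error bound. It then argues backward from $t+b$ to $\tHalfInText$ (Lemmas~\ref{lem:qual_dec_tight}--\ref{lem:error_stays_t_tilde}) to locate \emph{tense vertices} on the surviving segments at time $\tHalfInText$ whose combined count is $\Omega(m^{1/3}\lg n)$ (Lemmas~\ref{lem:many_tense_vert}, \ref{lem:tense_vertices_not_fixed}). Hitting any one of them causes a cascade \emph{inside $\DSBack$} as the new edges are inserted (Lemmas~\ref{lem:hit}, \ref{lem:segment_pulled}, \ref{lem:pullback}), resulting in $\hat{d}_{\DSBack}^{t+b}(x) \leq \hat{d}_{\DSFront}^{t+b}(x) - \epsilon\tau$, which by \Cref{lem:pot_dec} forces $\Phi_{\DSBack}$ to drop by $m^{1/3}\tau/4\epsilon$ and hence $t_{i+1}\leq t+b$, exactly the timing the lemma needs.

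Two further points where the proposal goes wrong. First, you write that the cascading propagation happens ``in $\DSReal$ via the shared \textsc{insert} calls.'' It does not: $\DSReal$ was not fixed at $\tHalfInText$, so in $\DSReal$ the newly inserted connecting edges do not carry the extra $\epsilon\delta$ error needed to trigger propagation. The cascade is confined to $\DSBack$, and $\DSReal$ only ever benefits \emph{after} the potential-triggered rebuild and synchronization. Second, your claim that the fix ``propagates to $\DSReal$ ... within a bounded number of phases'' is too weak: the lemma requires $t_{i+1}\leq t_i+b$ for the \emph{specific} $b$ under consideration, and your proposal lacks the potential-decrease mechanism that provides this timing guarantee. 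Without the tense-vertex/tense-segment machinery and the potential argument, the ``delicate correspondence'' you defer to the end is in fact the entire content of the proof.

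Your treatment of adaptivity (hiding $\DSBack$'s random bits, so the adversary effectively acts obliviously with respect to them) is in the right spirit, and the high-level use of Markov plus repetition to get $1/n^{\Omega(1)}$ failure probability is consistent with the paper's Chernoff-style bound at the end.
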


From this, it is simple to conclude the correctness of our algorithm. 

\begin{lemma}
With high probability $1 - O(1/n)$ 
\begin{align*}
    \hat{d}_{\DSReal_{[\tau, 2\tau)}}(x) <  d(s,x) + 100 \epsilon \tau \lg n \leq (1 + 100\epsilon \lg n) d(s,x) 
\end{align*}
holds for all paths so that $d(s,x) \in [\tau, 2\tau)$ after fully processing any insertion against an adaptive adversary.
\label{lem:correctness}
\end{lemma}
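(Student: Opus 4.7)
The plan is to convert the per-vertex, per-time-step failure bound of \Cref{lem:total_error_prob_start} into a uniform high-probability statement via a standard union bound. First I would fix the data structure $\DSReal_{[\tau, 2\tau)}$ responsible for the range $[\tau, 2\tau)$. For each vertex $x \in V$ and each of the at most $m+1$ moments immediately after processing an insertion, the lemma asserts that if $d^{t_i+b}(s,x) \in [\tau, 2\tau)$ and $d^{t_i+b}(s,x) \geq 2m^{1/3}$, then the probability that the additive error of $\hat{d}^{t_i+b}_{\DSReal_{[\tau,2\tau)}}(x)$ exceeds $100 \epsilon \tau \lg n$ is at most $1/n^{5}$, even against an adaptive adversary.

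Next I would take a union bound over the at most $n$ vertices and the at most $m+1 \leq n^{2}$ insertion moments, bounding the total failure probability by $O(nm/n^{5}) = O(1/n^{2})$, which is $O(1/n)$ as required. Since the stored distance estimates change only while an insertion is being processed, queries issued at any intermediate moment are covered by the same bound; moreover, since the statement is quantified per fixed $\tau$, no extra $O(\log W)$ factor from union-bounding across distance ranges enters.

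The only hypothesis of \Cref{lem:total_error_prob_start} that is not automatic from $d(s,x) \in [\tau, 2\tau)$ is $d(s,x) \geq 2m^{1/3}$, which can fail only in the smallest maintained range $\tau = m^{1/3}$. I would handle this boundary case by invoking the auxiliary standard ES-tree (\Cref{thm:es}) that the algorithm description maintains up to hop $2m^{1/3}$; it deterministically yields a $(1+\epsilon)$-approximation on such short paths. Since every answer returned is the minimum over all maintained data structures, and $(1+\epsilon) \leq (1 + 100 \epsilon \lg n)$ for $n \geq 2$, these short paths satisfy the claim deterministically.

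I do not expect a serious obstacle here: the real technical content has already been carried out in \Cref{lem:total_error_prob_start}. The concluding multiplicative form $d(s,x) + 100\epsilon \tau \lg n \leq (1 + 100\epsilon \lg n) d(s,x)$ is immediate from $d(s,x) \geq \tau$. The only mild care is verifying the boundary case $\tau = m^{1/3}$ via the ES-tree and checking that the adaptive-adversary guarantee is preserved under the union bound, which it is because each individual event's $1/n^5$ bound already accounts for adaptivity.
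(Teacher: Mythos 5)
Your proposal is correct and follows essentially the same route as the paper: a union bound over the at most $m \le n^2$ insertion moments and at most $n-1$ vertices, each event failing with probability $\le 1/n^5$ by \Cref{lem:total_error_prob_start}, plus the deterministic ES-tree covering the short-path boundary case $d(s,x) < 2m^{1/3}$. The extra quantitative bookkeeping and the explicit remark about the multiplicative form are fine but add nothing beyond what the paper's one-paragraph proof already asserts.
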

\begin{proof}
There are at most $m \leq n^2$ insertions, and after each insertion $n - 1$ paths could have to much error. The result follows from \Cref{lem:total_error_prob_start} by union bound, since we maintain distances shorter than $2m^{1/3}$ via a standard ES-tree (\Cref{thm:es}).
\end{proof}

Combined with the runtime guarantee, we conclude our main theorem. 

\thmrand*
\begin{proof}
Follows from \Cref{lem:rt_rand} and \Cref{lem:correctness} after scaling $\epsilon$. To answer queries in constant time, we store an extra variable per vertex maintaining its minimum observed distance estimate among all data structures. Storing the vertex that caused the last decrease for every vertex in every data structure enables us to maintain an approximate shortest path tree $T$, which allows path reporting in time $O(|\pi_{s,x}|)$.
\end{proof}

\paragraph{Proof of \Cref{lem:total_error_prob_start}.} We denote throughout the rest of the proof the time $t_i$ specified in the lemma by $t$, and simply write $\DSReal$ instead of $\DSReal_{[\tau, 2\tau)}$ to prevent clutter. To make our case, we also need to refer to a special moment in time whilst processing the $t$-th iteration. Namely, we let $\tHalfInText$ denote the time before the random propagation on $\DSBack$ during the last global fixing phase whilst processing insertion $t$ happened. Before we give our proof in full generality, let us initially make the following two strong, simplifying assumptions: 
\begin{itemize}
    \item \underline{An Oblivious Adversary:} In particular, we want to use in the proof the key property of an oblivious adversary that the adversary has to fix the update sequence before the algorithm is initialized. 
    \item \underline{Distance Estimates are reasonable after time step $\tHalfInText$:} We assume for every $x \in V$ so that $d^{\tHalf}(s,x) \in [\tau, 2\tau)$, 
    \[
    \hat{d}_{\DSReal}^{\tHalf}(x) <  d^{\tHalf}(s,x) + 100 \epsilon \tau \lg n.
    \] 
\end{itemize}

Given these two assumptions, the crux to the rest of the analysis is to condition on the random bits evaluated at time $\tHalfInText$. We then define a new data structure $\DSFront$ \emph{for the purpose of the analysis only}. We define $\DSFront$ to have the state of $\DSReal$ at time $\tHalfInText$ and thereafter we simulate our deterministic algorithm from \Cref{sec:detAlgorithm} \emph{with no rebuilds}. Note that $\DSFront$ and $\DSReal$ are in identical states up until the next global fixing phase after time $t$ occurs, i.e. until time $t_{i+1}$. 

Next, observe that by conditioning on the randomness at time $\tHalfInText$, by the fact that $\DSFront$ is updated by a deterministic algorithm thereafter, and since we assume an oblivious adversary which has to fix its update sequence in advance, the states of $\DSFront$ until the rest of the algorithms are determined at this point in time.

This allows us to find a time step after $t$, say the time step $t+b$ such that we have
\begin{align*}
    \hat{d}_{\DSFront}^{t + b}(x) \geq  d^{t + b}(s,x) + 100 \epsilon \tau \lg n
\end{align*}
for a fixed vertex $x$ with $d^{t+b}(x) \in [\tau, 2\tau)$. If no such time step $t+b$ exists, then we can conclude that for the rest of the algorithm, distance estimates in $\DSFront$ are always within the correct bounds for vertex $x$. Since $\DSReal$ has the same distance estimates as $\DSFront$ until the first global fixing phase after time $t$, we have that the same is true for $\DSFront$ in the time frame $[t, t_{i+1})$, and we are done.

Let us therefore focus on the case where $t+b$ exists. For the rest of the proof, we fix $t+b$ to be the minimal time step where the inequality \eqref{eq:fundamentalError} is satisfied for vertex $x$. We also fix $\pi_{s,x}$ to be the shortest path from $s$ to $x$ at time $t+b$ in the graph $G$. 

We distinguish again by cases:
\begin{itemize}
    \item \underline{If $t+b \geq t+B$:} Since we always run a global fixing phase after $B$ time steps, and therefore $t+B \geq t_{i+1}$, the lemma follows immediately. 
    \item \underline{If $t+b < t+B$:} In this case, we use the following key lemma.
    
    \begin{lemma}\label{lma:keyLemmaPotentialReduction}
    Given $t + b < B$ exists, in the time frame $[\tHalf, t+b]$, the data structure $\DSBack$ decreased the potential $\Phi_{\DSBack}$ by at least $m^{1/3}\tau/4\epsilon$ with probability $\geq 1 - n^{-5}$.
    \end{lemma}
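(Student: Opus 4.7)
The plan is to proceed in three stages. The first stage uses the hypothesis on $\DSFront$ to deduce structural information about the shortest path $\pi_{s,x}$ at time $t+b$ relative to the state at time $\tHalfInText$. The second stage transfers this information to $\DSBack$ and identifies many ``good'' distance ranges whose sampling triggers significant propagation. The third stage combines the randomness of the $\lceil 2000\log n/\epsilon\rceil$ samples with a potential-drop accounting to conclude the claim with the required probability.

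For the first stage I would decompose $\pi_{s,x}$ at time $t+b$ into maximal subsegments that were already present at time $\tHalfInText$, together with the at most $b < B$ edges of $\pi_{s,x}$ inserted in the interval $(\tHalfInText, t+b]$. Since $\DSFront$ runs the deterministic algorithm of \Cref{sec:detAlgorithm} without rebuilds from $\tHalfInText$ onward, \Cref{lem:presentpath} applies to each existing subsegment and bounds its slack by $2\epsilon\delta\lg B$, while \Cref{inv:max_str_edge} bounds the contribution of each inserted edge to $\epsilon\delta$. Combined with the simplifying assumption that distances at time $\tHalfInText$ are reasonable, the hypothesis $\hat{d}_{\DSFront}^{t+b}(x) \geq d^{t+b}(s,x) + 100\epsilon\tau\lg n$ then forces many distance ranges $[i\delta,(i+8)\delta)$ in the sampling domain to intersect $\pi_{s,x}$ at vertices carrying nontrivial slack at time $\tHalfInText$.

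For the second stage I would observe that $\DSBack$ at time $\tHalfInText$, right after synchronization but before the random propagation, has the same distance estimates as $\DSFront$. Hence the distance ranges identified above are also ``good'' for sampling in $\DSBack$: for each such range, if a sample of this index is included in $V^\star$, the resulting call to $\textsc{PartialDijkstra}$ on $\DSBack$ lowers the estimate of the corresponding vertex on $\pi_{s,x}$, and this lowering cascades downstream along $\pi_{s,x}$ both immediately and across the subsequent insertions in $(\tHalfInText, t+b]$ by the ``fix now, propagate later'' mechanism illustrated in \Cref{fig:rand_fix}. A Chernoff bound over the $\lceil 2000\log n/\epsilon\rceil$ independent samples drawn from a domain of size $O(m^{1/3}(1+\epsilon\lg n))$ then shows that with probability at least $1 - n^{-5}$ sufficiently many good ranges are hit. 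Finally, a charging argument that attributes each lowered distance estimate in $\DSBack$ to a unique triggering sample converts the successful hits into an aggregate drop of $\Phi_{\DSBack}$ of at least $m^{1/3}\tau/(4\epsilon)$.

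The main obstacle is the final accounting step: cleanly lower-bounding the potential drop by $m^{1/3}\tau/(4\epsilon)$. Cascades from different successful samples may overlap (both on $\pi_{s,x}$ and on other paths), and the state of $\DSBack$ during $(\tHalfInText, t+b]$ evolves in a way that depends on our own propagations as insertions are processed, so tracking slacks in $\DSBack$ throughout this interval will require a careful refinement of the synchronized propagation analysis of \Cref{sec:detAlgorithm}, adapted to a data structure whose initial state at $\tHalfInText$ already differs from $\DSFront$. The oblivious adversary assumption should play a key role here: it fixes the path $\pi_{s,x}$ and the entire insertion schedule independently of the random samples drawn at time $\tHalfInText$, which is precisely what is needed for the Chernoff argument and the charging scheme to go through without being defeated by an adversary who could otherwise steer subsequent insertions away from regions where $\DSBack$ has been made lighter.
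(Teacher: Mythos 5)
Your high-level plan correctly identifies the shape of the argument (decompose $\pi_{s,x}$, use the identical state of $\DSFront$ and $\DSBack$ at time $\tHalf$, sample distance ranges, use Chernoff), and you honestly flag that the final accounting step is unresolved. But the mechanism you sketch for that step is not the one that works, and the paper's actual argument requires several structural ingredients your proposal does not have.

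First, the proposal's ``charging argument that attributes each lowered distance estimate to a unique triggering sample'' goes in the wrong direction. The paper does \emph{not} need multiple successful samples to each contribute a chunk of the potential drop. Instead, a \emph{single} successful hit of a ``tense vertex'' already forces $\hat{d}^{t+b}_{\DSBack}(x) \leq \hat{d}^{t+b}_{\DSFront}(x) - \epsilon\tau$ (\Cref{lem:tense_vertices_not_fixed}), and a separate argument (\Cref{lem:pot_dec}) then walks back along the predecessor chain of $x$, using \Cref{inv:max_str_edge}, to exhibit $\Theta(m^{1/3})$ distinct vertices each of whose $\DSBack$-estimate dropped by $\Omega(\epsilon\tau)$ since $\tHalf$. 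That is where the $m^{1/3}\tau/4\epsilon$ comes from. The randomness is used only to show the probability of at least one hit is $\geq 1 - n^{-5}$; overlapping cascades are not an issue because one hit suffices.

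Second, the statement that a sampled range ``intersects $\pi_{s,x}$ at vertices carrying nontrivial slack'' is too weak to guarantee propagation. What you need is the notion of a \emph{tense vertex} (\Cref{def:tense_vertex}): a vertex $v_j$ such that every later vertex $v$ on the segment with $\hat{d}^{\tHalf}_{\DSBack}(v) \geq \hat{d}^{\tHalf}_{\DSBack}(v_j) + 2\delta$ already satisfies $\hat{d}^{\tHalf}_{\DSBack}(v) \geq \hat{d}^{\tHalf}_{\DSBack}(v_j) + d_\sigma(v_j,v) + \epsilon\delta$. Only then does hitting the $[i\delta,(i+8)\delta)$ window around $v_j$ guarantee (\Cref{lem:hit}) that \textsc{PartialDijkstra} sweeps up the entire suffix and removes its slack. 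Establishing that a segment of slack $\mu$ contains $\Omega(\mu/\epsilon\delta)$ tense vertices that are $\epsilon\delta$-separated in distance estimate requires the whole chain of \Cref{def:pot_tense_vert}–\Cref{lem:many_tense_vert}, which your proposal has no analogue of.

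Third, the ``fix now, propagate later'' cascade through the inserted edges of $(\tHalf, t+b]$ is the content of \emph{tense segments} (\Cref{def:tense_segment}, \Cref{lem:pullback}); it is an induction over the segments of $\pi_{s,x}$ showing that the reduction at one hit point in $\DSBack$ is pushed forward to $x$ by time $t+b$ with only $O(\epsilon\delta\lg n)$ loss per segment. This is where your worry about $\DSBack$'s evolving state during $(\tHalf, t+b]$ is addressed; it is a genuine difficulty and is why the paper needs the $\lg$-granular, fixed-height slack lemmas (\Cref{lem:qual_dec_tight}, \Cref{lem:error_stays}, \Cref{lem:error_stays_t_tilde}) rather than \Cref{lem:presentpath}, which you cite. \Cref{lem:presentpath} only bounds slack, measured at the \emph{current} endpoint, and says nothing about where along the distance-estimate axis that slack was sitting at time $\tHalf$, which is exactly what matters for relating the sampled intervals to the path's state. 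Without the fixed-height $slack^t(\sigma,H)$ variant, the sampling probability argument does not go through.

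In short: the overall strategy is right, but you identified the gap without closing it, and the intended closure (one hit of a tense vertex causes $\hat{d}_{\DSBack}(x)$ to separate from $\hat{d}_{\DSFront}(x)$ by $\epsilon\tau$, which alone implies the potential drop via the predecessor-chain argument) is substantially different from the multi-sample charging scheme you propose.
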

    
    Given this lemma, it is clear that the potential reduction must have triggered a new global rebuilding phase, the latest at stage $t+b$. Therefore, $t_{i+1} \leq t+b$ which concludes the proof. 
\end{itemize}

Here, the final step omitted so far is to prove \Cref{lma:keyLemmaPotentialReduction}. This requires an extremely careful analysis that we present in the rest of this section.

\paragraph{Removing the Simplifying Assumptions from the Proof.} Before we start our analysis of the potential reduction, let us briefly address the simplifying assumptions we made, and argue that we do not in fact need them:
\begin{itemize}
    \item \underline{An Oblivious Adversary:} To see that our proof also works against an adaptive adversary, it only remains to observe that we report distances and paths exclusively based on the information in the data structure $\DSReal$. Thus, the adversary cannot use the query output to guess the random bits chosen in the last global rebuilding phase that are used to run the data structure $\DSBack$. Once $\DSBack$ reports the large potential decrease, we immediately run a new global fixing phase that selects new (hidden) random bits for $\DSBack$ after revealing information based on the old random bits to the adversary.
    \item \underline{Distance Estimates are reasonable after time step $\tHalfInText$:} We implicitly used at this point that at the end of stage $t$, the distances are reasonable, so that in our proof, if we enter the last case (where $t+b$ exists with $t+b < t + B$), we have that $b > 0$. The reason this was required in our proof so far is that the next Global Fixing Phase was per definition, the earliest at the stage $t+1$ since after time step $\tHalfInText$, we did not change $\DSReal$ at stage $t$ anymore. 
    
    However, the proof that for every $x \in V$ so that $d^{\tHalf}(s,x) \in [\tau, 2\tau)$, 
    \[
    \hat{d}_{\DSReal}^{\tHalf}(x) <  d^{\tHalf}(s,x) + 100 \epsilon \tau \lg n
    \] 
    is analogous to the proof we just discussed. The analysis then shows that, with high probability, the potential $\Phi_{\DSBack}$ got decreased by at least $m^{1/3}\tau/4\epsilon$ between $\tHalfInText$ and $t$, and thus $\tHalfInText$ does not refer to a moment within the last global fixing phase while processing $t$, since another global fixing phase is triggered by such a decrease. This is a contradiction.
\end{itemize}

\paragraph{Overview of the proof of the potential reduction.} 

Our proof of \Cref{lma:keyLemmaPotentialReduction} can be logically divided into four steps:

\begin{enumerate}
    \item \emph{The slack of the path: }Given that the path $\pi_{s,x}$ has high additive error in data structure $\DSFront$ at time $t + b$, not all that error can be contributed by the newly inserted edges. Therefore, segments already present at time $t$ must contribute a significant cumulative amount of error.
    \item \emph{The slack of a segment: }Since our synchronized propagation technique severely limits the build up of error on segments present throughout $b < B$ insertions, some of these segments must have already carried slack at time $\tHalfInText$, when the data structures $\DSFront$ and $\DSBack$ were equivalent. 
    \item \emph{Tense vertices: }During the remainder of the global fixing phase, one such segment looses a large part of its slack in data structure $\DSBack$.
    \item \emph{Tense segments: }This fixed segment renders it impossible to build the path $\pi_{s,x}$ with high error in $\DSBack$. But then a lot of vertices moved closer to the source in $\DSBack$ since time $\tHalfInText$, causing a large decrease in the potential $\Phi_{\DSBack}$. Such a potential decrease causes another global fixing phase, yielding $t_i + b \geq t_{i + 1}$.
\end{enumerate}

\begin{figure}[ht]
    \centering
    \includegraphics[width=10cm]{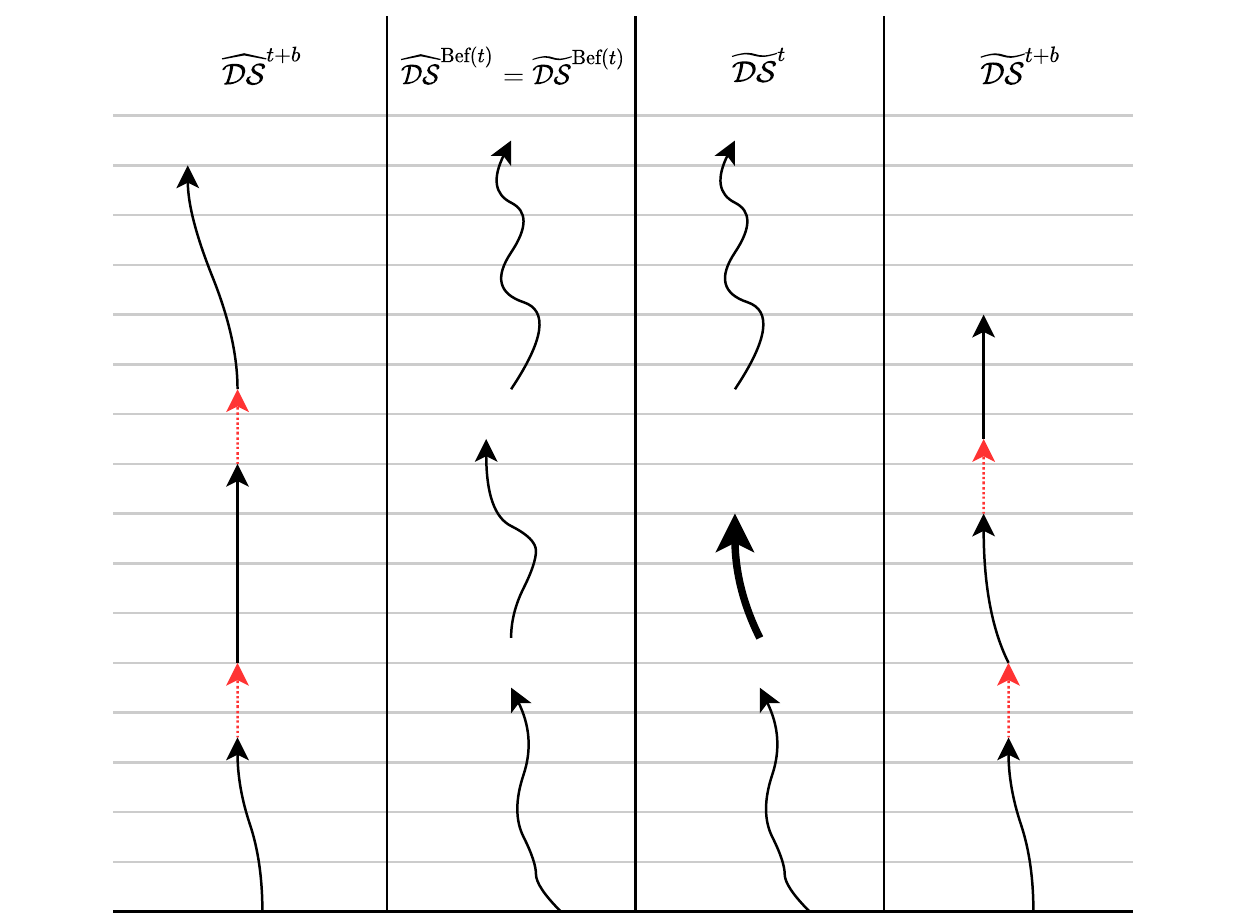}
    \caption{Given a high error path at time $t+b$ in data structure $\DSFront$, we show that the black segments already contributed error at time $\tHalfInText$, when $\DSFront$ and $\DSBack$ were last made identical. Then, our algorithm fixed one of these segments with high probability during the rest of the global fixing phase between $\tHalfInText$ and $t$ (depicted in bold). However, this only happens in $\DSBack$ while we do not implement this fix for $\DSFront$. In $\DSBack$, this leads to propagations eliminating most of the error on the remainder of the path when the dotted red edges are inserted. Finally, a path loosing this much error causes a drop in potential by at least $\epsilon m^{1/3}\tau/4$, and therefore $t_i + b \geq t_{i+1}$.}
    \label{fig:proof_overview}
\end{figure}

For an illustration of these steps, see Figure \ref{fig:proof_overview}. Due to some technicalities, it is simpler to present these steps in the following order: 2, 3, 4, 1. Since we want to show that the potential $\Phi_{\DSBack}$ gets decreased by at least $m^{1/3}\tau/4\epsilon$ in the time frame $[\{\tHalf\}, t+b]$, we assume no global fixing phases happened in this time frame. Otherwise we are done since $t + b < t + B$ and thus such a global fixing phase can only be caused by a $m^{1/3}\tau/4\epsilon$ decrease in potential. 

\paragraph{The slack of a segment.}

In this paragraph, we argue that a segment with large slack at time $t + b$ already carried some slack at time $t$, where we measure the slack with respect to some fixed distance estimate level $H$, tackling the challenge of segments moving around. Notice that a segment cannot move away from the source in distance estimate over time, since distance estimates only ever decrease. Therefore, when we want to argue about a segment's state at a previous moment, its vertices can only have had higher distance estimates. This means that finding an upper bound for the position at some previous moment suffices. First, we tighten and rephrase \Cref{lem:qual_dec}, making the accumulation of error precise. Note that we now state it from the perspective of going backwards in time: we presume something about a later state, and show a statement about a previous state. 

Since we assume no global fixing phase happened between $t$ and $t + b$, the data structures $\DSFront$ and $\DSBack$ are identical in their behaviour during this time interval, and the following arguments are equivalent for both. 

\begin{lemma}
Consider the $i$-th insertion during an insertion phase that started at time $t$, along with the associated values $j$ and $k$ computed by our algorithm. If a path segment $\sigma = v_1, ..., v_l$, present throughout the phase, has $slack_{\DSFront}^{t + t_2}(\pi, H) \geq \mu$ after insertion $t_2 = k 2^j = i$ of the phase for $\mu \geq 2\epsilon\delta$ and $H \leq \hat{d}_{\DSFront}^{t+t_2}(v_l)$, it had $slack_{\DSFront}^{t + t_1}(\pi,H) \geq \mu - 2\epsilon \delta$ after insertion $t_1 = (k - 1)2^j$ of the phase. This statement equivalently holds for data structure $\DSBack$.
\label{lem:qual_dec_tight}
\end{lemma}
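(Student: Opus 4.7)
The plan is to adapt the argument of \Cref{lem:qual_dec} to the fixed-height slack, running it forward-to-backward in time. Starting from the hypothesis $slack_{\DSFront}^{t+t_2}(\sigma, H) \geq \mu$, I fix a witness vertex $v_p$ on $\sigma$ so that $\hat{d}_{\DSFront}^{t+t_2}(v_p) \leq H - d_\sigma(v_p, v_l) - \mu$, and I aim to exhibit a vertex on $\sigma$ whose estimate at time $t + t_1$ certifies $slack_{\DSFront}^{t+t_1}(\sigma, H) \geq \mu - 2\epsilon\delta$. Both data structures run the identical update procedure, so the same argument applies verbatim to $\DSBack$ and I focus on $\DSFront$.

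I split into two cases based on how much $v_p$'s estimate decreased over the interval $[t+t_1, t+t_2]$. If $\hat{d}_{\DSFront}^{t+t_1}(v_p) \leq H - d_\sigma(v_p, v_l) - \mu + 2\epsilon\delta$, then $v_p$ itself witnesses the required slack at $t + t_1$ and I am done. Otherwise the decrease exceeds $2\epsilon\delta$, so $v_p$ crossed at least one $\epsilon\delta$-bucket since insertion $(k-1)2^j$ and therefore lies in $V_{input}$ during step $i = k \cdot 2^j$, and in particular in $V_{fixed} = V_{input} \cup V_{touched}$. Let $v_p, v_{p+1}, \ldots, v_q$ be the maximal subsegment of $\sigma$ starting at $v_p$ with all vertices in $V_{fixed}$. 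By \Cref{lem:partialdijkstra} applied along this subsegment, $\hat{d}_{\DSFront}^{t+t_2}(v_q) \leq \hat{d}_{\DSFront}^{t+t_2}(v_p) + d_\sigma(v_p, v_q) \leq H - d_\sigma(v_q, v_l) - \mu$.

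If $v_q = v_l$, the bound above specializes to $\hat{d}_{\DSFront}^{t+t_2}(v_l) \leq H - \mu < H$, contradicting the hypothesis $H \leq \hat{d}_{\DSFront}^{t+t_2}(v_l)$ since $\mu \geq 2\epsilon\delta > 0$. Hence $v_{q+1}$ must exist with $v_{q+1} \notin V_{fixed}$. Applying \Cref{inv:max_str_edge} to the edge $(v_q, v_{q+1})$ yields $\hat{d}_{\DSFront}^{t+t_2}(v_{q+1}) \leq H - d_\sigma(v_{q+1}, v_l) - \mu + \epsilon\delta$. Finally, since $v_{q+1}$ is in neither $V_{input}$ nor $V_{touched}$, its bucket index $\lceil \hat{d}_{\DSFront}(v_{q+1})/\epsilon\delta \rceil$ did not change over $[t+t_1, t+t_2]$; combined with the fact that estimates only decrease, this forces $\hat{d}_{\DSFront}^{t+t_1}(v_{q+1}) < \hat{d}_{\DSFront}^{t+t_2}(v_{q+1}) + \epsilon\delta$, so that $v_{q+1}$ witnesses $slack_{\DSFront}^{t+t_1}(\sigma, H) \geq \mu - 2\epsilon\delta$, as required.

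The main obstacle is the last step: cleanly connecting $v_{q+1} \notin V_{fixed}$ to the claim that $\hat{d}(v_{q+1})$ moved by strictly less than one $\epsilon\delta$ over the whole interval. Two different mechanisms can move estimates during that interval, namely the \textsc{Insert} handler (whose bucket-crossing decrease triggers inclusion in $V_{input}$) and the subsequent \textsc{PartialDijkstra} call (whose bucket-crossing decrease triggers inclusion in $V_{touched}$ via line~\ref{alg:line:touched}). Ruling out both memberships is what keeps the rounded bucket index invariant and pins down the $\epsilon\delta$ slack precisely, so careful bookkeeping of these two sources of change is the only delicate part of the proof.
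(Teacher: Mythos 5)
Your proof is correct and takes essentially the same route as the paper: pick a witness vertex, take the maximal prefix $v_p,\dots,v_q$ of the suffix lying in $V_{fixed}$, use \Cref{lem:partialdijkstra} along it, \Cref{inv:max_str_edge} across the edge $(v_q,v_{q+1})$, and the bucket-invariance of $v_{q+1}\notin V_{fixed}$ to conclude. Two small points where you are actually a bit more careful than the paper: (i) the paper takes the maximal $V_{fixed}$-sub-segment starting at the witness without first arguing the witness belongs to $V_{fixed}$, whereas your opening case split (small decrease $\Rightarrow$ witness at $t+t_1$ still works; large decrease $\Rightarrow$ a bucket was crossed, so the witness is in $V_{fixed}$) makes this explicit and also handles the degenerate empty-prefix case cleanly; (ii) the paper justifies the last inequality by ``$v_{q+1}\notin V_{touched}$,'' while you correctly note that one needs $v_{q+1}\notin V_{input}\cup V_{touched}$ so that the rounded bucket index $\lceil\hat{d}(v_{q+1})/\epsilon\delta\rceil$ is unchanged over the whole interval $(t+t_1,t+t_2]$ (indeed $v_{q+1}\notin V_{fixed}$ gives exactly this, as the else-branch of \textsc{PartialDijkstra} never crosses a bucket).
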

\begin{proof}
Let $v_j$ be a vertex that witnesses the slack $\mu$ at time $t + t_2$, i.e. a vertex such that we have
\begin{align*}
    \mu \leq  H - \hat{d}_{\DSFront}^{t + t_2}(v_j) - d_\sigma(v_j,v_l)
\end{align*}
and in particular, using $H \leq \hat{d}_{\DSFront}^{t+t_2}(v_l)$, 
\begin{align*}
    \mu \leq \hat{d}_{\DSFront}^{t+t_2}(v_l) - \hat{d}_{\DSFront}^{t + t_2}(v_j) - d_\sigma(v_j,v_l).
\end{align*}
Consider the set $V_{fixed} = V_{touched} \cup V_{input}$ after the last call to \textsc{PartialDijkstra}($V_{input}$, $\epsilon$). Let $v_j, ..., v_q$ be a maximal sub-segment starting at $v_j$ such that all its vertices are elements of $V_{fixed}$. We first observe that $v_q  \neq v_l$, as otherwise 
\begin{align*}
    2\epsilon \delta \leq \mu \leq \hat{d}_{\DSFront}^{t+t_2}(v_l) - \hat{d}_{\DSFront}^{t + t_2}(v_j) - d_\sigma(v_j,v_l) \leq 0
\end{align*}
would hold by \Cref{lem:partialdijkstra} which is a contradiction. We calculate
\begin{align*}
    \mu &\leq H - \hat{d}_{\DSFront}^{t + t_2}(v_j) - d_\sigma(v_j,v_l) \\
        &\leq H - \hat{d}_{\DSFront}^{t + t_2}(v_q) - d_\sigma(v_q,v_l) \\
        &\leq H - \hat{d}_{\DSFront}^{t + t_2}(v_{q+1}) - d_\sigma(v_{q+1},v_l) + \epsilon \delta\\
        &\leq H - \hat{d}_{\DSFront}^{t + t_1}(v_{q+1}) - d_\sigma(v_{q+1},v_l) + 2\epsilon \delta
\end{align*}
where the second inequality follows from \Cref{lem:partialdijkstra}, the third by \Cref{inv:max_str_edge}, and the last by $v_{k+1} \notin V_{touched}$.
\end{proof}

Next we apply the previous lemma to a segment with high slack, and show that some of that slack was already present before the global fixing phase. Importantly, the slack is measured with respect to $H = \hat{d}_{\DSFront}^{t + b}(v_l)$ throughout, fixing its position in distance estimate. See  Figure \ref{fig:error_stays} for a depiction of our next lemma.

\begin{figure}[htp]
    \centering
    \includegraphics[width=5cm]{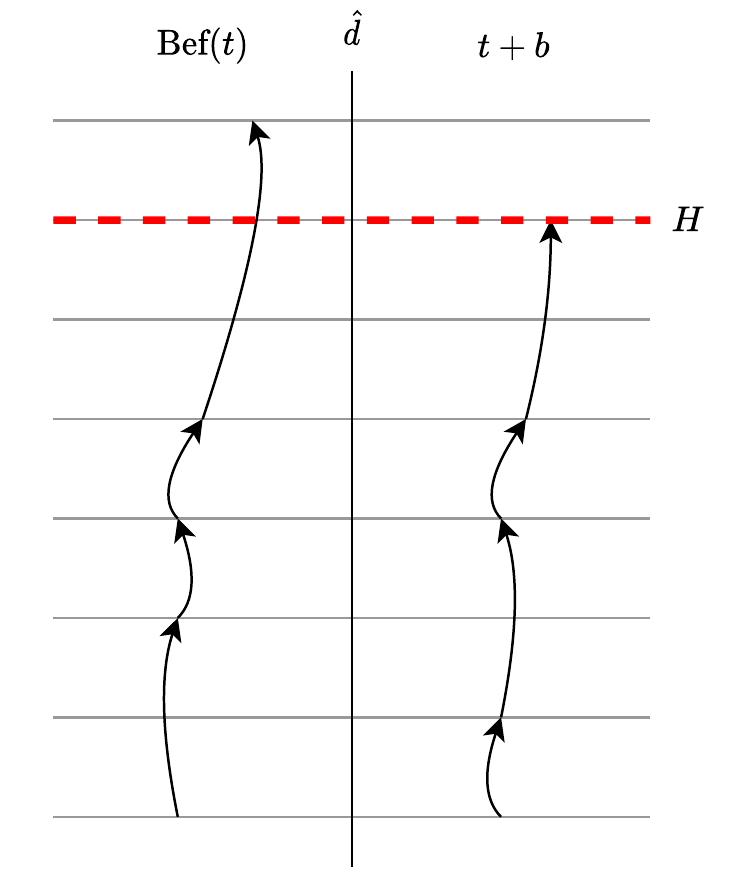}
    \caption{Illustration of \Cref{lem:error_stays} and \Cref{lem:error_stays_t_tilde}. A segment with large slack with respect to the distance estimate of the last vertex, denoted as $H$, had additive error in the same distance range at time $\tHalfInText$. It is crucial for our analysis that the segment did not only have high slack at time $\tHalfInText$, but this slack was built up by additive error below the distance estimate level $H$.}
    \label{fig:error_stays}
\end{figure}

\begin{lemma}
Consider a path segment $\sigma = v_1, ..., v_l$ already present during the last global fixing phase at time $t$, such that after $b$ insertions in the current phase $slack_{\DSFront}^{t+b}(\sigma) \geq \mu$ for $\mu \geq 3 \epsilon \delta \lg n$. We have $slack_{\DSFront}^{t}(\sigma, \hat{d}_{\DSFront}^{t + b}(v_l)) \geq \mu - 2  \epsilon \delta \lg n$. This statement equivalently holds for data structure $\DSBack$.
\label{lem:error_stays}
\end{lemma}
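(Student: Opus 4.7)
The plan is to set $H = \hat{d}_{\DSFront}^{t+b}(v_l)$ and iterate \Cref{lem:qual_dec_tight} backwards through the $b$ insertions of the current phase, one batch at a time, losing at most $2\epsilon\delta$ of slack per batch. Note that by definition $slack_{\DSFront}^{t+b}(\sigma) = slack_{\DSFront}^{t+b}(\sigma, H)$ for this choice of $H$, so the starting hypothesis is already phrased against a fixed height. Freezing $H$ at its $t+b$ value (rather than using the moving $\hat{d}(v_l)$) is exactly why \Cref{lem:qual_dec_tight} is stated for slack against a fixed height, and this is the main bookkeeping point to get right.

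First I would partition the $b$ insertions into consecutive batches exactly as in the proof of \Cref{lem:presentpath}: the sizes are the powers of two from the binary expansion of $b$, and for the last insertion of each batch the algorithm's computed values $(j,k)$ give $(k-1)2^j = $ start of batch and $k2^j = $ end of batch, so \Cref{lem:qual_dec_tight} applies with $t_1$ and $t_2$ equal to the batch endpoints. Since $b < B \leq m^{1/3} \leq n^{2/3}$, the number of batches is at most $\lfloor \lg b \rfloor + 1 \leq \lg n$ for $n$ larger than a small constant.

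Next I would walk backwards from $t+b$ to $t$ batch by batch, invoking \Cref{lem:qual_dec_tight} at each boundary. The hypothesis $H \leq \hat{d}_{\DSFront}^{t+t_2}(v_l)$ is automatic at every intermediate stage, because distance estimates are non-increasing in time and $t_2 \leq b$. The other hypothesis of the lemma requires the slack at the later endpoint to be at least $2\epsilon\delta$; since slack can be peeled off by at most $2\epsilon\delta$ per batch and we start at $\mu \geq 3\epsilon\delta \lg n$ with at most $\lg n$ batches, the slack remains at least $\mu - 2\epsilon\delta(\lg n - 1) \geq \epsilon\delta \lg n + 2\epsilon\delta \geq 2\epsilon\delta$ throughout, so the chain of applications is valid.

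Telescoping the per-batch loss of $2\epsilon\delta$ over at most $\lg n$ batches yields $slack_{\DSFront}^{t}(\sigma, H) \geq \mu - 2\epsilon\delta \lg n$, as required. The identical argument goes through for $\DSBack$ with $H = \hat{d}_{\DSBack}^{t+b}(v_l)$, since \Cref{lem:qual_dec_tight} applies verbatim to $\DSBack$ and, under the standing assumption that no global fixing phase occurs in $[t,t+b]$, both data structures evolve by the same rules on their own in this interval. No step looks genuinely hard; the only real obstacle is the correct choice of $H$ and the verification that the slack never drops below the $2\epsilon\delta$ threshold before we reach time $t$, both of which are taken care of by the arithmetic above.
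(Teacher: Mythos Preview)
Your proposal is correct and follows essentially the same approach as the paper: decompose the $b$ insertions into at most $\lg n$ batches via the binary expansion and iterate \Cref{lem:qual_dec_tight} backwards with $H = \hat{d}_{\DSFront}^{t+b}(v_l)$, losing $2\epsilon\delta$ per batch. You are in fact more careful than the paper in explicitly verifying both hypotheses of \Cref{lem:qual_dec_tight} at each step (monotonicity of $\hat{d}(v_l)$ for the $H$-condition, and the running slack staying above $2\epsilon\delta$), which the paper leaves implicit.
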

\begin{proof}
Let us segment the $b$ insertions into consecutive, disjoint batches. We let $b_1 = 2^{\floor{\lg b}}$, and $b_i = 2^{\floor{\lg (b - \sum_{j < i} b_j)}}$, denote the number of insertions belonging to the $i$-th batch. Clearly, there are at most $\lg B \leq \lg n$ batches. We work our way from the last batch to the first, and iteratively use \Cref{lem:qual_dec_tight}. We obtain that, after the $i$-th batch, we must have had
\begin{align*}
    slack_{\DSFront}^{t + t_i}(\sigma, \hat{d}_{\DSFront}^{t + b}(v_l)) \geq \mu - 2(p - i) \epsilon \delta
\end{align*}
for $t_i = b - \sum_{j = 1}^i b_i$. The lemma follows from $b \leq m^{1/3}$.
\end{proof}

We conclude this paragraph by giving a version of the previous lemma that only applies to $\DSBack$, and relates slack at time $t + b$ to slack at time $\tHalfInText$. 

\begin{lemma}
Consider a path segment $\sigma = v_1, ..., v_l$ already present during the last global fixing phase at time $t$, such that after $b$ insertions in the current phase $slack^{t+b}_{\DSBack}(\sigma) \geq \mu$ for $\mu \geq 3 \epsilon \delta \lg n$. We have $slack^{\tHalf}_{\DSBack}(\sigma, \hat{d}_{\DSBack}^{t + b}(v_l)) \geq \mu - 3  \epsilon \delta \lg n$.
\label{lem:error_stays_t_tilde}
\end{lemma}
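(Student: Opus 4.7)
The plan is to split the backward-in-time analysis into two pieces: the interval $[t, t+b]$, which is handled by the already-established \Cref{lem:error_stays}, and the single extra step over $[\tHalfInText, t]$, which corresponds to the random propagation during the last global fixing phase.

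First, I invoke the $\DSBack$ version of \Cref{lem:error_stays} on the segment $\sigma$ for the $b$ insertions following time $t$. The hypothesis $\mu \geq 3\epsilon\delta\lg n$ is exactly what that lemma requires, so I obtain
\[
    slack^{t}_{\DSBack}(\sigma, \hat{d}^{t+b}_{\DSBack}(v_l)) \geq \mu - 2\epsilon\delta\lg n.
\]
This absorbs in one shot all insertion-driven \textsc{PartialDijkstra} calls in $[t, t+b]$, reducing the task to a single backward step over $[\tHalfInText, t]$.

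Next, I bridge $[\tHalfInText, t]$. By the definition of $\tHalfInText$ (the moment right after the synchronization step but right before the random propagation of the last global fixing phase at time $t$) and the standing assumption that no further global fixing phase occurs in $[\tHalfInText, t+b]$, the only modification to $\DSBack$ in this interval is the single random $\textsc{PartialDijkstra}(V^\star, \epsilon)$ call. Re-reading the proof of \Cref{lem:qual_dec_tight}, its ingredients are: the existence of one \textsc{PartialDijkstra} call, the resulting set $V_{fixed} = V_{touched} \cup V_{input}$, and \Cref{inv:max_str_edge} together with \Cref{lem:partialdijkstra}; the origin of $V_{input}$ (insertion batch versus random sampling) is irrelevant. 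Applying this one-step backward argument with height $H = \hat{d}^{t+b}_{\DSBack}(v_l)$ gives
\[
    slack^{\tHalf}_{\DSBack}(\sigma, \hat{d}^{t+b}_{\DSBack}(v_l)) \geq (\mu - 2\epsilon\delta\lg n) - 2\epsilon\delta.
\]
The side condition $\mu - 2\epsilon\delta\lg n \geq 2\epsilon\delta$ needed for the backward step holds since $\mu \geq 3\epsilon\delta\lg n$ and $\lg n \geq 2$ (the case $n \leq 3$ is trivial). Combining the two inequalities and bounding $2\epsilon\delta \leq \epsilon\delta\lg n$ yields the claimed
\[
    slack^{\tHalf}_{\DSBack}(\sigma, \hat{d}^{t+b}_{\DSBack}(v_l)) \geq \mu - 3\epsilon\delta\lg n.
\]

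I expect the only non-mechanical piece to be the reinterpretation of \Cref{lem:qual_dec_tight} for the random propagation call. That lemma is stated in terms of the batching indices $j, k$ produced by \textsc{insert}, so one has to verify explicitly that its proof depends only on the abstract one-call structure and on \Cref{inv:max_str_edge}, \Cref{lem:partialdijkstra}, all of which remain valid for the random \textsc{PartialDijkstra} inside a global fixing phase. Once this is confirmed, chaining with \Cref{lem:error_stays} and absorbing the constant into the $\lg n$ factor completes the proof.
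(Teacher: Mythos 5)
Your proof is correct and follows the same decomposition as the paper: apply \Cref{lem:error_stays} for the $b$ insertion steps in $[t, t+b]$, then bound the single \textsc{PartialDijkstra} call during $[\tHalf, t]$ via the one-step backward argument from \Cref{lem:qual_dec_tight}. The only cosmetic difference is that the paper notes the fixing-phase step contributes at most $\epsilon\delta$ (not $2\epsilon\delta$) because no edge insertion occurs during that step, so a vertex outside $V_{\text{fixed}}$ is not modified at all; your slightly looser constant still lands within the claimed $3\epsilon\delta\lg n$ bound once $\lg n \geq 2$.
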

\begin{proof}
By \Cref{lem:error_stays} we have
\begin{align*}
    slack^{t}_{\DSBack}(\sigma, \hat{d}_{\DSBack}^{t + b}(v_l)) \geq \mu - 2  \epsilon \delta \lg n.
\end{align*}
Since the global fixing phase between $\tHalfInText$ and $t$ just consists of a single call to \textsc{PartialDijkstra}, it adds at most $\epsilon\delta$ error. This follows directly from  \Cref{lem:partialdijkstra} and \Cref{inv:max_str_edge}, using the same argument as in the proof of \Cref{lem:qual_dec_tight}. Therefore, we have 
\begin{align*}
    slack^{\tHalf}_{\DSBack}(\sigma, \hat{d}^{t + b}(v_l)) \geq \mu - 2  \epsilon \delta \lg n - \epsilon \delta.
\end{align*}
\end{proof}

\paragraph{Tense vertices.} In this paragraph we argue that segments with high slack in $\DSBack$ are likely to be (partially) fixed by a global fixing phase, and the more slack they have, the more likely it gets.

We introduce the crucial concept of a $tense$ vertex. Informally speaking, a tense vertex is a vertex where a lot of error gets accumulated in a small forward neighbourhood on the path, and fixing this error causes the rest of the path to snap back by inducing a propagation all the way to the end of the path segment.

\begin{definition}[Tense Vertex]
For a segment $\sigma = v_1, ..., v_l$, a vertex $v_j$ is called \emph{tense}, if for all $v \in \suffix(\sigma, v_j)$, such that $\hat{d}^{\tHalf}_{\DSBack}(v) \geq \hat{d}^{\tHalf}_{\DSBack}(v_j) + 2\delta$, we have 
\begin{align*}
    \hat{d}^{\tHalf}_{\DSBack}(v) \geq \hat{d}^{\tHalf}_{\DSBack}(v_j) + d_{\sigma}(v_j, v) + \epsilon \delta.
\end{align*}
A tense vertex $v_j$ is \emph{hit}, if 
\begin{align*}
    \{v \in V: \hat{d}^{\tHalf}_{\DSBack}(v) \in [\hat{d}^{\tHalf}_{\DSBack}(v_j), \hat{d}^{\tHalf}_{\DSBack}(v_j) + 2\delta] \} \subseteq \bigcup_{j = 1}^{2000 \log n/\epsilon} V^\star_j.
\end{align*}
where the set $\bigcup_{j = 1}^{2000 \log n/\epsilon} V^\star_j$ is sampled during the last global fixing phase between time $\tHalfInText$ and $t$.
\label{def:tense_vertex}
\end{definition}

We first show that hitting a tense vertex eliminates the slack on the remainder of the segment. 

\begin{lemma}
Let $v_j$ be a tense vertex on a segment $\sigma = v_1, ..., v_l$ at moment $\tHalfInText$. If it gets hit between $\tHalfInText$ and $t$, we have $slack^{t}_{\DSBack}(\sigma') \leq 0$ for $\sigma' = \suffix(\sigma, v_j)$.
\label{lem:hit}
\end{lemma}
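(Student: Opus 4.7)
The plan is to identify the set $V_{fixed} = V^{\star} \cup V_{touched}$ produced by the last call to $\textsc{PartialDijkstra}(V^{\star},\epsilon)$ of the global fixing phase preceding $t$, and to show that every vertex of the suffix $\sigma' = v_j,\ldots,v_l$ lies in $V_{fixed}$ at the end of that call. Once this is established, applying \Cref{lem:partialdijkstra} to each sub-segment $v_p,\ldots,v_l$ for $j \leq p \leq l$ yields $\hat{d}^{t}_{\DSBack}(v_l) \leq \hat{d}^{t}_{\DSBack}(v_p) + d_\sigma(v_p,v_l)$, which is precisely the desired bound $slack^{t}_{\DSBack}(\sigma') \leq 0$.

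Write $\alpha = \hat{d}^{\tHalfInText}_{\DSBack}(v_j)$. Because $v_j$ is hit, every vertex $v$ with $\hat{d}^{\tHalfInText}_{\DSBack}(v) \in [\alpha,\alpha + 2\delta]$ lies in $V^{\star}$; in particular $v_j \in V^{\star} \subseteq V_{fixed}$. A basic observation used repeatedly in the argument is that any vertex that is never added to $V_{touched}$ has also never been in the priority queue, so its estimate stays equal to its value at time $\tHalfInText$ throughout the call.

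I then prove by induction on $p \geq j$ that $v_p \in V_{fixed}$. The base case $p = j$ is the inclusion above. For the inductive step, assume $v_j,\ldots,v_p \in V_{fixed}$. If $\hat{d}^{\tHalfInText}_{\DSBack}(v_{p+1}) \in [\alpha,\alpha + 2\delta]$ then $v_{p+1} \in V^{\star}$ and we are done. Otherwise $\hat{d}^{\tHalfInText}_{\DSBack}(v_{p+1}) \geq \alpha + 2\delta$ and the tenseness of $v_j$ gives
\begin{align*}
\hat{d}^{\tHalfInText}_{\DSBack}(v_{p+1}) \geq \alpha + d_\sigma(v_j,v_{p+1}) + \epsilon\delta.
\end{align*}
By \Cref{lem:partialdijkstra} applied to the already-fixed sub-segment $v_j,\ldots,v_p$ we have $\hat{d}_{\DSBack}(v_p) \leq \alpha + d_\sigma(v_j,v_p)$ at the end of the call, and this value is already frozen by the time $v_p$ is extracted from the priority queue. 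Since $v_{p+1}$ has not yet entered $V_{fixed}$, its estimate at that instant still equals $\hat{d}^{\tHalfInText}_{\DSBack}(v_{p+1})$, so
\begin{align*}
\hat{d}_{\DSBack}(v_{p+1}) - \bigl(\hat{d}_{\DSBack}(v_p) + \omega(v_p,v_{p+1})\bigr) \geq \epsilon\delta,
\end{align*}
which forces the ceiling-based ``past step'' test to succeed during relaxation of $(v_p,v_{p+1})$ and adds $v_{p+1}$ to $V_{touched} \subseteq V_{fixed}$.

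The main obstacle I expect is the careful bookkeeping inside \textsc{PartialDijkstra}: verifying that a gap of at least $\epsilon\delta$ between two numbers does strictly separate their $\epsilon\delta$-rounded ceilings, and checking that any modification to $\hat{d}_{\DSBack}(v_{p+1})$ prior to $v_p$'s extraction must itself have come from the ``past step'' branch (the else branch requires $v_{p+1} \in Q$, which in turn requires prior inclusion in $V^{\star} \cup V_{touched}$), and would therefore already have placed $v_{p+1}$ into $V_{fixed}$, preserving the induction.
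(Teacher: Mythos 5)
Your proof follows the same strategy as the paper's one-sentence argument and makes its bookkeeping explicit: show that every vertex of $\sigma' = v_j, \ldots, v_l$ ends up in $V_{fixed} = V^\star \cup V_{touched}$ at the end of the global fixing phase's call to \textsc{PartialDijkstra}, then conclude via \Cref{lem:partialdijkstra}. Your observations that $\hat{d}_{\DSBack}(v_p)$ is frozen once $v_p$ leaves the queue, that a gap of $\epsilon\delta$ forces the ceiling test to succeed, and that any premature decrease of $\hat{d}_{\DSBack}(v_{p+1})$ already places $v_{p+1}$ in $V_{fixed}$ are all correct and worth spelling out.

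Your inductive case split, however, inherits a gap that is also present in the paper's proof. From ``$\hat{d}^{\tHalf}_{\DSBack}(v_{p+1}) \notin [\alpha,\alpha+2\delta]$'' you conclude ``$\hat{d}^{\tHalf}_{\DSBack}(v_{p+1}) \geq \alpha+2\delta$'', silently discarding the possibility $\hat{d}^{\tHalf}_{\DSBack}(v_{p+1}) < \alpha$. For such a vertex neither the tenseness condition of \Cref{def:tense_vertex} (which only constrains vertices above $\alpha+2\delta$) nor the hit condition (which only covers $[\alpha,\alpha+2\delta]$) gives membership in $V_{fixed}$, and the relaxation of $(v_p,v_{p+1})$ need not fire, so the induction can break and the tail can retain positive slack. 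The paper's phrase ``All vertices in $\sigma'$ that are not already in $V_{input}$ would profit by at least $\epsilon\delta$ by the definition of a tense vertex'' makes the same unsupported jump. The gap is harmless where the lemma is used (\Cref{lem:many_tense_vert}): there the tense vertices are drawn from \Cref{def:pot_tense_vert}, and each $w_j'$ is by construction the \emph{last} vertex on $\sigma$ meeting a given additive-error threshold, which together with strictly positive edge weights yields $\hat{d}^{\tHalf}_{\DSBack}(v) > \hat{d}^{\tHalf}_{\DSBack}(w_j') + d_\sigma(w_j',v)$ for every later $v$, so the excluded case never occurs. To make your proof self-contained, either add ``every $v$ after $v_j$ on $\sigma$ has $\hat{d}^{\tHalf}_{\DSBack}(v) \geq \hat{d}^{\tHalf}_{\DSBack}(v_j)$'' as a hypothesis, or note explicitly that this holds for the tense vertices of \Cref{def:pot_tense_vert}.
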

\begin{proof}
All vertices in $\sigma'$ that are not already in $V_{input}$ would profit by at least $\epsilon \delta$ by the definition of a tense vertex. Therefore the branching condition in Line \ref{alg:line:touched} in \Cref{alg:partialdijkstra} causes $v \in V_{touched} \cup V_{input}$ for each $v \in \sigma'$ given $V_{touched} = \textsc{PartialDijkstra}(V_{input}, \epsilon)$. The claimed error bound follows from \Cref{lem:partialdijkstra}. 
\end{proof}

Bearing this in mind, we show that the number of tense vertices on a segment with large slack scales linearly with said slack. We first define the collection of potentially tense vertices for a segment with respect to a parameter $r$. We introduce $r$ to measure how much the path gets pulled back, if one of them was tense and got hit. Later, we show that under favourable conditions, many of these vertices are actually tense. 

\begin{definition}[Potentially Tense Vertex Collection]
Consider a parameter $r \in \mathbb{R}_{\geq 0}$ and a segment $\sigma = v_1, ..., v_l$ already present during the last global fixing phase at time $t$ with
\begin{align*}
    slack^{\tHalf}_{\DSBack}(\sigma, \hat{d}^{t + b}_{\DSFront}(v_l)) - r = \mu > 0.
\end{align*}
Let $q = \floor{\mu/\epsilon \delta}$. We construct a collection of potentially tense vertices $w'_0, ..., w'_{\floor{q/2} - 1}$, where $w'_j$ is defined to be the last vertex along the path segment that has additive error at least $(q - 2j)\epsilon \delta + r$ with respect to $\hat{d}_{\DSFront}^{t + b}(v_l)$, i.e. the last vertex by regular order for which 
\begin{align*}
    \hat{d}_{\DSFront}^{t + b}(v_l) \geq \hat{d}^{\tHalf}_{\DSBack}(w_j') + d_\sigma(w_j', v_l) + (q - 2j)\epsilon \delta + r \tag*{(potentially tense)} \label{eq:potentially_tense}
\end{align*}
holds. These are the \emph{potentially tense} vertices of the segment $\sigma$ with respect to parameter $r$ at time $t + b$. 
\label{def:pot_tense_vert}
\end{definition}

We first show some basic properties of these vertices. We start by stating the following property which follows straight-forwardly from the definition of potentially tense vertices as it gets weakened as $j$ gets larger.

\begin{property}
Let $w_j'$ and $w_k'$ be potentially tense vertices of segment $\sigma = v_1, ..., v_l$ with respect to $r$ at time $t + b$ for $j < k$. Then either $w_j' = w_k'$ or $w_j'$ is before $w_k'$ on the path.
\end{property}

Further, by re-arranging the formula in the definition of potentially tense vertices we obtain the following property. 

\begin{property}
Let $w_j'$ be a potentially tense vertex of segment $\sigma = v_1, ..., v_l$ with respect to $r$ at time $t + b$. Then 
$\hat{d}^{\tHalf}_{\DSBack}(w_j') \leq \hat{d}_{\DSFront}^{t + b}(v_l) -  d_\sigma(w_j', v_l) - (q - 2j)\epsilon \delta - r $.
\label{prop:useful_ineq_tense_vertices_first}
\end{property}

\begin{lemma}
Let $w_j'$ be a potentially tense vertex of segment $\sigma = v_1, ..., v_l$ with respect to $r$ at time $t + b$. Then $\hat{d}^{\tHalf}_{\DSBack}(w_j') \geq \hat{d}_{\DSFront}^{t + b}(v_l) -  d_\sigma(w_j', v_l) - (q - 2j)\epsilon \delta - r - \epsilon \delta$.
\label{lem:useful_ineq_tense_vertices}
\end{lemma}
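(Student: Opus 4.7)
The upper bound in \Cref{prop:useful_ineq_tense_vertices_first} is obtained by rearranging the defining inequality of a potentially tense vertex (see \Cref{def:pot_tense_vert}), and the content of the present lemma is the matching lower bound, losing only a single $\epsilon\delta$. The plan is to exploit the fact that $w_j'$ is, by construction, the \emph{last} vertex on $\sigma$ satisfying the potentially tense inequality: this means that its immediate successor on $\sigma$, if it exists, fails the inequality, which gives a strict lower bound on the distance estimate of that successor at time $\tHalf$ in $\DSBack$. I can then pull that bound back one edge via \Cref{inv:max_str_edge}, paying exactly the single $\epsilon\delta$ appearing in the statement.

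Write $w_j' = v_p$. The easy boundary case is $p = l$, where $d_\sigma(w_j', v_l) = 0$. Since $\DSFront$ and $\DSBack$ are synchronized at the moment $\tHalf$ by the first step of the global fixing phase, and since distance estimates in $\DSFront$ are monotonically non-increasing in time, we have $\hat{d}^{\tHalf}_{\DSBack}(v_l) = \hat{d}^{\tHalf}_{\DSFront}(v_l) \geq \hat{d}^{t+b}_{\DSFront}(v_l)$. Combined with $(q-2j)\epsilon\delta + r + \epsilon\delta \geq 0$ (valid because $j \leq \floor{q/2} - 1$ forces $q - 2j \geq 2$, and $r \geq 0$), the claim follows immediately, with room to spare.

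In the substantive case $p < l$, the successor $v_{p+1}$ lies on $\sigma$ and, by the maximality of $w_j'$, fails the potentially tense inequality, so
\[
\hat{d}^{\tHalf}_{\DSBack}(v_{p+1}) > \hat{d}^{t+b}_{\DSFront}(v_l) - d_\sigma(v_{p+1}, v_l) - (q-2j)\epsilon\delta - r.
\]
Because $\sigma$ is present throughout the last global fixing phase at time $t$ and $\tHalf$ is a moment during the processing of that insertion, the edge $(v_p, v_{p+1})$ already exists at time $\tHalf$. Applying \Cref{inv:max_str_edge} in $\DSBack$ at time $\tHalf$ then gives $\hat{d}^{\tHalf}_{\DSBack}(v_{p+1}) \leq \hat{d}^{\tHalf}_{\DSBack}(v_p) + \omega(v_p, v_{p+1}) + \epsilon\delta$. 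Chaining the two inequalities and using $\omega(v_p, v_{p+1}) + d_\sigma(v_{p+1}, v_l) = d_\sigma(v_p, v_l)$ produces exactly the claimed lower bound, with the $\epsilon\delta$ loss originating precisely from the edge invariant.

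I do not expect a genuine obstacle; the argument is essentially a one-edge pullback. The only bookkeeping is to confirm that \Cref{inv:max_str_edge} holds in $\DSBack$ at time $\tHalf$, which is guaranteed by the observation following \Cref{alg:glob_fix} that the randomized algorithm preserves this invariant.
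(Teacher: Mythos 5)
Your proof is correct and takes essentially the same approach as the paper: the key step in both cases is to use the maximality of $w_j'$ to conclude that its immediate successor on $\sigma$ fails the potentially tense inequality, and then to transfer that bound back one edge via \Cref{inv:max_str_edge}, which accounts for the single $\epsilon\delta$ slack. The paper phrases it as a proof by contradiction while you argue directly and you also make the boundary case $w_j' = v_l$ explicit (the paper handles it implicitly — in fact that case is impossible since $q - 2j \geq 2$ forces the synchronized and monotone estimates to rule it out), but these are merely presentational differences.
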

\begin{proof}
For the sake of contradiction, let us assume \begin{align}\label{eq:wjnottight}
        \hat{d}_{\DSFront}^{t + b}(v_l) > \hat{d}^{\tHalf}_{\DSBack}(w_j') + d_\sigma(w_j', v_l) + (q - 2j)\epsilon \delta + r + \epsilon \delta.
\end{align}
Since $w_j'$ is the last vertex for which $\hat{d}_{\DSFront}^{t + b}(v_l) \geq \hat{d}^{\tHalf}_{\DSBack}(w_j') + d_\sigma(w_j', v_l) + (q - 2j)\epsilon \delta + r$ holds, we must have by definition that for the successor $w_{next}$ of $w'_j$ on the segment $\sigma$
\begin{align}\label{eq:next_vertex}
    \hat{d}_{\DSFront}^{t + b}(v_l) < \hat{d}^{\tHalf}_{\DSBack}(w_{next}) + d_\sigma(w_{next}, v_l) + (q - 2j)\epsilon \delta + r.
\end{align}
Combining inequalities \eqref{eq:wjnottight} and \eqref{eq:next_vertex}, we obtain
\begin{align*}
    \hat{d}^{\tHalf}_{\DSBack}(w_{next}) - \hat{d}^{\tHalf}_{\DSBack}(w_j') > \epsilon \delta + \omega(w_j', w_{next})
\end{align*}
by moving and cancelling terms, which is a contradiction to  \Cref{inv:max_str_edge}.
\end{proof}

\begin{lemma}
Let $w_j'$ and $w_k'$ be potentially tense vertices of segment $\sigma = v_1, ..., v_l$ with respect to $r$ at time $t + b$ for $j < k$. Then $\hat{d}^{\tHalf}_{\DSBack}(w_k) - \hat{d}^{\tHalf}_{\DSBack}(w_j) \geq \epsilon \delta$ and thus they are distinct.
\label{lem:tens_vert_dist}
\end{lemma}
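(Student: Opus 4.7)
The plan is to obtain matching bounds on $\hat{d}^{\tHalf}_{\DSBack}(w_j')$ and $\hat{d}^{\tHalf}_{\DSBack}(w_k')$ and subtract them. Concretely, I would apply the direct rearrangement in Property~\ref{prop:useful_ineq_tense_vertices_first} to $w_j'$ to get an upper bound on $\hat{d}^{\tHalf}_{\DSBack}(w_j')$, and invoke Lemma~\ref{lem:useful_ineq_tense_vertices} for $w_k'$ to get a near-tight lower bound on $\hat{d}^{\tHalf}_{\DSBack}(w_k')$. Both bounds share the terms $\hat{d}_{\DSFront}^{t+b}(v_l)$ and $r$, so these cancel upon subtraction.

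What remains is a combination of $d_\sigma(w_j', v_l) - d_\sigma(w_k', v_l)$ and $(2(k-j) - 1)\epsilon\delta$. Using the first property stated after \Cref{def:pot_tense_vert}, which ensures $w_j'$ appears at or before $w_k'$ on $\sigma$, the distance difference equals $d_\sigma(w_j', w_k') \geq 0$. Since $k > j$ gives $2(k-j) - 1 \geq 1$, the remaining $\epsilon\delta$ term provides exactly the desired gap, yielding $\hat{d}^{\tHalf}_{\DSBack}(w_k') - \hat{d}^{\tHalf}_{\DSBack}(w_j') \geq \epsilon\delta$. Distinctness is then immediate, since equality of the two vertices would force the difference to be zero.

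I do not anticipate a real obstacle here; the argument is driven entirely by the fact that the thresholds $(q-2j)\epsilon\delta$ in \Cref{def:pot_tense_vert} were deliberately spaced by $2\epsilon\delta$, which is precisely calibrated to dominate the $\epsilon\delta$ slack introduced by the maximality argument underlying \Cref{lem:useful_ineq_tense_vertices}. The only point requiring minor care is tracking signs and correctly applying monotonicity along $\sigma$.
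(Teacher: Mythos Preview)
Your proposal is correct and matches the paper's proof essentially line for line: apply \Cref{prop:useful_ineq_tense_vertices_first} to $w_j'$, \Cref{lem:useful_ineq_tense_vertices} to $w_k'$, and subtract to obtain $\hat{d}^{\tHalf}_{\DSBack}(w_k') - \hat{d}^{\tHalf}_{\DSBack}(w_j') \geq d_\sigma(w_j', w_k') + 2(k-j)\epsilon\delta - \epsilon\delta \geq \epsilon\delta$.
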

\begin{proof}
We use \Cref{prop:useful_ineq_tense_vertices_first} with $w_j'$ to derive $ \hat{d}^{\tHalf}_{\DSBack}(w_j') \leq \hat{d}_{\DSFront}^{t + b}(v_l) -  d_\sigma(w_j', v_l) - (q - 2j)\epsilon \delta - r$
 and \Cref{lem:useful_ineq_tense_vertices} with $w_k'$ to obtain $\hat{d}^{\tHalf}_{\DSBack}(w_k') \geq \hat{d}_{\DSFront}^{t + b}(v_l) -  d_\sigma(w_k', v_l) - (q - 2k)\epsilon \delta - r - \epsilon \delta$.
Subtracting the former from the latter inequality gives
\begin{align*}
    \hat{d}^{\tHalf}_{\DSBack}(w_k') - \hat{d}^{\tHalf}_{\DSBack}(w_j') \geq d_\sigma(w_j', w_k') + 2(k - j) \epsilon \delta - \epsilon \delta \geq \epsilon \delta.
\end{align*}
\end{proof}

\begin{lemma}
Let $w_j'$ be a potentially tense vertex of segment $\sigma = v_1, ..., v_l$ with respect to $r$ at time $t + b$ with $j < {\floor{q/2} - 1}$, so that $w_j'$ is not tense. Then $d_\sigma(w_j', w_{j+1}') \geq \delta$.
\label{lem:not_tense_big_dist}
\end{lemma}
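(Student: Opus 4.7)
The goal is to exhibit a candidate for $w_{j+1}'$ that lies at path-distance at least $\delta$ ahead of $w_j'$; since $w_{j+1}'$ is defined as the \emph{last} vertex on $\sigma$ satisfying the potentially tense inequality at level $j+1$, any such candidate gives a lower bound on $d_\sigma(w_j', w_{j+1}')$. The natural candidate is the witness of non-tension of $w_j'$.

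\textbf{Step 1: Extract the non-tension witness.} Since $w_j'$ is not tense, unfold Definition~\ref{def:tense_vertex} to obtain some $v \in \suffix(\sigma, w_j')$ with $\hat{d}^{\tHalf}_{\DSBack}(v) \geq \hat{d}^{\tHalf}_{\DSBack}(w_j') + 2\delta$ but violating the tension inequality, i.e.
\[
\hat{d}^{\tHalf}_{\DSBack}(v) < \hat{d}^{\tHalf}_{\DSBack}(w_j') + d_\sigma(w_j', v) + \epsilon\delta.
\]
Combining these two bounds gives $d_\sigma(w_j', v) > (2 - \epsilon)\delta \geq \delta$ (assuming $\epsilon \leq 1$, standard in this paper), which is exactly the distance lower bound we ultimately want to transfer to $w_{j+1}'$.

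\textbf{Step 2: Show $v$ is a potentially tense candidate at level $j+1$.} I would chain the inequality from Step~1 with the potentially tense inequality for $w_j'$ (Definition~\ref{def:pot_tense_vert}) applied to $\hat{d}_{\DSFront}^{t+b}(v_l)$:
\begin{align*}
\hat{d}^{\tHalf}_{\DSBack}(v) + d_\sigma(v, v_l)
&< \hat{d}^{\tHalf}_{\DSBack}(w_j') + d_\sigma(w_j', v_l) + \epsilon\delta \\
&\leq \hat{d}_{\DSFront}^{t+b}(v_l) - (q - 2j)\epsilon\delta - r + \epsilon\delta \\
&\leq \hat{d}_{\DSFront}^{t+b}(v_l) - (q - 2(j+1))\epsilon\delta - r.
\end{align*}
Hence $v$ satisfies the defining inequality \ref{eq:potentially_tense} at index $j+1$.

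\textbf{Step 3: Conclude via maximality.} Because $w_{j+1}'$ is defined as the last vertex on $\sigma$ for which the level-$(j+1)$ inequality holds, and since $v$ is some such vertex lying in $\suffix(\sigma, w_j')$, the vertex $w_{j+1}'$ must appear on $\sigma$ at or beyond $v$. Therefore
\[
d_\sigma(w_j', w_{j+1}') \geq d_\sigma(w_j', v) \geq \delta,
\]
proving the lemma.

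\textbf{Anticipated difficulty.} The only subtlety is making sure that $v$ actually lies strictly after $w_j'$ (so that the distance bound is nontrivial) and that the constant $2\delta$ in Definition~\ref{def:tense_vertex} combined with the $\epsilon\delta$ slack is enough to force a full $\delta$ of path length; this is where the assumption $\epsilon \leq 1$ is implicitly used. The algebraic reshuffling in Step~2 is routine but one must be careful to invoke $w_j'$'s own potentially tense inequality (as an upper bound on $\hat{d}^{\tHalf}_{\DSBack}(w_j') + d_\sigma(w_j', v_l)$) rather than Lemma~\ref{lem:useful_ineq_tense_vertices}, since it is the tight direction that matters here.
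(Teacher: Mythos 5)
Your proof is correct and follows essentially the same route as the paper: both arguments hinge on extracting the non-tense witness $v$ and using the two inequalities it satisfies (distance $\geq 2\delta$ in estimate, but additive error $< \epsilon\delta$), together with the maximality of $w_{j+1}'$. The one organizational difference is that you show directly that $v$ satisfies the level-$(j+1)$ potentially tense inequality and then invoke maximality to place $w_{j+1}'$ at or beyond $v$, whereas the paper first derives an ``error suffix'' property holding for every vertex strictly after $w_{j+1}'$ and then argues by contradiction that the witness cannot lie there; these are contrapositives of each other, and your version is arguably the cleaner presentation.
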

\begin{proof}
We denote the next vertex along the path after $w_{j+1}'$ as $w_{next}$. Note that such a vertex always exists, since 
\begin{align*}
    \hat{d}_{\DSBack}^{\tHalf}(v_l)  = \hat{d}_{\DSFront}^{\tHalf}(v_l) \geq \hat{d}_{\DSFront}^{t + b}(v_l) > \hat{d}^{\tHalf}_{\DSBack}(w_j')
\end{align*} by the description of our algorithm and \ref{eq:potentially_tense}. By \Cref{def:pot_tense_vert} we have 
\begin{align*}
     \hat{d}_{\DSFront}^{t + b}(v_l) < \hat{d}^{\tHalf}_{\DSBack}(v) + d_\sigma(v, v_l) + (q - 2(j + 1))\epsilon \delta + r
\end{align*}
for all $v \in \suffix(\sigma, w_{next})$, since $w_{j + 1}'$ is the last vertex that fulfills inequality \ref{eq:potentially_tense}. Chaining this inequality with the guarantee given by \ref{eq:potentially_tense} for $w_j'$ yields
\begin{align*}
    \hat{d}^{\tHalf}_{\DSBack}(w_j') + d_\sigma(w_j', v_l) + (q - 2j)\epsilon \delta + r &< \hat{d}^{\tHalf}_{\DSBack}(v) + d_\sigma(v, v_l) + (q - 2j)\epsilon \delta + r - 2\epsilon \delta \\
    \iff \hat{d}^{\tHalf}_{\DSBack}(w_j') &< \hat{d}^{\tHalf}_{\DSBack}(v) - d_\sigma(w_j', v) - 2\epsilon\delta . \tag*{(error suffix)} \label{eq:error_suffix}
\end{align*}
But by assumption $w'_j$ is not tense, and thus there exists a vertex $v^\star$ after $w_j'$ on the path such that
\begin{align*}
    \hat{d}^{\tHalf}_{\DSBack}(v^\star) > \hat{d}^{\tHalf}_{\DSBack}(w_j') + 2\delta \tag*{(dist 1)} \label{eq:distOne}
\end{align*}
and 
\begin{align*}
    \hat{d}^{\tHalf}_{\DSBack}(v^\star) &\leq \hat{d}^{\tHalf}_{\DSBack}(w_j') + d_\sigma(w_j', v^\star) + \epsilon \delta \tag*{(dist 2)} \label{eq:distTwo}\\
    \iff \hat{d}^{\tHalf}_{\DSBack}(v^\star) - d_\sigma(w_j', v^\star) - \epsilon \delta &\leq \hat{d}^{\tHalf}_{\DSBack}(w_j') 
\end{align*}
by \Cref{def:tense_vertex}. We claim that we can conclude that $d_\sigma(w_j', v^\star) \leq d_\sigma(w_j', w_{j + 1}')$. This follows since if $v^\star \in \suffix(\sigma, w_{next})$, the last inequality combined with \ref{eq:error_suffix} where we set $v = v^\star$ would form a contradiction. Finally, we have 
\begin{align*}
    \hat{d}^{\tHalf}_{\DSBack}(w_j') + 2\delta &< \hat{d}^{\tHalf}_{\DSBack}(v^\star) \leq \hat{d}^{\tHalf}_{\DSBack}(w_j') + d_\sigma(w_j', v^\star) + \epsilon \delta \\
    \implies 2\delta - \epsilon \delta &< d_\sigma(w_j', v^\star) \leq  d_\sigma(w_j', w_{j + 1}')
\end{align*}
by chaining \ref{eq:distOne} with \ref{eq:distTwo} which concludes the proof for $\epsilon < 1$. 
\end{proof}

We combine our results in the following lemma.

\begin{lemma}
Let $r \in \mathbb{R}_{\geq 0}$ be a parameter and $\sigma = v_1, ..., v_l$ be a segment already present during the last global fixing phase at time $t$ such that
\begin{align*}
    slack^{\tHalf}_{\DSBack}(\sigma, \hat{d}^{t + b}_{\DSFront}(v_l)) - r = \mu \geq 4 \epsilon d_{\sigma}(v_1,v_l)
\end{align*}
where $b$ is the number of insertions that happened since the last global fixing phase during the processing of insertion $t$ and $0 \leq r \leq slack^{\tHalf}_{\DSBack}(\sigma, \hat{d}^{t + b}_{\DSFront}(v_l))$ is a parameter. Let $q = \floor{\mu/\epsilon \delta}$, then there is a subset of $\{w'_j\}$ of size at least $q/4 - 3$ consisting of vertices $w_j$ that are tense with regard to time $\tHalfInText$. Moreover, all such vertices $w_j$ satisfy:
\begin{enumerate}
    \item $|\hat{d}^{\tHalf}_{\DSBack}(w_i) - \hat{d}^{\tHalf}_{\DSBack}(w_j)| \geq \epsilon \delta$ for all $i \neq j$
    \item $\hat{d}^{\tHalf}_{\DSBack}(w_j) \leq \hat{d}^{t + b}_{\DSFront}(v_l) - r$
    \item $\hat{d}^{t+b}_{\DSBack}(v_l) \leq \hat{d}^{t + b}_{\DSFront}(v_l) - r$ if $w_j$ gets hit. 
\end{enumerate}
\label{lem:many_tense_vert}
\end{lemma}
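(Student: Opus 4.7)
The plan is to instantiate the potentially tense vertex collection from \Cref{def:pot_tense_vert} with the given $r$ and $\mu$, producing $w'_0, \dots, w'_{\floor{q/2}-1}$, and then argue that a $1/4$-fraction of them must actually be tense in the sense of \Cref{def:tense_vertex}. Once that size bound is in place, the three listed properties will follow almost mechanically from the lemmas we have already proved.

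For the counting step I would invoke \Cref{lem:not_tense_big_dist}: any non-tense $w'_j$ with $j < \floor{q/2} - 1$ is separated from $w'_{j+1}$ by at least $\delta$ along $\sigma$. Since the collection appears along $\sigma$ in strict order (\Cref{lem:tens_vert_dist} gives distinct distance estimates and the construction of the $w'_j$ is monotone in path position), the corresponding path intervals $[w'_j, w'_{j+1}]$ sit internally disjointly inside $\sigma$, so the sum of their lengths is at most $d_\sigma(v_1, v_l)$. Combining the hypothesis $\mu \geq 4\epsilon d_\sigma(v_1,v_l)$ with $\mu < (q+1)\epsilon\delta$ bounds the number of non-tense indices by $(q+1)/4$, and subtracting from the $\floor{q/2}-1$ candidate indices gives the required $q/4 - 3$ lower bound after checking both parities of $q$.

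Property~1 is exactly \Cref{lem:tens_vert_dist}. Property~2 follows immediately from \Cref{prop:useful_ineq_tense_vertices_first}, whose inequality reads $\hat{d}^{\tHalf}_{\DSBack}(w_j') \leq \hat{d}^{t+b}_{\DSFront}(v_l) - d_\sigma(w_j', v_l) - (q-2j)\epsilon \delta - r$: since $j \leq \floor{q/2}-1 < q/2$, both the path-length term and the $(q-2j)\epsilon\delta$ term are non-negative and can simply be dropped. For Property~3, if a tense $w_j$ is hit then \Cref{lem:hit} collapses $slack^{t}_{\DSBack}(\suffix(\sigma, w_j))$ to $0$, hence
\[
\hat{d}^t_{\DSBack}(v_l) \leq \hat{d}^t_{\DSBack}(w_j) + d_\sigma(w_j, v_l) \leq \hat{d}^{\tHalf}_{\DSBack}(w_j) + d_\sigma(w_j, v_l),
\]
and substituting the bound from \Cref{prop:useful_ineq_tense_vertices_first} cancels the $d_\sigma(w_j, v_l)$ term, yielding $\hat{d}^t_{\DSBack}(v_l) \leq \hat{d}^{t+b}_{\DSFront}(v_l) - r$; monotonicity of distance estimates propagates this inequality forward to time $t+b$.

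The most delicate step is the counting argument: I need to make sure the sub-segments contributed by the non-tense indices really pack disjointly inside $\sigma$ (so that their total length is genuinely capped by $d_\sigma(v_1, v_l)$, not by something weaker when several non-tense indices are consecutive), and to verify that the arithmetic $(\floor{q/2} - 1) - (q+1)/4 \geq q/4 - 3$ holds uniformly in the parity of $q$ and is vacuous for the small values of $q$ where the claim is trivial.
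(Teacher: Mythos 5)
Your proof is correct and takes essentially the same approach as the paper: you construct the potentially tense collection from Definition~\ref{def:pot_tense_vert} with the given $r$, use Lemma~\ref{lem:not_tense_big_dist} with the disjoint-packing observation to bound the number of non-tense indices via the length constraint $\mu \geq 4\epsilon d_\sigma(v_1,v_l)$, and derive the three listed properties from Lemma~\ref{lem:tens_vert_dist}, Property~\ref{prop:useful_ineq_tense_vertices_first}, and Lemma~\ref{lem:hit} respectively, exactly as the paper does. The only cosmetic difference is that the paper phrases the counting step as a proof by contradiction (assume $\geq q/4+2$ non-tense, derive a segment length exceeding $\mu/4\epsilon$) while you count directly, but the underlying estimate is identical.
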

\begin{proof}
We first show that at least $q/4 - 3$ of the vertices $w'_j$ are tense. By \Cref{lem:tens_vert_dist} our constructed potentially tense vertices are distinct, and thus we have $\floor{q/2}$ distinct potentially tense vertices. To arrive at a contradiction, we assume that $p \geq q/4 + 2$ of the potentially tense vertices $w_i$ are not actually tense. To use  \Cref{lem:not_tense_big_dist}, we focus on the first $\floor{q/2} - 1$ potentially tense vertices. These then must contain at least $q/4 + 1$ vertices that are not actually tense. From \Cref{lem:not_tense_big_dist} we conclude, that the segment has length $d(v_1, v_l) > q\delta/4 + \delta$, since there are $q/4 + 1$ distinct parts of length greater than $\delta$. But then 
\begin{align*}
    d(v_1, v_l) &> q\delta/4 + \delta \geq (\mu/\epsilon \delta - 1)\delta/4 + \delta \geq \mu/4\epsilon
\end{align*}
which is a contradiction to $4\epsilon d(v_1, v_l) \leq \mu$. Finally, let us prove the properties of tense vertices $w_i$:
\begin{enumerate}
    \item The first property directly follows from \Cref{lem:tens_vert_dist}.
    \item The second property is a direct consequence of condition \ref{eq:potentially_tense} in \Cref{def:pot_tense_vert}.
    \item For the third property, if any $w_i=w'_j$ gets hit, we have 
    \begin{align*}
        \hat{d}^{t+b}_{\DSBack}(v_l) \leq \hat{d}^{t}_{\DSBack}(v_l) \leq \hat{d}^{t}_{\DSBack}(w_j') + d_\sigma(w_j', v_l) \leq \hat{d}^{t+b}_{\DSFront}(v_l) - (q - 2j)\epsilon \delta - r \leq  \hat{d}^{t+b}_{\DSFront}(v_l) - r
    \end{align*}
where the first inequality follows from the fact that distance estimates only ever decrease in $\DSBack$, the second from \Cref{lem:hit}, the third from \ref{eq:potentially_tense} in \Cref{def:pot_tense_vert}, and the fourth from $j \leq \floor{q/2} - 1$. 
\end{enumerate}
\end{proof}

\paragraph{Tense segments}
Let's remind ourselfs that $\pi_{s,x}$ is a shortest $s-x$ path given by $s = v_1, ..., v_l = x$ at time $t + b$, that carries additive error $\hat{d}(x) - d^{t + b}(s,x) \geq 100 \epsilon \tau \lg n$. 

It is comprised of at most $b + 1$ maximal segments $\sigma_i = v_1^{(i)}, ...,  v_{l_i}^{(i)}$, that were already present during the last global fixing phase, as well as at most $b$ newly inserted edges $e_i = (v_{l_i}^{(i)},v_1^{(i + 1)})$ connecting them. A simple way to obtain the segments is to just remove all the edges that were inserted since time $t$ from the path, and look at what is left over. As one might expect, the index $i$ increases as we go along the path. Notice that some segments $\sigma_i$ could just contain a single vertex. We refer to the number of such segments with $p + 1 \leq B$. 
We define tense segments, which are to the path what tense vertices are to a segment.
\begin{definition}
For each segment $\sigma_i$ where $i \in [p]$, we define
\begin{align*}
    r_i = \argmin_{r \in \mathbb{R}}\left(\forall j > i, \forall v \in \sigma_j : \hat{d}^{t+b}_{\DSFront}(v) \geq \hat{d}^{t+b}_{\DSFront}(v_{l_i}^{(i)}) + d_{\pi_{s,x}}(v_{l_i}^{(i)}, v) - r + 5 (j - i) \epsilon \delta \lg n\right).
\end{align*}
For the segment $\sigma_{p+1}$ we define $r_{p+1} = 0$. If $r_i < slack_{\DSFront}^{t+b}(\sigma_i)$, we say the segment $\sigma_i$ is tense.
\label{def:tense_segment}
\end{definition}

\begin{lemma}
For all $i \in \{ 1, ..., p+1\}$, we have $r_i \geq 0$. 
\end{lemma}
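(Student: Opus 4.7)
The case $i = p+1$ is immediate: \Cref{def:tense_segment} stipulates $r_{p+1} = 0$ directly. For $i \in \{1,\dots,p\}$, my approach is to rewrite the $\argmin$ explicitly and then exhibit a single pair $(j,v)$ in the universal quantification whose associated constraint already forces $r$ to be non-negative, which pins down $r_i \geq 0$.

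Concretely, for any fixed $(j,v)$ with $j > i$ and $v \in \sigma_j$, the condition in \Cref{def:tense_segment} is equivalent to
\[ r \;\geq\; \hat{d}^{t+b}_{\DSFront}(v_{l_i}^{(i)}) \;+\; d_{\pi_{s,x}}(v_{l_i}^{(i)},v) \;+\; 5(j-i)\epsilon\delta\lg n \;-\; \hat{d}^{t+b}_{\DSFront}(v). \]
Since there are only finitely many such $(j,v)$, the smallest $r$ satisfying the condition for all of them simultaneously equals the maximum of this right-hand side, and that value is exactly $r_i$. Hence it suffices to display a single witness $(j,v)$ for which the right-hand side is non-negative.

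The natural witness is $j = i+1$ together with $v = v_1^{(i+1)}$, the first vertex of the next segment. By construction, $e_i = (v_{l_i}^{(i)}, v_1^{(i+1)})$ is the inserted edge connecting $\sigma_i$ to $\sigma_{i+1}$ along $\pi_{s,x}$, so $d_{\pi_{s,x}}(v_{l_i}^{(i)}, v_1^{(i+1)}) = \omega(e_i)$. \Cref{inv:max_str_edge} applies to $\DSFront$ at time $t+b$: it is inherited from $\DSReal$ at the initialisation time $\tHalfInText$ (the randomized algorithm preserves it by the observation following \Cref{alg:glob_fix}), and it is maintained thereafter because $\DSFront$ runs the deterministic algorithm of \Cref{sec:detAlgorithm}. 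Applying the invariant to $e_i$ gives $\hat{d}^{t+b}_{\DSFront}(v_1^{(i+1)}) \leq \hat{d}^{t+b}_{\DSFront}(v_{l_i}^{(i)}) + \omega(e_i) + \epsilon\delta$. Substituting into the displayed expression yields
\[ \hat{d}^{t+b}_{\DSFront}(v_{l_i}^{(i)}) + \omega(e_i) + 5\epsilon\delta\lg n - \hat{d}^{t+b}_{\DSFront}(v_1^{(i+1)}) \;\geq\; 5\epsilon\delta\lg n - \epsilon\delta \;\geq\; 0 \]
for $n \geq 2$, which finishes the argument.

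There is no real obstacle here — this lemma is a sanity check on the definition of $r_i$. The only subtlety is verifying that \Cref{inv:max_str_edge} is available in $\DSFront$ at time $t+b$, which follows transparently from how $\DSFront$ is constructed for the analysis (it inherits the invariant from $\DSReal$ at $\tHalfInText$ and then simulates the deterministic algorithm, which preserves it).
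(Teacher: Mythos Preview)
Your proof is correct and follows essentially the same approach as the paper: both use the witness $j=i+1$, $v=v_1^{(i+1)}$ together with \Cref{inv:max_str_edge} on the connecting edge $e_i$. The paper phrases it as a contradiction (assuming $r_i<0$ forces a violation of the invariant), whereas you unpack the $\argmin$ as a maximum and bound one term directly; these are the same argument in different clothing.
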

\begin{proof}
The case $i = p+1$ follows by definition. For the other cases, assume the contrary $r_i < 0$ and consider the vertex $v_1^{(i+1)}$. By the definition of $r_i$, we have
\begin{align*}
    \hat{d}^{t+b}_{\DSFront}(v_1^{(i+1)}) &\geq \hat{d}^{t+b}_{\DSFront}(v_{l_i}^{(i)}) + d_{\pi_{s,x}}(v_{l_i}^{(i)}, v_1^{(i+1)}) - r_i + 5 \epsilon \delta \lg n
    \geq \hat{d}^{t+b}_{\DSFront}(v_{l_i}^{(i)}) + d_{\pi_{s,x}}(v_{l_i}^{(i)}, v_1^{(i+1)}) + 5 \epsilon \delta \lg n
\end{align*}
which is a contradiction to \Cref{inv:max_str_edge} since $d_{\pi_{s,x}}(v_{l_i}^{(i)}, v_1^{(i+1)}) = \omega(v_{l_i}^{(i)}, v_1^{(i+1)})$.
\end{proof}

We first show that a segment that moves a lot in distance estimate, cannot contribute more than $3\epsilon \delta \lg n$ additive error.

\begin{lemma}
    Let $\sigma_i = v_1^{(i)}, ..., v_{l_{i}}^{(i)}$ be a segment such that $\forall v \in \sigma_i : \hat{d}_{\DSBack}^{\tHalf}(v) \geq \hat{d}_{\DSBack}^{t+b}(v_1^{(i)}) + d_{\pi_{s,x}}(v_1^{(i)}, v) + 4\epsilon \delta\lg n$. Then $\hat{d}_{\DSBack}^{t+b}(v_{l_i}^{(i)}) \leq  \hat{d}_{\DSBack}^{t+b}(v_1^{(i)}) + d_{\pi_{s,x}}(v_1^{(i)},v_{l_i}^{(i)}) + 3\epsilon \delta \lg n$. 
    \label{lem:segment_pulled}
\end{lemma}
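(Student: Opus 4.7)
The plan is to argue by contradiction using the fixed-height slack $slack^{\tHalf}_{\DSBack}(\sigma_i, H)$ with reference height $H := \hat{d}_{\DSBack}^{t+b}(v_{l_i}^{(i)})$. Set $S := H - \hat{d}_{\DSBack}^{t+b}(v_1^{(i)}) - d_{\sigma_i}(v_1^{(i)}, v_{l_i}^{(i)})$. Negating the conclusion gives $S > 3\epsilon\delta\lg n$, and since $v_1^{(i)}$ witnesses slack exactly $S$ at time $t+b$, we also have $\mu := slack^{t+b}_{\DSBack}(\sigma_i) \geq S > 3\epsilon\delta\lg n$, which supplies the hypothesis needed to invoke \Cref{lem:error_stays_t_tilde} (and $\sigma_i$ is present throughout $[t,t+b]$ by construction, since it is a maximal sub-path of $\pi_{s,x}$ containing no edge inserted after time $t$).

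My strategy is then to sandwich $slack^{\tHalf}_{\DSBack}(\sigma_i, H)$ between two bounds that are incompatible. From \Cref{lem:error_stays_t_tilde} I obtain the lower bound
\[
slack^{\tHalf}_{\DSBack}(\sigma_i, H) \geq \mu - 3\epsilon\delta\lg n \geq S - 3\epsilon\delta\lg n.
\]
For the matching upper bound I plug the lemma's hypothesis $\hat{d}_{\DSBack}^{\tHalf}(v) \geq \hat{d}_{\DSBack}^{t+b}(v_1^{(i)}) + d_{\pi_{s,x}}(v_1^{(i)}, v) + 4\epsilon\delta\lg n$ into the expression inside the max defining the fixed-height slack, and use that $\sigma_i$ is a sub-path of $\pi_{s,x}$, so that $d_{\pi_{s,x}}(v_1^{(i)}, v) + d_{\sigma_i}(v, v_{l_i}^{(i)}) = d_{\sigma_i}(v_1^{(i)}, v_{l_i}^{(i)})$. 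The $v$-dependence cancels, and uniformly in $v \in \sigma_i$ the inner expression becomes at most $S - 4\epsilon\delta\lg n$, hence $slack^{\tHalf}_{\DSBack}(\sigma_i, H) \leq S - 4\epsilon\delta\lg n$. Chaining the two bounds yields $\epsilon\delta\lg n \leq 0$, which contradicts $\epsilon,\delta > 0$ and thereby refutes the negated conclusion.

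The only place where care is needed — the main (and rather mild) obstacle — is picking the reference height $H$ correctly: the choice $H = \hat{d}_{\DSBack}^{t+b}(v_{l_i}^{(i)})$ is exactly the height that \Cref{lem:error_stays_t_tilde} delivers in its conclusion, while it is also the height against which the hypothesis telescopes cleanly along $\pi_{s,x}$. Once $H$ is fixed this way, both bounds are expressions in the same quantity $slack^{\tHalf}_{\DSBack}(\sigma_i, H)$ and the contradiction is immediate; no machinery beyond \Cref{lem:error_stays_t_tilde} and the definition of fixed-height slack is needed.
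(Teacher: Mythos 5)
Your proof is correct and takes essentially the same approach as the paper: assume the conclusion fails, conclude $slack^{t+b}_{\DSBack}(\sigma_i) > 3\epsilon\delta\lg n$, invoke \Cref{lem:error_stays_t_tilde} with reference height $H = \hat{d}^{t+b}_{\DSBack}(v_{l_i}^{(i)})$ to push the slack back to time $\tHalf$, and then close the contradiction against the lemma's hypothesis. The paper extracts a single witness vertex of the fixed-height slack and chains the two inequalities; you instead bound every term of the max uniformly — these are the same argument in dual form.
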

\begin{proof}
Assume, for the sake of contradiction, that $\hat{d}_{\DSBack}^{t+b}(v_{l_i}^{(i)}) > \hat{d}_{\DSBack}^{t+b}(v_1^{(i)}) + d_{\pi_{s,x}}(v_1^{(i)},v_l^{(i)}) + 3 \epsilon \delta \lg n$. Then, the segment has $slack^{t+b}_{\DSBack}(\sigma_i) \geq \hat{d}_{\DSBack}^{t+b}(v_{l_i}) - \hat{d}_{\DSBack}^{t+b}(v_1^{(i)}) - d_{\pi_{s,x}}(v_1^{(i)},v_{l_i}^{(i)}) > 3 \epsilon \delta \lg n$ and thus the assumptions of \Cref{lem:error_stays_t_tilde} are fulfilled. By  \Cref{lem:error_stays_t_tilde}, we have 
\begin{align*}
    slack^{\tHalf}_{\DSBack}(\sigma_i, \hat{d}_{\DSBack}^{t+b}(v_{l_i}^{(i)})) \geq \hat{d}_{\DSBack}^{t+b}(v_{l_i}^{(i)}) - \hat{d}_{\DSBack}^{t+b}(v_1^{(i)}) - d_{\pi_{s,x}}(v_1^{(i)},v_{l_i}^{(i)}) - 3 \epsilon \delta \lg n
\end{align*}
and therefore there exists a vertex $v_j^{(i)} \in \sigma_i$ witnessing this slack, so that
\begin{align*}
    \hat{d}_{\DSBack}^{t + b}(v_{l_i}^{(i)}) - \hat{d}_{\DSBack}^{\tHalf}(v_j^{(i)}) - d_{\pi_{s,x}}(v_j^{(i)},v_{l_i}^{(i)}) &\geq \hat{d}_{\DSBack}^{t+b}(v_{l_i}^{(i)}) - \hat{d}_{\DSBack}^{t+b}(v_1^{(i)}) - d_{\pi_{s,x}}(v_1^{(i)},v_{l_i}^{(i)}) - 3 \epsilon \delta \lg n. \\ \implies
    \hat{d}_{\DSBack}^{t+b}(v_1^{(i)}) &\geq \hat{d}_{\DSBack}^{\tHalf}(v_j^{(i)}) - d_{\pi_{s,x}}(v_1^{(i)},v_j^{(i)}) - 3 \epsilon \delta \lg n. \tag*{(witness)}\label{eq:witness}
\end{align*}
By assumption of the lemma, we have
\begin{align*}
    \hat{d}_{\DSBack}^{\tHalf}(v_j^{(i)}) \geq \hat{d}_{\DSBack}^{t+b}(v_1^{(i)}) + d_{\pi_{s,x}}(v_1^{(i)}, v_j^{(i)}) + 4\epsilon \delta\lg n. \tag*{(assumption)}\label{eq:precond}
\end{align*}
Chaining inequalities \ref{eq:witness} and \ref{eq:precond} yields 
\begin{align*}
    \hat{d}_{\DSBack}^{t+b}(v_1^{(i)}) &\geq \hat{d}_{\DSBack}^{t+b}(v_1^{(i)}) + \epsilon \delta \lg n \\
    0 &\geq \epsilon \delta \lg n 
\end{align*}
which is a contradiction. 
\end{proof}

Next we use the previously derived lemma, to show that if tense vertices in a tense segment get hit, all the vertices on the remainder of the path get reduced in distance estimate at some point. 

\begin{lemma}
Let $\sigma_i = v_1^{(i)}, ..., v_{l_{i}}^{(i)}$ be a tense segment so that 
\begin{align*}
    slack_{\DSFront}^{t+b}(\sigma) - r_i = \mu_i\geq 4d_{\pi_{s,x}}(v_1^{(i)}, v_{l_{i}}^{(i)}) + 3 \epsilon \delta \lg n.
\end{align*} 
Let $q = \mu_i/\epsilon\delta - 3\lg n$. Then the segment contained $q/4 - 3$ tense vertices $w_j$ in data structure $\DSBack$ at time $\tHalfInText$. For all such vertices $w_j$, we have further 
\begin{enumerate}
    \item $|\hat{d}^{\tHalf}_{\DSBack}(w_j) - \hat{d}^{\tHalf}_{\DSBack}(w_k)| \geq \epsilon \delta$ for all $j \neq k$
    \item $\hat{d}^{\tHalf}_{\DSBack}(w_j) \leq \hat{d}^{t + b}_{\DSFront}(v_{l_{i}}^{(i)}) - r_i$
    \item $\hat{d}^{t+b}_{\DSBack}(x) \leq \hat{d}^{t + b}_{\DSFront}(v_{l_{i}}^{(i)}) + d_{\pi_{s,x}}(v_{l_{i}}^{(i)}, x) - r_i + 4 \epsilon \tau \lg n$ if $w_j$ gets hit. 
\end{enumerate}
\label{lem:pullback}
\end{lemma}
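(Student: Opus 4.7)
The argument proceeds in three stages mirroring the structure of \Cref{lem:many_tense_vert}, plus a propagation argument.

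First, I will translate the given slack statement about $\DSFront$ at time $t+b$ into a slack statement about $\DSBack$ at time $\tHalfInText$. Since $slack_{\DSFront}^{t+b}(\sigma_i) = \mu_i + r_i \geq \mu_i \geq 3\epsilon\delta\lg n$, \Cref{lem:error_stays} (applied to $\DSFront$) gives $slack_{\DSFront}^{t}(\sigma_i, \hat{d}_{\DSFront}^{t+b}(v_{l_i}^{(i)})) \geq \mu_i + r_i - 2\epsilon\delta\lg n$. Because no insertion occurs between $\tHalfInText$ and $t$ and $\DSFront$ is, by construction, not touched by the random \textsc{PartialDijkstra}, its state is unchanged, so this is also the slack at $\tHalfInText$. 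Finally, the synchronization step sets $\hat{d}_{\DSBack}^{\tHalf}(v) = \hat{d}_{\DSFront}^{\tHalf}(v)$ for every $v$, so
\[
slack_{\DSBack}^{\tHalf}(\sigma_i,\hat{d}_{\DSFront}^{t+b}(v_{l_i}^{(i)})) \geq \mu_i + r_i - 2\epsilon\delta\lg n.
\]

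Second, I will apply \Cref{lem:many_tense_vert} to $\sigma_i$ with the parameter $r := r_i$. The residual slack is $\mu' \geq \mu_i - 2\epsilon\delta\lg n$, which (using the hypothesis $\mu_i \geq 4\epsilon d_{\pi_{s,x}}(v_1^{(i)}, v_{l_i}^{(i)}) + 3\epsilon\delta\lg n$ and $\lg n \geq 1$) exceeds $4\epsilon d_{\sigma_i}(v_1^{(i)}, v_{l_i}^{(i)})$ as required. Since $\floor{\mu'/\epsilon\delta} \geq \mu_i/\epsilon\delta - 3\lg n = q$, the lemma yields at least $q/4-3$ tense vertices $w_j$ satisfying properties (1) and (2) of the present lemma verbatim, plus the weaker pull-back bound $\hat{d}^{t+b}_{\DSBack}(v_{l_i}^{(i)}) \leq \hat{d}^{t+b}_{\DSFront}(v_{l_i}^{(i)}) - r_i$ whenever some $w_j$ is hit.

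Third, and this is the main technical step, I will propagate this pull-back bound along the remaining suffix $v_{l_i}^{(i)} \to e_i \to \sigma_{i+1} \to e_{i+1} \to \cdots \to \sigma_{p+1} = x$ of $\pi_{s,x}$ in $\DSBack$. The induction hypothesis on $j \in \{i+1,\dots,p+1\}$ is
\[
\hat{d}_{\DSBack}^{t+b}(v_{l_j}^{(j)}) \leq \hat{d}_{\DSFront}^{t+b}(v_{l_i}^{(i)}) + d_{\pi_{s,x}}(v_{l_i}^{(i)},v_{l_j}^{(j)}) - r_i + (j-i)\cdot 4\epsilon\delta\lg n.
\]
For the inductive step, \Cref{inv:max_str_edge} on the inserted edge $e_{j-1}$ gives a bound on $\hat{d}_{\DSBack}^{t+b}(v_1^{(j)})$ with an additional $\epsilon\delta$ term, and then I apply \Cref{lem:segment_pulled} on $\sigma_j$. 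The crucial point is verifying the precondition of \Cref{lem:segment_pulled}: using $\hat{d}_{\DSBack}^{\tHalf}(v) = \hat{d}_{\DSFront}^{\tHalf}(v) \geq \hat{d}_{\DSFront}^{t+b}(v)$ together with the defining inequality of $r_i$, every $v \in \sigma_j$ satisfies $\hat{d}_{\DSBack}^{\tHalf}(v) \geq \hat{d}_{\DSFront}^{t+b}(v_{l_i}^{(i)}) + d_{\pi_{s,x}}(v_{l_i}^{(i)},v) - r_i + 5(j-i)\epsilon\delta\lg n$. The $5(j-i)\epsilon\delta\lg n$ buffer exactly absorbs the inductive error $C_{j-1}\leq 4(j-1-i)\epsilon\delta\lg n$, the $\epsilon\delta$ from the edge bound, and the $4\epsilon\delta\lg n$ slack required by \Cref{lem:segment_pulled}. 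Applying \Cref{lem:segment_pulled} then adds at most another $3\epsilon\delta\lg n$, completing the induction. Taking $j=p+1$ and using $p+1\leq B = m^{1/3}$ together with $\delta = \tau/m^{1/3}$, the accumulated error is at most $B\cdot 4\epsilon\delta\lg n = 4\epsilon\tau\lg n$, yielding property (3).

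The main obstacle is the third step: the definition of $r_i$ has to be calibrated just right so that the $5(j-i)\epsilon\delta\lg n$ margin in $\DSFront$ at time $t+b$ survives both the transfer to $\DSBack$ at time $\tHalf$ (via $\hat{d}_{\DSFront}^{t+b}\leq \hat{d}_{\DSBack}^{\tHalf}$) and the cumulative error introduced by one application of \Cref{inv:max_str_edge} plus one application of \Cref{lem:segment_pulled} per segment traversed. Everything else is assembly.
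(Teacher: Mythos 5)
Your proposal is correct and follows essentially the same route as the paper's proof: transfer the slack bound from $\DSFront$ at time $t+b$ to $\DSBack$ at time $\Bef(t)$ via \Cref{lem:error_stays} (and the equality $\hat{d}^{\Bef(t)}_{\DSBack}=\hat{d}^{\Bef(t)}_{\DSFront}=\hat{d}^{t}_{\DSFront}$), invoke \Cref{lem:many_tense_vert} with $r=r_i$ to obtain the tense vertices together with properties (1), (2), and the base-case pull-back of $v_{l_i}^{(i)}$, and then run exactly the same induction over the remaining segments with \Cref{inv:max_str_edge} supplying the $+\epsilon\delta$ per connecting edge and \Cref{lem:segment_pulled} the $+3\epsilon\delta\lg n$ per segment, with the definition of $r_i$ calibrated to keep the precondition of \Cref{lem:segment_pulled} satisfied. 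One small improvement over the paper's write-up: you correctly keep the height parameter as $\hat{d}^{t+b}_{\DSFront}(v_{l_i}^{(i)})$ throughout, which is what \Cref{lem:many_tense_vert} actually requires, whereas the paper's displayed chain briefly uses $\hat{d}^{t+b}_{\DSBack}(v_{l_i}^{(i)})$ — a harmless but slightly imprecise notation.
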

\begin{proof}
We first calculate
\begin{align*}
    slack_{\DSBack}^{\tHalf}(\sigma, \hat{d}^{t+b}_{\DSBack}(v_{l_{i}}^{(i)})) - r_i &= slack_{\DSFront}^{t}(\sigma, \hat{d}^{t+b}_{\DSBack}(v_{l_{i}}^{(i)})) - r_i \\
    &\geq slack_{\DSFront}^{t+b}(\sigma, \hat{d}^{t+b}_{\DSBack}(v_{l_{i}}^{(i)})) - r_i - 2\epsilon \delta \lg n \\
    &= \mu_i  - 2\epsilon \delta \lg n
\end{align*}
where the first equality holds directly from the description of our algorithm,  the inequality holds because of \Cref{lem:error_stays}, and the last equality is just the definition of $\mu_i$. Now that we related the slack from data structure $\DSFront$ at time $t + b$ to slack in the data structure $\DSBack$ at time $\tHalfInText$, we can use \Cref{lem:many_tense_vert} and obtain a statement about tense vertices in $\DSBack$. We use the tense vertices $w_j$ as given by   \Cref{lem:many_tense_vert}. The first two points of this lemma directly follow directly from \Cref{lem:many_tense_vert}. Further, the third point of \Cref{lem:many_tense_vert} yields
\begin{align}\label{eq:base}
     \hat{d}^{t+b}_{\DSBack}(v_{l_i}^{(i)}) \leq \hat{d}^{t + b}_{\DSFront}(v_{l_i}^{(i)}) - r_i
\end{align}
if one of the tense vertices $w_j$ gets hit. We condition on $w_j$ being hit. To show that this implies the third point of our lemma, we set up an induction on index $k$ for value $i \leq k \leq p+1$ with induction hypothesis 
\begin{align*}
    \hat{d}^{t+b}_{\DSBack}(v_{l_k}^{(k)}) \leq \hat{d}^{t + b}_{\DSFront}(v_{l_{i}}^{(i)}) + d_{\pi_{s,x}}(v_{l_i}^{(i)}, v_{l_k}^{(k)}) - r_i + 4(k - i)\epsilon \delta \lg n.
\end{align*}
The base case is clear, since for $k = i$ this is just inequality \eqref{eq:base}. Now consider some step $i < k + 1$. Then
\begin{align*}
    \hat{d}^{t+b}_{\DSBack}(v_{1}^{(k+1)}) &\leq \hat{d}^{t+b}_{\DSBack}(v_{l_k}^{(k)}) + \omega(v_{l_k}^{(k)}, v_{1}^{(k+1)}) +  \epsilon\delta \\
    &\leq \hat{d}^{t + b}_{\DSFront}(v_{l_{i}}^{(i)}) + d_{\pi_{s,x}}(v_{l_i}^{(i)}, v_{1}^{(k+1)}) - r_i + 4(k - i)\epsilon \delta \lg n + \epsilon \delta \tag*{(step)}\label{eq:step} \\
    &\leq \hat{d}^{t + b}_{\DSFront}(v_{l_{i}}^{(i)}) + d_{\pi_{s,x}}(v_{l_i}^{(i)}, v_{1}^{(k+1)}) - r_i + 5(k - i)\epsilon \delta \lg n
\end{align*}
where the first inequality is due to \Cref{inv:max_str_edge}, and the second is an application of the induction hypothesis. By  \Cref{def:tense_segment} we have for all $v \in \sigma_{k+1}$ 
\begin{align*}
    \hat{d}^{t+b}_{\DSFront}(v) \geq \hat{d}^{t+b}_{\DSFront}(v_{l_i}^{(i)}) + d_{\pi_{s,x}}(v_{l_i}^{(i)}, v) - r_i + 5 (k + 1 - i) \epsilon \delta \lg n.
\end{align*}
Subtracting the previous two inequalities yields again for every $v \in \sigma_{k+1}$ 
\begin{align*}
    \hat{d}^{t+b}_{\DSFront}(v) - \hat{d}^{t+b}_{\DSBack}(v_{1}^{(k+1)}) &\geq d_{\pi_{s,x}}(v_{l_i}^{(i)}, v) - d_{\pi_{s,x}}(v_{l_i}^{(i)}, v_{1}^{(k+1)}) + 5 \epsilon \delta \lg n  \\
    \implies \hat{d}^{t+b}_{\DSFront}(v) &\geq \hat{d}^{t+b}_{\DSBack}(v_{1}^{(k+1)}) + d_{\pi_{s,x}}(v_{1}^{(k+1)}, v)  + 5 \epsilon \delta \lg n.
\end{align*}
Since $\hat{d}^{\tHalf}_{\DSBack}(v) = \hat{d}^{\tHalf}_{\DSFront}(v)\geq \hat{d}^{t+b}_{\DSFront}(v)$ because distance estimates only ever decrease, this means that we satisfy the assumptions of  \Cref{lem:segment_pulled}, and can thus conclude 
\begin{align*}
    \hat{d}_{\DSBack}^{t+b}(v_{l_{k+1}}^{(k+1)}) \leq  \hat{d}_{\DSBack}^{t+b}(v_1^{(k+1)}) + d_{\pi_{s,x}}(v_1^{(k+1)},v_{l_{k+1}}^{(k+1)}) + 3\epsilon \delta \lg n.
\end{align*}
By plugging in inequality \ref{eq:step} for $\hat{d}_{\DSBack}^{t+b}(v_1^{(k+1)})$ this yields
\begin{align*}
     \hat{d}_{\DSBack}^{t+b}(v_{l_{k+1}}^{(k+1)}) \leq \hat{d}^{t + b}_{\DSFront}(v_{l_{i}}^{(i)}) + d_{\pi_{s,x}}(v_{l_i}^{(i)}, v_{l_{k+1}}^{(k+1)})) - r_i + 4(k + 1 - i)\epsilon \delta \lg n
\end{align*}
which concludes our induction. Since $v_{l_{p+1}}^{(p+1)} = x$ and $p+1 \leq B \leq m^{1/3}$, we have 
\begin{align*}
    4(p + 1 - i)\epsilon \delta \lg n \leq 4\epsilon \tau \lg n
\end{align*} for $\delta = \tau/m^{1/3}$ and we conclude the proof.
\end{proof}

\paragraph{The slack of the path.} Finally, we address the very first point in our overview and analyse the total slack of some well-picked segments on the path $\pi_{s,x}$ at time $t+b$, which will allow us to argue about the number of tense vertices on $\pi_{s,x}$.  

To help define our sequence, we first define the forward segment of a segment.

\begin{definition}
Given a segment $\sigma_i$, we define the forward segment of $\sigma_i$ to be the segment $\sigma_j$ with $i < j$ for
\begin{align*}
    j = \min\left( j \in i + 1, ..., p + 1: \exists v \in \sigma_j : \hat{d}_{\DSFront}^{t+b}(v) = \hat{d}_{\DSFront}^{t+b}(v_{l_i}^{(i)}) + d_{\pi_{s,x}}(v_{l_i}^{(i)}, v) - r_i + 5 (j - i) \epsilon \delta \lg n\right).
\end{align*}
We define the forward-pair of $\sigma_i$ denoted by $forward(\sigma_i)$ to be the tuple $(\sigma_j, v_j^\star)$, where $v_j^\star \in \sigma_j$ is an arbitrary but fixed vertex such that 
\begin{align*}
    \hat{d}_{\DSFront}^{t+b}(v_j^\star) = \hat{d}_{\DSFront}^{t+b}(v_{l_i}^{(i)}) + d_{\pi_{s,x}}(v_{l_i}^{(i)}, v_j^\star) - r_i + 5 (j - i) \epsilon \delta \lg n. 
\end{align*}
\end{definition}

Next, we construct a (sub-)sequence of segments, using the previous definition. The construction of this sequence makes sure that there are many tense segments among them. 

\begin{definition}
We define a (sub-)sequence of segments $\sigma_{i(j)}$. We start with $\sigma_{i(1)} = \sigma_1$, and set $(\sigma_{i(j+1)}, v_{i(j+1)}^\star) = forward(\sigma_{i(j)})$ until $\sigma_{i(j)} = \sigma_{p+1}$. Let $p' + 1$ be the number of such segments. 
For convenience, we define $v_{i(1)}^\star$ to be some vertex $v$ in $\sigma_1$ that witnesses its slack, i.e. $slack^{t+b}_{\DSFront}(\sigma_1) = \hat{d}^{t+b}_{\DSFront}(v_{l_1}^{(1)}) - \hat{d}^{t+b}_{\DSFront}(v) - d_{\pi_{s,x}}(v, v_{l_1}^{(1)})$.
\label{def:forward_seq}
\end{definition}

We observe that $v_{i(j)}^\star$ is a vertex that witnesses the slack of segment $\sigma_{i(j)}$ by careful inspection of the definitions.

\begin{property}
For $j \in [p' + 1]$: $slack^{t+b}_{\DSFront}(\sigma_{i(j)}) = \hat{d}^{t+b}_{\DSFront}(v_l^{({i(j)})}) - \hat{d}^{t+b}_{\DSFront}(v_{i(j)}^\star) - d_{\pi_{s,x}}(v_{i(j)}^\star, v_l^{({i(j)})})$.
\label{lem:v_j_witnesses_slack}
\end{property}
\begin{comment}
\begin{proof}
For $j = 1$ this holds by definition of $v_{i(1)}$. For $j > 1$ assume the contrary. Then there exists $v_j'$ so that 
\begin{align*}
    \hat{d}_{\DSFront}^{t+b}(v_{l_j}^{({i(j)})}) - \hat{d}_{\DSFront}^{t+b}(v_{j}') - d_{\pi_{s,x}}(v_{j}', v_{l_{i(j)}}^{({i(j)})}) >\hat{d}_{\DSFront}^{t+b}(v_{l_{i(j)}}^{({i(j)})}) - \hat{d}_{\DSFront}^{t+b}(v_{j}^\star) - d_{\pi_{s,x}}(v_{j}^\star, v_{l_{i(j)}}^{({i(j)})})
\end{align*}
and therefore, by the definition of $v_j^\star$,
\begin{align*}
        \hat{d}_{\DSFront}^{t+b}(v_j') < \hat{d}_{\DSFront}^{t+b}(v_{l_{i(j - 1)}}^{({i(j- 1)})}) + d_{\pi_{s,x}}(v_{l_{i(j-1)}}^{({i(j-1)})}, ') - r_{i(j-1)} + 5 (j - i) \epsilon \delta \lg n
\end{align*}
in contradiction to the definition of $r_{i(j-1)}$.
\end{proof}
\end{comment}

Next we construct an upper bound for the distance estimate of the last vertex in a segment, which is important to show that the cumulative slack is high.
\begin{lemma}
For $2 \leq j \leq p' + 1$ we have 
\begin{align*}
    \hat{d}_{\DSFront}^{t+b}(v_{i(j)}^\star) \leq  d_{\pi_{s,x}}(s, v_{i(j)}^\star) + \sum_{k = 1}^{j - 1} slack^{t+b}_{\DSFront}(\sigma_{i(k)}) - r_{i(k)} + (i(j) - 1)5\epsilon \delta \lg n.
\end{align*}
\label{lem:sum_low_vert}
\end{lemma}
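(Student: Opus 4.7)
I would prove the claim by induction on $j$, driven by a one-step recurrence for $\hat{d}^{t+b}_{\DSFront}(v_{i(j+1)}^\star)$ obtained by combining the defining equation of $forward$ with Property~\ref{lem:v_j_witnesses_slack}. The recurrence telescopes cleanly once one adds over $k = 1, \dots, j-1$, so the only nontrivial ingredient is pinning down the base case $j = 2$.

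\textbf{Deriving the recurrence.} From $(\sigma_{i(j+1)}, v_{i(j+1)}^\star) = forward(\sigma_{i(j)})$, the definition of the forward-pair gives
\begin{align*}
    \hat{d}^{t+b}_{\DSFront}(v_{i(j+1)}^\star) = \hat{d}^{t+b}_{\DSFront}(v_{l_{i(j)}}^{(i(j))}) + d_{\pi_{s,x}}(v_{l_{i(j)}}^{(i(j))}, v_{i(j+1)}^\star) - r_{i(j)} + 5(i(j+1) - i(j))\epsilon \delta \lg n,
\end{align*}
and Property~\ref{lem:v_j_witnesses_slack} applied to $\sigma_{i(j)}$ gives
\begin{align*}
    \hat{d}^{t+b}_{\DSFront}(v_{l_{i(j)}}^{(i(j))}) = \hat{d}^{t+b}_{\DSFront}(v_{i(j)}^\star) + d_{\pi_{s,x}}(v_{i(j)}^\star, v_{l_{i(j)}}^{(i(j))}) + slack^{t+b}_{\DSFront}(\sigma_{i(j)}).
\end{align*}
Since $v_{i(j)}^\star$ lies in $\sigma_{i(j)}$ (so precedes $v_{l_{i(j)}}^{(i(j))}$ on $\pi_{s,x}$) and $v_{i(j+1)}^\star$ lies in the strictly later segment $\sigma_{i(j+1)}$, the two path distances concatenate to $d_{\pi_{s,x}}(v_{i(j)}^\star, v_{i(j+1)}^\star)$, and substituting yields the recurrence
\begin{align*}
    \hat{d}^{t+b}_{\DSFront}(v_{i(j+1)}^\star) = \hat{d}^{t+b}_{\DSFront}(v_{i(j)}^\star) + d_{\pi_{s,x}}(v_{i(j)}^\star, v_{i(j+1)}^\star) + slack^{t+b}_{\DSFront}(\sigma_{i(j)}) - r_{i(j)} + 5(i(j+1) - i(j))\epsilon \delta \lg n.
\end{align*}

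\textbf{Base case $j = 2$.} Since $\sigma_{i(1)} = \sigma_1$ starts at $s$ with $\hat{d}^{t+b}_{\DSFront}(s) = 0 = d(s,s)$, while $\hat{d}^{t+b}_{\DSFront}(v) \geq d(s,v)$ for all $v$, the quantity $\hat{d}^{t+b}_{\DSFront}(v) + d_{\pi_{s,x}}(v, v_{l_1}^{(1)})$ is at least $d(s, v_{l_1}^{(1)})$ on $\sigma_1$ and equals it at $v = s$; hence every slack-witness $v_{i(1)}^\star$ satisfies $\hat{d}^{t+b}_{\DSFront}(v_{i(1)}^\star) = d(s, v_{i(1)}^\star) = d_{\pi_{s,x}}(s, v_{i(1)}^\star)$, where the last equality uses that a prefix of the shortest path $\pi_{s,x}$ is itself shortest. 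Plugging this identity together with $i(1) = 1$ into the recurrence at $j = 1$ gives exactly the claimed bound at $j = 2$.

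\textbf{Inductive step.} Assuming the bound at $j$, one substitutes the induction hypothesis into the right-hand side of the recurrence and uses the telescoping identity $d_{\pi_{s,x}}(s, v_{i(j)}^\star) + d_{\pi_{s,x}}(v_{i(j)}^\star, v_{i(j+1)}^\star) = d_{\pi_{s,x}}(s, v_{i(j+1)}^\star)$; the sum of slack$- r$ terms extends by one index and the $5\epsilon\delta\lg n$ terms combine into $(i(j+1) - 1)5\epsilon\delta\lg n$, which is the bound at $j+1$. The main obstacle is purely bookkeeping, namely making sure the two path-distance concatenations are legitimate (which follows from the ordering $v_{i(j)}^\star \prec v_{l_{i(j)}}^{(i(j))} \prec v_{i(j+1)}^\star$ along $\pi_{s,x}$ enforced by $i(j+1) > i(j)$) and that the base-case witness argument genuinely forces $\hat{d}^{t+b}_{\DSFront}(v_{i(1)}^\star) = d_{\pi_{s,x}}(s, v_{i(1)}^\star)$; everything else is algebra.
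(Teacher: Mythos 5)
Your proof is correct and takes essentially the same route as the paper's: both proceed by induction on $j$ using the same recurrence obtained from the forward-pair definition and Property~\ref{lem:v_j_witnesses_slack}. The only cosmetic difference is in the base case, where you establish the sharper identity $\hat{d}^{t+b}_{\DSFront}(v_{i(1)}^\star) = d_{\pi_{s,x}}(s, v_{i(1)}^\star)$ and feed it through the recurrence at $j=1$, whereas the paper directly bounds $\hat{d}^{t+b}_{\DSFront}(v_{l_1}^{(1)})$ via the slack inequality $slack^{t+b}_{\DSFront}(\sigma_1) + d(s, v_{l_1}^{(1)}) \geq \hat{d}^{t+b}_{\DSFront}(v_{l_1}^{(1)})$; both reductions are equivalent.
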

\begin{proof}
We show this claim by induction on $j$. For $j = 2$ we have 
\begin{align*}
    \hat{d}_{\DSFront}^{t+b}(v_{i(2)}^\star) &= \hat{d}_{\DSFront}^{t+b}(v_{l_1}^{(1)}) + d_{\pi_{s,x}}(v_{l_1}^{(1)}, v_{i(2)}^\star) - r_{i(1)} + 5 (i(2) - 1) \epsilon \delta \lg n \\
    &\leq slack_{\DSFront}^{t+b}(\sigma_{i(1)}) - r_{i(1)} + d_{\pi_{s,x}}(s, v_{i(2)}^\star) + 5 (i(2) - 1) \epsilon \delta \lg n. 
\end{align*}
where the first equality is by the defining property of $v_{i(2)}^\star$ and the second follows from the definition of slack, i.e. $slack_{\DSFront}^{t+b}(\sigma_{i(1)}) + d(s, v_{l_1}^{(1)}) \geq \hat{d}_{\DSFront}^{t+b}(v_{l_1}^{(1)})$.

For some $2 < j + 1 < p' + 1$ we similarly get
\begin{align*}
    \hat{d}_{\DSFront}^{t+b}(v_{i(j + 1)}^\star) &= \hat{d}_{\DSFront}^{t+b}(v_{l_{i(j)}}^{(i(j))}) + d_{\pi_{s,x}}(v_{l_{i(j)}}^{(i(j))}, v_{i(j + 1)}^\star) - r_{i(j)} + 5 (i(j + 1) - i(j)) \epsilon \delta \lg n \\
    &= \hat{d}_{\DSFront}^{t+b}(v_{i(j)}^\star) + slack_{\DSFront}^{t+b}(\sigma_{i(j)}) + d_{\pi_{s,x}}(v_{i(j)}^\star v_{i(j + 1)}^\star) - r_{i(j)} 5 (i(j + 1) - i(j)) \epsilon \delta \lg n \\
    &\leq d_{\pi_{s,x}}(s, v_{i(j)}^\star) + \sum_{k = 1}^{j - 1} slack^{t+b}_{\DSFront}(\sigma_{i(k)}) - r_{i(k)} + (i(j) - 1)5\epsilon \delta \lg n
\end{align*}
where the second equality is due to \Cref{lem:v_j_witnesses_slack} and the inequality is due to the induction hypothesis. This concludes our proof. 
\end{proof}

Finally, we show that the sum of the slack of our segments is large.

\begin{lemma}
$ \sum_{k = 1}^{p' + 1} slack^{t+b}_{\DSFront}(\sigma_{i(k)}) - r_{i(k)} \geq 95 \epsilon \tau \lg n$ 
\label{lem:sum_contributes}
\end{lemma}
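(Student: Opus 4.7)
The plan is to chain \Cref{lem:sum_low_vert} at the index $j = p'+1$ with \Cref{lem:v_j_witnesses_slack} applied to the final segment $\sigma_{i(p'+1)}$, and then invoke the standing assumption that $\pi_{s,x}$ carries large error in $\DSFront$ at time $t+b$.

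First I would record two structural observations: by \Cref{def:forward_seq} the subsequence terminates at $\sigma_{i(p'+1)} = \sigma_{p+1}$, whose last vertex is $x$, and by \Cref{def:tense_segment} the value $r_{i(p'+1)} = r_{p+1}$ equals $0$. Applying \Cref{lem:v_j_witnesses_slack} at $j = p'+1$ gives
\[
\hat{d}^{t+b}_{\DSFront}(x) = \hat{d}^{t+b}_{\DSFront}(v_{i(p'+1)}^\star) + d_{\pi_{s,x}}(v_{i(p'+1)}^\star, x) + slack^{t+b}_{\DSFront}(\sigma_{i(p'+1)}).
\]
Substituting the upper bound on $\hat{d}^{t+b}_{\DSFront}(v_{i(p'+1)}^\star)$ from \Cref{lem:sum_low_vert}, using $d_{\pi_{s,x}}(s, v_{i(p'+1)}^\star) + d_{\pi_{s,x}}(v_{i(p'+1)}^\star, x) = d^{t+b}(s,x)$, and folding the final slack into the sum (which is legitimate precisely because $r_{i(p'+1)} = 0$) yields
\[
\hat{d}^{t+b}_{\DSFront}(x) \leq d^{t+b}(s,x) + \sum_{k=1}^{p'+1}\bigl(slack^{t+b}_{\DSFront}(\sigma_{i(k)}) - r_{i(k)}\bigr) + 5\bigl(i(p'+1)-1\bigr)\epsilon \delta \lg n.
\]

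The final step is to subtract $d^{t+b}(s,x)$ from both sides, apply the hypothesis $\hat{d}^{t+b}_{\DSFront}(x) - d^{t+b}(s,x) \geq 100\epsilon \tau \lg n$ from \Cref{lem:total_error_prob_start}, and bound the additive overhead: since $i(p'+1) - 1 \leq p \leq B \leq m^{1/3}$ and $\delta = \tau/m^{1/3}$, we obtain $5(i(p'+1)-1)\epsilon \delta \lg n \leq 5\epsilon \tau \lg n$, leaving the claimed $95 \epsilon \tau \lg n$ as a lower bound on the sum.

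The only place that deserves careful bookkeeping is the treatment of the last term $slack^{t+b}_{\DSFront}(\sigma_{i(p'+1)}) - r_{i(p'+1)}$, since \Cref{lem:sum_low_vert} only sums up to $k = p'$. The identity $r_{p+1} = 0$ is exactly what allows us to extend the sum to $p'+1$ via \Cref{lem:v_j_witnesses_slack}, and the rest of the argument is routine substitution.
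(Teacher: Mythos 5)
Your proof is essentially identical to the paper's: you apply Property~\ref{lem:v_j_witnesses_slack} to the final segment, substitute the upper bound from Lemma~\ref{lem:sum_low_vert}, fold in the last summand using $r_{p+1}=0$, and close with the $100\epsilon\tau\lg n$ error hypothesis and the bound $i(p'+1)\leq m^{1/3}$. The only omission is the degenerate case $p'+1=1$, where Lemma~\ref{lem:sum_low_vert} does not apply (it requires $j\geq 2$); the paper dispatches this separately in one line, and you should do the same.
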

\begin{proof}
We have \begin{align*}
    \hat{d}^{t+b}_{\DSFront}(x) = slack^{t+b}_{\DSFront}(\sigma_{i(p' + 1)}) + \hat{d}^{t+b}_{\DSFront}(v_{i(p' + 1)}^\star) + d_{\pi_{s,x}}(v_{i(p' + 1)}^\star, x)
\end{align*} by \Cref{lem:v_j_witnesses_slack}. If there is just one segment, we are done after subtracting $d(s,x)$ from both sides since $r_{p+1} = 0$. 

Otherwise, combining this with our previous \Cref{lem:sum_low_vert}, we obtain 
\begin{align*}
    \hat{d}^{t+b}_{\DSFront}(x) \leq  d_{\pi_{s,x}}(s, x) + \sum_{k = 1}^{p' + 1} slack^{t+b}_{\DSFront}(\sigma_{i(k)}) - r_{i(k)} + (i(p' + 1) - 1)5\epsilon \delta \lg n
\end{align*}
by direct calculation using $r_{i(p' + 1)} = r_{p + 1}  = 0$ by definition. Subtracting $d(s,x)$ from both sides we get
\begin{align*}
    100 \tau \lg n \leq \hat{d}^{t+b}_{\DSFront}(x) - d_{\pi_{s,x}}(s,x) &\leq  \sum_{k = 1}^{p' + 1} slack^{t+b}_{\DSFront}(\sigma_{i(k)}) - r_{i(k)} + (i(p' + 1) - 1)5\epsilon \delta \lg n \\
    &\leq \sum_{k = 1}^{p' + 1} slack^{t+b}_{\DSFront}(\sigma_{i(k)}) - r_{i(k)} + 5 \epsilon \tau \lg n
\end{align*}
where the second inequality follows from $i(p' + 1) \leq m^{1/3}$. We conclude our result by subtracting $5 \epsilon \tau \lg n$ on both sides.
\end{proof}

\begin{lemma}
For all $j \in [p' + 1]$
\begin{align*}
    \hat{d}^{t+b}_{\DSFront}(v_{l_{i(j)}}^{(i(j))}) - r_{i(j)} \leq  d_{\pi_{s,x}}(s, v_{l_{i(j)}}^{(i(j))}) + \sum_{k = 1}^{j} slack^{t+b}_{\DSFront}(\sigma_{i(k)}) - r_{i(k)} + 5 \epsilon \tau \lg n
\end{align*} 
\label{lem:dist_to_end}
\end{lemma}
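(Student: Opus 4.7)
The plan is to split into the base case $j=1$ and the inductive case $j \geq 2$, and in both cases use \Cref{lem:v_j_witnesses_slack} to express $\hat{d}^{t+b}_{\DSFront}(v_{l_{i(j)}}^{(i(j))})$ in terms of $\hat{d}^{t+b}_{\DSFront}(v_{i(j)}^\star)$, which can in turn be controlled by \Cref{lem:sum_low_vert}. The additivity $d_{\pi_{s,x}}(s, v_{i(j)}^\star) + d_{\pi_{s,x}}(v_{i(j)}^\star, v_{l_{i(j)}}^{(i(j))}) = d_{\pi_{s,x}}(s, v_{l_{i(j)}}^{(i(j))})$ of distances along the path will then telescope things into the desired form.

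For $j \geq 2$, I would first rewrite, using \Cref{lem:v_j_witnesses_slack},
\[
\hat{d}^{t+b}_{\DSFront}(v_{l_{i(j)}}^{(i(j))}) - r_{i(j)} = \hat{d}^{t+b}_{\DSFront}(v_{i(j)}^\star) + d_{\pi_{s,x}}(v_{i(j)}^\star, v_{l_{i(j)}}^{(i(j))}) + slack^{t+b}_{\DSFront}(\sigma_{i(j)}) - r_{i(j)},
\]
then plug in the bound on $\hat{d}^{t+b}_{\DSFront}(v_{i(j)}^\star)$ from \Cref{lem:sum_low_vert}, and use path-additivity to merge $d_{\pi_{s,x}}(s, v_{i(j)}^\star) + d_{\pi_{s,x}}(v_{i(j)}^\star, v_{l_{i(j)}}^{(i(j))})$ into $d_{\pi_{s,x}}(s, v_{l_{i(j)}}^{(i(j))})$. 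The $slack^{t+b}_{\DSFront}(\sigma_{i(j)}) - r_{i(j)}$ factor brings the partial sum from index $j-1$ up to $j$, yielding the claimed bound with an error term $(i(j)-1) \cdot 5\epsilon\delta\lg n$. Since $i(j) \leq p+1 \leq B \leq m^{1/3}$ and $\delta = \tau/m^{1/3}$, this residual is at most $5\epsilon\tau\lg n$, which matches the statement.

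For $j = 1$, \Cref{lem:sum_low_vert} does not apply directly, so I would use instead that $\sigma_1$ begins at $s$ and $\hat{d}^{t+b}_{\DSFront}(s) = 0$. Applying the definition of slack with $v = s$ yields
\[
slack^{t+b}_{\DSFront}(\sigma_1) \;\geq\; \hat{d}^{t+b}_{\DSFront}(v_{l_1}^{(1)}) - \hat{d}^{t+b}_{\DSFront}(s) - d_{\pi_{s,x}}(s, v_{l_1}^{(1)}) \;=\; \hat{d}^{t+b}_{\DSFront}(v_{l_1}^{(1)}) - d_{\pi_{s,x}}(s, v_{l_1}^{(1)}),
\]
which, after subtracting $r_{i(1)}$ from both sides, gives the $j=1$ case (with the $5\epsilon\tau\lg n$ slack to spare since it is non-negative).

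The main obstacle I anticipate is making sure that the base-case bookkeeping lines up with the general case, in particular justifying that $v_{i(j)}^\star$ for $j \geq 2$ sits at or before $v_{l_{i(j)}}^{(i(j))}$ on $\pi_{s,x}$ so that path-additivity applies — this follows from the fact that $v_{i(j)}^\star \in \sigma_{i(j)}$ by construction in \Cref{def:forward_seq}, so the two vertices lie on the same segment with $v_{l_{i(j)}}^{(i(j))}$ at the end. The other mild point is that the error accumulation constant in \Cref{lem:sum_low_vert} is $(i(j)-1)\cdot 5\epsilon\delta\lg n$ rather than indexed by the subsequence position $j$, but this is exactly what lets us bound it cleanly by $5\epsilon\tau\lg n$ using the absolute bound $i(j) \leq m^{1/3}$.
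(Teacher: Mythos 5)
Your proof is correct and follows essentially the same approach as the paper: the $j=1$ base case is handled via the definition of slack applied to $\sigma_1$ (which starts at $s$ with $\hat{d}(s)=0$), and the $j\geq 2$ case combines \Cref{lem:sum_low_vert} with the rearrangement of \Cref{lem:v_j_witnesses_slack}, merging distances along the path. The only difference is that you substitute \Cref{lem:v_j_witnesses_slack} first and then apply \Cref{lem:sum_low_vert}, whereas the paper does these two algebraic steps in the opposite order — the content is identical.
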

\begin{proof}
For $j = 1$ this directly holds since 
\begin{align*}
    \hat{d}^{t+b}_{\DSFront}(v_{l_{i(1)}}^{(i(j))}) \leq d_{\pi_{s,x}}(s, v_{l_{i(1)}}^{(i(1))}) + slack^{t+b}_{\DSFront}(\sigma_{i(1)}) 
\end{align*}
by the definition of slack. For any other $j$, we have 
\begin{align*}
    \hat{d}_{\DSFront}^{t+b}(v_{i(j)}^\star) &\leq  d_{\pi_{s,x}}(s, v_{i(j)}^\star) + \sum_{k = 1}^{j - 1} slack^{t+b}_{\DSFront}(\sigma_{i(k)}) - r_{i(k)} + (i(j) - 1)5\epsilon \delta \lg n. \\
    &\leq  d_{\pi_{s,x}}(s, v_{i(j)}^\star) + \sum_{k = 1}^{j - 1} slack^{t+b}_{\DSFront}(\sigma_{i(k)}) - r_{i(k)} + 5\epsilon \tau \lg n
\end{align*}
by \Cref{lem:sum_low_vert} and $i(j) \leq m^{1/3}$. Combined with 
\begin{align*}
    slack^{t+b}_{\DSFront}(\sigma_{i(j)}) &= \hat{d}^{t+b}_{\DSFront}(v_l^{(i(j))}) - \hat{d}^{t+b}_{\DSFront}(v_{i(j)}^\star) - d_{\pi_{s,x}}(v_{i(j)}^\star, v_l^{{i(j)}}) \\
    \hat{d}^{t+b}_{\DSFront}(v_{i(j)}^\star) &= \hat{d}^{t+b}_{\DSFront}(v_l^{(i(j))}) -  d_{\pi_{s,x}}(v_{i(j)}^\star, v_l^{{i(j)}}) -   slack^{t+b}_{\DSFront}(\sigma_{i(j)}) 
\end{align*}
obtained from \Cref{lem:v_j_witnesses_slack} by reordering terms, we obtain our result by plugging in this derived value for $\hat{d}^{t+b}_{\DSFront}(v_{i(j)}^\star)$ and adding $d_{\pi_{s,x}}(v_{i(j)}^\star, v_l^{{i(j)}}) + slack^{t+b}_{\DSFront}(\sigma_{i(j)}) - r_{i(j)}$ on both sides. 
\end{proof}

\paragraph{The probability of failure.} We have now developed the necessary machinery to analyze the probability that our algorithm fails to maintain distance estimates correctly. We will show that it is negligible. To bound the probability of failure, we construct some tense segments that cumulatively contain a lot of tense vertices, and then show that if any of them was hit the potential would have decreased a lot. Finally, we argue that with high probability, a tense vertex is hit.

\begin{lemma}
Given the path $\pi_{s,x}$ as fixed throughout the section, we have that $\DSBack$ at time $\tHalfInText$ contains at least  $m^{1/3} \lg n$ tense vertices $w_j$ with respect to how we chose $\pi_{s,x}$. Further, we have for every $j \neq k$, that $|\hat{d}^{\tHalf}_{\DSBack}(w_j) - \hat{d}^{\tHalf}_{\DSBack}(w_k)| \geq \epsilon \delta$ and $\hat{d}^{t+b}_{\DSBack}(x) \leq \hat{d}^{t+b}_{\DSFront}(x) - \epsilon \tau$ if one of them gets hit between $\tHalfInText$ and $t$. 
\label{lem:tense_vertices_not_fixed}
\end{lemma}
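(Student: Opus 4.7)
The plan is to extract the tense vertices from the forward sequence of segments $\sigma_{i(1)}, \ldots, \sigma_{i(p'+1)}$ built in Definition \ref{def:forward_seq}. Writing $\mu_{i(k)} = slack_{\DSFront}^{t+b}(\sigma_{i(k)}) - r_{i(k)}$, Lemma \ref{lem:sum_contributes} provides a budget $\sum_{k}\mu_{i(k)} \geq 95\epsilon\tau\lg n$. I will call a segment $\sigma_{i(j)}$ \emph{extractable} when the precondition of Lemma \ref{lem:pullback} holds, namely $\mu_{i(j)} \geq 4 d_{\pi_{s,x}}(v_1^{(i(j))},v_{l_{i(j)}}^{(i(j))}) + 3\epsilon\delta\lg n$, and additionally the partial sum satisfies $\sum_{k\leq j}\mu_{i(k)} \leq 91\epsilon\tau\lg n - \epsilon\tau$. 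Segments failing the first condition contribute in aggregate at most $O(\tau + \epsilon\tau\lg n)$ to the budget, using $\sum_j d_{\pi_{s,x}}(\sigma_{i(j)}) \leq 2\tau$ and $p'+1 \leq m^{1/3}$; the cumulative condition wastes at most the slack of a single violating segment, bounded by $\tau_{max} = O(\tau)$. For $n$ sufficiently large with $\epsilon$ constant, the aggregate extractable slack is still $\Omega(\epsilon\tau\lg n)$.

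Applying Lemma \ref{lem:pullback} to each extractable $\sigma_{i(j)}$ yields at least $\mu_{i(j)}/(4\epsilon\delta) - O(\lg n)$ tense vertices in $\DSBack$ at time $\tHalf$, each satisfying the three bullet points of that lemma. Summing over extractable segments gives $\Omega\!\left(\epsilon\tau\lg n/(\epsilon\delta)\right) = \Omega(m^{1/3}\lg n)$ tense vertices overall, which meets the required count after adjusting constants.

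For the separation $|\hat{d}^{\tHalf}_{\DSBack}(w_j)-\hat{d}^{\tHalf}_{\DSBack}(w_k)| \geq \epsilon\delta$, property 1 of Lemma \ref{lem:pullback} already handles pairs within a single segment. For cross-segment pairs, I will sort all extracted tense vertices by distance estimate and greedily include them while enforcing an $\epsilon\delta$ gap. A careful bookkeeping argument, using property 2 of Lemma \ref{lem:pullback} to localize the tense vertices of $\sigma_{i(j)}$ in a vertical window just below $\hat{d}^{t+b}_{\DSFront}(v_{l_{i(j)}}^{(i(j))}) - r_{i(j)}$, together with the structure of the forward sequence showing these windows do not stack catastrophically, guarantees that the greedy selection still preserves $\Omega(m^{1/3}\lg n)$ vertices.

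The main technical obstacle is the third claim: hitting any tense vertex forces $\hat{d}^{t+b}_{\DSBack}(x) \leq \hat{d}^{t+b}_{\DSFront}(x) - \epsilon\tau$. Suppose a tense vertex in an extractable $\sigma_{i(j)}$ is hit. Lemma \ref{lem:pullback} point 3 gives
\[
\hat{d}^{t+b}_{\DSBack}(x) \leq \hat{d}^{t+b}_{\DSFront}(v_{l_{i(j)}}^{(i(j))}) + d_{\pi_{s,x}}(v_{l_{i(j)}}^{(i(j))}, x) - r_{i(j)} + 4\epsilon\tau\lg n.
\]
Plugging in Lemma \ref{lem:dist_to_end} and using that $\pi_{s,x}$ is a shortest path through $v_{l_{i(j)}}^{(i(j))}$, so that $d(s,v_{l_{i(j)}}^{(i(j))})+d_{\pi_{s,x}}(v_{l_{i(j)}}^{(i(j))},x) = d(s,x)$, simplifies this to
\[
\hat{d}^{t+b}_{\DSBack}(x) \leq d(s,x) + \textstyle\sum_{k\leq j}\mu_{i(k)} + 9\epsilon\tau\lg n \leq d(s,x) + 100\epsilon\tau\lg n - \epsilon\tau \leq \hat{d}^{t+b}_{\DSFront}(x) - \epsilon\tau,
\]
where the second step uses extractability of $\sigma_{i(j)}$ and the third uses $\hat{d}^{t+b}_{\DSFront}(x) \geq d(s,x)+100\epsilon\tau\lg n$ (the defining property of the chosen $\pi_{s,x}$). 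The delicate part is calibrating the constants inside extractability against the slack numerology of Lemmas \ref{lem:sum_contributes} and \ref{lem:dist_to_end} so that enough segments remain extractable to supply the required $m^{1/3}\lg n$ tense vertices while the $\epsilon\tau$ gap is preserved.
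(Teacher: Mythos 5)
Your overall scaffold mirrors the paper's: filter the forward sequence to segments that satisfy the precondition of \Cref{lem:pullback}, use that lemma to manufacture tense vertices, and combine \Cref{lem:dist_to_end} with a partial-sum cutoff to certify the $\epsilon\tau$ drop in $\DSBack$ if any of them is hit. But there are two real gaps.

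First, your handling of the partial-sum cutoff does not work. You assert that the ``cumulative condition wastes at most the slack of a single violating segment, bounded by $\tau_{max}=O(\tau)$''. This is false on both counts. The maximum distance estimate is $\tau_{max}=(2+200\lg n\,\epsilon)\tau+1=\Theta(\epsilon\tau\lg n)$, not $O(\tau)$, and the slack of a single segment $\sigma_{i(j)}$ (measured with respect to $\hat d^{t+b}_{\DSFront}(v_{l_{i(j)}}^{(i(j))})$) is only bounded by that quantity. So a single boundary segment can absorb essentially the entire $95\epsilon\tau\lg n$ budget of \Cref{lem:sum_contributes}, in which case none of your ``extractable'' segments together carry $\Omega(\epsilon\tau\lg n)$ slack, and the tense-vertex count collapses. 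The paper sidesteps this with a device you don't have: it picks $p^\star$ as the first index where the running sum exceeds a fixed threshold and then \emph{inflates} $r_{i(p^\star)}$ to a new value $r'_{i(p^\star)}$ so that the running total of $slack-r'$ up to $p^\star$ is \emph{exactly} $60\epsilon\tau\lg n$. This effectively admits the boundary segment only partially, so no budget is lost, and the argument holds uniformly regardless of how the slack is distributed among segments. Some mechanism of this kind is essential; a hard cutoff on the partial sum, as you propose, cannot be repaired by merely ``adjusting constants''.

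Second, the cross-segment $\epsilon\delta$-separation is waved at (``a careful bookkeeping argument\dots these windows do not stack catastrophically'') rather than proved, and a greedy thinning pass is not needed. The paper gets the separation directly: for $i(j)<i(k)$, \Cref{def:tense_segment} (with the $+5(j-i)\epsilon\delta\lg n$ terms) together with property 2 of \Cref{lem:pullback} and the monotonicity $\hat d^{\tHalf}_{\DSBack}\ge \hat d^{t+b}_{\DSFront}$ forces every tense vertex of $\sigma_{i(j)}$ to sit at least $5\epsilon\delta\lg n$ below every tense vertex of $\sigma_{i(k)}$ in distance estimate, so the windows are pairwise disjoint by a comfortable margin and no vertices are discarded. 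You should make that explicit rather than appeal to bookkeeping and then rescale.
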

\begin{proof}
We let 
\begin{align*}
    p^\star = \min\left( p \in {[p' + 1}] : \sum_{k = 1}^{p} slack^{t+b}_{\DSFront}(\sigma_{i(k)}) - r_{i(k)} \geq 60 \epsilon \tau \lg n\right).
\end{align*}
By \Cref{lem:sum_contributes} this is well defined. Next, we set 
\begin{align*}
    r'_{i(p^\star)} = r_{i(p^\star)} + \sum_{k = 1}^{p^\star} slack^{t+b}_{\DSFront}(\sigma_{i(k)}) - r_{i(k)} - 60 \epsilon \tau \lg n
\end{align*}
and $r'_{i(j)} = r_{i(j)}$ for $j \neq p^\star$. Then 
\begin{align*}
    \sum_{k = 1}^{p^\star} slack^{t+b}_{\DSFront}(\sigma_{i(k)}) - r'_{i(k)} = 60 \epsilon \tau \lg n
\end{align*}

Next, we make sure to fulfill the preconditions of \Cref{lem:pullback}. Therefore, we only look at elements of the sum $\sum_{i = k}^{p} slack^{t+b}_{\DSFront}(\sigma_{i(k)}) - r_{i(k)}$ so that $slack^{t+b}_{\DSFront}(\sigma_{i(k)}) - r_{i(k)} \geq 4\epsilon d(v_1^{(i(k))}, v_{l_{i(k))}}^{(i(k))}) + 3\epsilon \delta \lg n$. The summands that are not fulfilling this condition sum up to at most
\begin{align*}
    \sum_{k = 1}^{p' + 1} 4\epsilon d_{\pi_{s,x}}(v_1^{(i(k))}, v_{l_{i(k))}}^{(i(k))}) + 3\epsilon \delta \lg n
    &\leq \sum_{k = 1}^{p + 1} 4\epsilon d_{\pi_{s,x}}(v_1^{(k)}, v_{l_{(k)}}^{(k)}) + 3\epsilon \delta \lg n \\
    &\leq 4\epsilon d_{\pi_{s,x}}(s,x) + 3\epsilon \tau \lg n \\ &\leq 11 \epsilon \delta \lg n.
\end{align*}
Since the summands that are not fulfilling this condition sum up to at most $11 \epsilon \tau \lg n$, the remaining ones sum up to at least $49\epsilon \tau \lg n$. We apply \Cref{lem:pullback} to the all the remaining segments, and obtain that there are at least $m^{1/3}\lg n$ tense vertices in total among these segments, so that
\begin{align*}
    \hat{d}^{t+b}_{\DSBack}(x) \leq \hat{d}^{t + b}_{\DSFront}(v_{l_{i(j)}}^{(i(j))}) + d_{\pi_{s,x}}(v_{l_{i(j)}}^{(i(j))}, x) - r'_{i(j)} - 4 \epsilon \tau \lg n \tag*{(reduction)}\label{eq:reduction}
\end{align*}
if a tense vertex in segment $\sigma_{i(j)}$ got hit. Next, we show that \ref{eq:reduction} actually reduces the distance estimate of $t$ in $\DSBack$. By \Cref{lem:dist_to_end} we have
\begin{align*}
    \hat{d}^{t+b}_{\DSFront}(v_{l_{i(j)}}^{(i(j))}) - r_{i(j)} &\leq d_{\pi_{s,x}}(s, v_{l_{i(j)}}^{(i(j))}) + \sum_{k = 1}^{j} slack^{t+b}_{\DSFront}(\sigma_{i(k)}) - r_{i(k)} + 5 \epsilon \tau \lg n\\
    \hat{d}^{t+b}_{\DSFront}(v_{l_{i(j)}}^{(i(j))}) - r'_{i(j)} &\leq d_{\pi_{s,x}}(s, v_{l_{i(j)}}^{(i(j))}) + \sum_{k = 1}^{j} slack^{t+b}_{\DSFront}(\sigma_{i(k)}) - r'_{i(k)} + 5 \epsilon \tau \lg n \\
    &\leq d_{\pi_{s,x}}(s, v_{l_{i(j)}}^{(i(j))}) + 65 \epsilon \tau \lg n. \tag*{(level)}\label{eq:level}
\end{align*}
where the second step follows from the fact that $r'_{i(j)} = r_{i(j)}$ for $j \neq p^\star$.
Combining \ref{eq:reduction} with \ref{eq:level} yields
\begin{align*}
    \hat{d}^{t+b}_{\DSBack}(x) \leq d_{\pi_{s,x}}(s,x) + 70 \epsilon \tau \lg n
\end{align*}
which means that if any of the tense vertices in these segments got hit, we have $\hat{d}^{t+b}_{\DSBack}(x) \leq \hat{d}^{t+b}_{\DSFront}(x) - \epsilon \tau$ since $\hat{d}^{t+b}_{\DSBack}(x) \geq d_{\pi_{s,x}}(s,x) + 100\epsilon \tau \lg n$. 

The tense vertices in the same segment have $\epsilon \delta$ pairwise difference in distance estimates and the tense vertices of segment $\sigma_i(j)$ have distance estimate less than $\hat{d}^{t+b}_{\DSFront}(v_{l_{i(j)}}^{(i(j)}) - r'_{i(j)}$ by  \Cref{lem:pullback}. But then for $i(j) < i(k)$ we have that all the vertices in $v \in \sigma_{i(k)}$ have distance estimate 
\begin{align*}
    \hat{d}^{t+b}_{\DSFront}(v) &\geq \hat{d}^{t+b}_{\DSFront}(v_{l_{i(j)}}^{({i(j)})}) + d_{\pi_{s,x}}(v_{l_{i(j)}}^{(i)}, v) - r_{i(j)} + 5 (i(k) - {i(j)}) \epsilon \delta \lg n \\
    &\geq \hat{d}^{t+b}_{\DSFront}(v_{l_{i(j)}}^{({i(j)})})  - r'_{i(j)} + 5 \epsilon \delta  \lg n
\end{align*}
by \Cref{def:tense_segment} and since $r'_{i(j)} = r_{i(j)}$ for $j \neq p^\star$. This shows that they are sufficiently far apart and concludes our proof. 
\end{proof}

\begin{lemma}
Consider some vertex $x$ with $\hat{d}^{t + b}_{\DSFront}(x) - \hat{d}^{t + b}_{\DSBack}(x) \geq \epsilon \tau$ at time $t + b$. Then the potential $\phi_{\DSBack} = \sum_{v \in V}  \hat{d}_{\DSBack}^{t+b}(v)$ decreased by at least $\frac{1}{4} \epsilon m^{1/3} \tau$ since time $\tHalfInText$.
\label{lem:pot_dec}
\end{lemma}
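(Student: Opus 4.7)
The plan is to extract from the state of $\DSBack$ at time $t+b$ a canonical $s$-to-$x$ path in $G$ that must have at least $m^{1/3}$ edges, and then show that roughly $m^{1/3}/2$ vertices near $x$ on this path each contribute $\Omega(\epsilon \tau)$ to the potential drop. First I would extract the \emph{predecessor path} in $\DSBack$: starting from $x$, repeatedly follow the in-neighbour via which $\hat{d}_{\DSBack}$ of the current vertex was most recently set, until reaching $s$. This produces a path $s = u_0, u_1, \ldots, u_k = x$ in $G$ at time $t+b$. Writing $L^\ast_i = \sum_{j<i} \omega(u_j, u_{j+1})$ and $L = L^\ast_k = \hat{d}^{t+b}_{\DSBack}(x)$, a downward induction on $i$ (using that at the moment $\hat{d}_{\DSBack}(u_{i+1})$ was last set via $u_i$ the value of $\hat{d}_{\DSBack}(u_i)$ was exactly $L^\ast_i$, and that distance estimates are monotone non-increasing) yields $\hat{d}^{t+b}_{\DSBack}(u_i) \leq L^\ast_i$ for every $i$.

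The crucial leverage is that the gap at $x$ forces $k$ to be large. Applying \Cref{inv:max_str_edge} on $\DSFront$ iteratively along the $k$ edges of the path and using $\hat{d}^{t+b}_{\DSFront}(s) = 0$ gives $\hat{d}^{t+b}_{\DSFront}(x) \leq L + k \epsilon \delta$; combining with the hypothesis $\hat{d}^{t+b}_{\DSFront}(x) \geq L + \epsilon \tau$ forces $k \geq \tau/\delta = m^{1/3}$. Applying the same invariant instead along the sub-path from $u_i$ to $x$ gives the lower bound $\hat{d}^{t+b}_{\DSFront}(u_i) \geq L^\ast_i + \epsilon \tau - (k-i) \epsilon \delta$.

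To convert these gaps into potential drops, I would use synchronization of $\DSFront$ and $\DSBack$ at time $\tHalf$ together with the monotonicity of $\DSFront$'s distance estimates:
\[
\hat{d}^{\tHalf}_{\DSBack}(u_i) \;=\; \hat{d}^{\tHalf}_{\DSFront}(u_i) \;\geq\; \hat{d}^{t+b}_{\DSFront}(u_i) \;\geq\; L^\ast_i + \epsilon \tau - (k-i)\epsilon \delta,
\]
which combined with $\hat{d}^{t+b}_{\DSBack}(u_i) \leq L^\ast_i$ gives a drop of at least $\epsilon \tau - (k-i) \epsilon \delta$ at vertex $u_i$. Since $\delta = \tau/m^{1/3}$, each of the $m^{1/3}/2 + 1$ indices $i$ with $k - i \leq m^{1/3}/2$ (which exist because $k \geq m^{1/3}$) contributes at least $\epsilon \tau / 2$ to the drop, and the $u_i$ are distinct because edge weights are positive and the predecessor pointers form a tree. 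Summing over them yields a total potential drop of at least $(m^{1/3}/2) \cdot \epsilon \tau / 2 = \epsilon m^{1/3} \tau / 4$, as required.

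The main obstacle is arguing rigorously that $\hat{d}^{t+b}_{\DSBack}(u_i) \leq L^\ast_i$ despite the lazy update rule potentially leaving $\hat{d}_{\DSBack}(u_i)$ temporarily above $L^\ast_i$ during intermediate steps. The timing-based induction handles this: at the instant the successor's estimate is refreshed, $u_i$'s estimate is pinned to exactly $L^\ast_i$, and monotonicity of $\hat{d}_{\DSBack}$ over time prevents it from ever rising back. Once this structural claim is established, everything else is a clean chase of inequalities that leans only on \Cref{inv:max_str_edge}, synchronization of the two data structures at $\tHalf$, and the identity $\delta = \tau / m^{1/3}$.
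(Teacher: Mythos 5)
Your overall strategy matches the paper's: follow the predecessor pointers in $\DSBack$ backwards from $x$, chain \Cref{inv:max_str_edge}, monotonicity, and the synchronization $\hat d^{\tHalf}_{\DSBack} = \hat d^{\tHalf}_{\DSFront}$, and conclude that roughly $m^{1/3}/2$ distinct vertices near $x$ on this chain each dropped by $\geq \epsilon\tau/2$. You simply make the chaining more explicit by naming the whole path $u_0,\ldots,u_k$ and a reference quantity $L^\ast_i$ up front; the paper instead iterates a local two-line argument.

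There is, however, a genuine flaw in your structural claim $\hat d^{t+b}_{\DSBack}(u_i) \leq L^\ast_i$ (equivalently, in the asserted identity $L^\ast_k = \hat d^{t+b}_{\DSBack}(x)$, which is the base case your downward induction silently relies on). The lazy update rule can decouple a vertex from its predecessor: after $u_i$ last updates $u_{i+1}$, the estimate of $u_i$ can decrease again by a subthreshold amount that does \emph{not} trigger a change at $u_{i+1}$. In that situation $\hat d^{t+b}_{\DSBack}(u_{i+1}) > \hat d^{t+b}_{\DSBack}(u_i) + \omega(u_i,u_{i+1})$, and chaining this (which is exactly the upward induction you correctly allude to elsewhere) gives $\hat d^{t+b}_{\DSBack}(x) \geq L^\ast_k$ with strict inequality possible, so $\hat d^{t+b}_{\DSBack}(u_i) = L^\ast_i$ is not ``exact'' and $\hat d^{t+b}_{\DSBack}(u_i) \leq L^\ast_i$ can fail. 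Your counting of $k$ is unaffected (the hypothesis is stated against $\hat d^{t+b}_{\DSBack}(x)$, which dominates $L^\ast_k$), but the per-vertex drop bound is derived by subtracting a possibly false upper bound, which is not a valid step as written.

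The repair is mechanical and preserves your argument in full. Let $W_i = \sum_{j\geq i} \omega(u_j,u_{j+1})$ be the predecessor-path weight from $u_i$ to $x$. The downward induction, with the honest base case $\hat d^{t+b}_{\DSBack}(u_k) = \hat d^{t+b}_{\DSBack}(x)$, proves
\begin{align*}
\hat d^{t+b}_{\DSBack}(u_i) \;\leq\; \hat d^{\,T_{i+1}}_{\DSBack}(u_i) \;=\; \hat d^{t+b}_{\DSBack}(u_{i+1}) - \omega(u_i,u_{i+1}) \;\leq\; \hat d^{t+b}_{\DSBack}(x) - W_i,
\end{align*}
and the invariant chained from $u_i$ to $x$ together with the hypothesis gives
\begin{align*}
\hat d^{\tHalf}_{\DSBack}(u_i) \;\geq\; \hat d^{t+b}_{\DSFront}(u_i) \;\geq\; \hat d^{t+b}_{\DSBack}(x) - W_i + \epsilon\tau - (k-i)\epsilon\delta.
\end{align*}
The anchor $\hat d^{t+b}_{\DSBack}(x) - W_i$ cancels, yielding the drop $\epsilon\tau - (k-i)\epsilon\delta$ as you claim, and the final accounting is unchanged. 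This is exactly how the paper sidesteps the issue: it never introduces a path-weight variable at all but chains $\hat d^{t+b}_{\DSBack}(u) \leq \hat d^{t+b}_{\DSBack}(v) - \omega(u,v)$ directly, keeping $\hat d^{t+b}_{\DSBack}(x)$ as the implicit reference from the start.
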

\begin{proof}
Note that $\hat{d}^{t + b}_{\DSFront}(v)$ is a lower bound for $\hat{d}^{t}_{\DSFront}(v)$, and thus for $\hat{d}^{\tHalf}_{\DSBack}(v) = \hat{d}^{t}_{\DSFront}(v)$ as well. Consider the vertex $u$ that caused the final decrease of $v$. We must have $\hat{d}^{t_{last} + t}_{\tilde{G}}(u) \leq \hat{d}^{t_{last} + t}_{\tilde{G}}(v) - \omega(u,v)$ and as well as 
\begin{align*}
    \hat{d}^{t+b}_{\DSBack}(u) \geq \hat{d}^{t+b}_{\tilde{G}}(v) - \omega(u,v) - \epsilon \delta
\end{align*}
and
\begin{align*}
       \hat{d}^{t + b}_{\DSFront}(u) \geq \hat{d}^{t+b}_{\DSFront}(v) - \omega(u,v) - \epsilon \delta
\end{align*}
Thus, we have 
\begin{align*}
    \hat{d}^{\tHalf}_{\DSBack}(u) = \hat{d}^{t}_{\DSFront}(u) &\geq \hat{d}^{t+b}_{\DSFront}(v) - \omega(u,v) - \epsilon \delta\\
    &\geq \hat{d}^{t + b}_{\DSBack}(v) + \epsilon \tau - \omega(u,v) - \epsilon \delta  \\
    &\geq \hat{d}^{t + b}_{\DSBack}(u) + \epsilon \tau - \epsilon \delta
\end{align*}
Therefore we can iterate the argument $\tau/2\delta = m^{1/3}/2$ times and obtain that all these vertices decreased by at least $\epsilon \tau/2$. They are clearly source vertices of at least one edge each. Therefore the potential decrease is at least $\epsilon m^{1/3} \tau/4$ as desired. 
\end{proof}

Having shown the previous lemma, it just remains to prove it is unlikely none of all these tense vertices got hit. We conclude with the proof of \Cref{lma:keyLemmaPotentialReduction}.

\begin{proof}[Proof of \Cref{lma:keyLemmaPotentialReduction}]
By \Cref{lem:tense_vertices_not_fixed} there are at least $m^{1/3}\lg n$ vertices $w_i$ tense vertices, such that $|\hat{d}^{\tHalf}(w_i) - \hat{d}^{\tHalf}(w_j)| \geq \epsilon \delta$ for $i \neq j$. Therefore, there were at least $\epsilon m^{1/3} \lg n$ distinct distance estimate ranges $[r_k, r_k + \delta)$ of width $\delta$ containing tense vertices. The probability that we sample a single $i$ from
\begin{align*}
    0, ..., \ceil{2m^{1/3} + 200\epsilon m^{1/3} \lg n - 8}
\end{align*} so that  $[r_k, r_k + 3\delta] \subseteq [i, i + \delta)$ for some $k$, and thus the tense vertices in $[r_k, r_k + \delta]$ get hit is at least $\epsilon/200$. After $1/\epsilon$ independent sampling steps, the expected number of sampled tense vertices is at least $1/200$. By Markov's inequality, the probability of not having sampled a single one of them is at most $199/200$. We repeat this procedure $2000 \log n$ times, and bound the probability of never having sampled such an $i$ with $1/n^{5}$ using a standard Chernoff bound argument. If such a tense vertex is sampled, we have a decrease of $\frac{1}{4} \epsilon m^{1/3} \tau$ in potential by \Cref{lem:pot_dec}.

Finally, we notice that our proof only uses the randomness introduced by the last global fixing phase, which is not revealed to the adversary. Therefore, an adaptive adversary has no advantage over an oblivious one.
\end{proof}

\newpage

\bibliographystyle{alpha}
\bibliography{refs}

\newcommand{\etalchar}[1]{$^{#1}$}
\begin{thebibliography}{EFHGWN21}

\bibitem[ACK17]{abraham2017fully}
Ittai Abraham, Shiri Chechik, and Sebastian Krinninger.
\newblock Fully dynamic all-pairs shortest paths with worst-case update-time
  revisited.
\newblock In {\em Proceedings of the Twenty-Eighth Annual ACM-SIAM Symposium on
  Discrete Algorithms}, pages 440--452. SIAM, 2017.

\bibitem[AHR{\etalchar{+}}19]{AHR19}
Bertie Ancona, Monika Henzinger, Liam Roditty, Virginia~Vassilevska Williams,
  and Nicole Wein.
\newblock {Algorithms and Hardness for Diameter in Dynamic Graphs}.
\newblock In Christel Baier, Ioannis Chatzigiannakis, Paola Flocchini, and
  Stefano Leonardi, editors, {\em 46th International Colloquium on Automata,
  Languages, and Programming (ICALP 2019)}, volume 132 of {\em Leibniz
  International Proceedings in Informatics (LIPIcs)}, pages 13:1--13:14,
  Dagstuhl, Germany, 2019. Schloss Dagstuhl--Leibniz-Zentrum fuer Informatik.

\bibitem[AW14]{AW14}
Amir Abboud and Virginia~Vassilevska Williams.
\newblock Popular conjectures imply strong lower bounds for dynamic problems.
\newblock In {\em 2014 IEEE 55th Annual Symposium on Foundations of Computer
  Science}, pages 434--443, 2014.

\bibitem[BBG{\etalchar{+}}20]{bernstein2020fully}
Aaron Bernstein, Jan van~den Brand, Maximilian~Probst Gutenberg, Danupon
  Nanongkai, Thatchaphol Saranurak, Aaron Sidford, and He~Sun.
\newblock Fully-dynamic graph sparsifiers against an adaptive adversary.
\newblock {\em arXiv preprint arXiv:2004.08432}, 2020.

\bibitem[BC16]{bernstein2016deterministic}
Aaron Bernstein and Shiri Chechik.
\newblock Deterministic decremental single source shortest paths: beyond the o
  (mn) bound.
\newblock In {\em Proceedings of the forty-eighth annual ACM symposium on
  Theory of Computing}, pages 389--397. ACM, 2016.

\bibitem[BC17]{bernstein2017deterministic}
Aaron Bernstein and Shiri Chechik.
\newblock Deterministic partially dynamic single source shortest paths for
  sparse graphs.
\newblock In {\em Proceedings of the Twenty-Eighth Annual ACM-SIAM Symposium on
  Discrete Algorithms}, pages 453--469. SIAM, 2017.

\bibitem[BDP21]{bernsteinNewSCC}
Aaron Bernstein, Aditi Dudeja, and Seth Pettie.
\newblock {Incremental SCC Maintenance in Sparse Graphs}.
\newblock In Petra Mutzel, Rasmus Pagh, and Grzegorz Herman, editors, {\em 29th
  Annual European Symposium on Algorithms (ESA 2021)}, volume 204 of {\em
  Leibniz International Proceedings in Informatics (LIPIcs)}, pages
  14:1--14:16, Dagstuhl, Germany, 2021. Schloss Dagstuhl -- Leibniz-Zentrum
  f{\"u}r Informatik.

\bibitem[Ber16]{bernstein2016maintaining}
Aaron Bernstein.
\newblock Maintaining shortest paths under deletions in weighted directed
  graphs.
\newblock {\em SIAM Journal on Computing}, 45(2):548--574, 2016.

\bibitem[Ber17]{bernstein2017deterministicWeighted}
Aaron Bernstein.
\newblock Deterministic partially dynamic single source shortest paths in
  weighted graphs.
\newblock In {\em 44th International Colloquium on Automata, Languages, and
  Programming, ICALP 2017}, page~44. Schloss Dagstuhl-Leibniz-Zentrum fur
  Informatik GmbH, Dagstuhl Publishing, 2017.

\bibitem[BFGT15]{BFGT16}
Michael~A. Bender, Jeremy~T. Fineman, Seth Gilbert, and Robert~E. Tarjan.
\newblock A new approach to incremental cycle detection and related problems.
\newblock {\em ACM Trans. Algorithms}, 12(2), December 2015.

\bibitem[BGS21]{bernstein2021deterministic}
Aaron Bernstein, Maximilian~Probst Gutenberg, and Thatchaphol Saranurak.
\newblock Deterministic decremental sssp and approximate min-cost flow in
  almost-linear time.
\newblock {\em Accepted to FOCS'2021}, 2021.

\bibitem[BGWN20]{BPGWL20}
Aaron Bernstein, Maximilian~Probst Gutenberg, and Christian Wulff-Nilsen.
\newblock Near-optimal decremental sssp in dense weighted digraphs.
\newblock In {\em 2020 IEEE 61st Annual Symposium on Foundations of Computer
  Science (FOCS)}, pages 1112--1122, 2020.

\bibitem[BHG{\etalchar{+}}21]{bergamaschi2021new}
Thiago Bergamaschi, Monika Henzinger, Maximilian~Probst Gutenberg,
  Virginia~Vassilevska Williams, and Nicole Wein.
\newblock New techniques and fine-grained hardness for dynamic near-additive
  spanners.
\newblock In {\em Proceedings of the 2021 ACM-SIAM Symposium on Discrete
  Algorithms (SODA)}, pages 1836--1855. SIAM, 2021.

\bibitem[BHR19]{BHR19}
Aaron Bernstein, Jacob Holm, and Eva Rotenberg.
\newblock Online bipartite matching with amortized o(log 2 n) replacements.
\newblock {\em J. ACM}, 66(5), September 2019.

\bibitem[BPWN19]{BPW19}
Aaron Bernstein, Maximilian Probst, and Christian Wulff-Nilsen.
\newblock Decremental strongly-connected components and single-source
  reachability in near-linear time.
\newblock In {\em Proceedings of the 51st Annual ACM SIGACT Symposium on Theory
  of Computing}, STOC 2019, page 365–376, New York, NY, USA, 2019.
  Association for Computing Machinery.

\bibitem[BR11]{bernstein2011improved}
Aaron Bernstein and Liam Roditty.
\newblock Improved dynamic algorithms for maintaining approximate shortest
  paths under deletions.
\newblock In {\em Proceedings of the twenty-second annual ACM-SIAM symposium on
  Discrete Algorithms}, pages 1355--1365. Society for Industrial and Applied
  Mathematics, 2011.

\bibitem[CG20]{CG20}
Keerti Choudhary and Omer Gold.
\newblock Extremal distances in directed graphs: Tight spanners and
  near-optimal approximation algorithms.
\newblock In {\em Proceedings of the 2020 ACM-SIAM Symposium on Discrete
  Algorithms (SODA)}, pages 495--514, 2020.

\bibitem[Che18]{chechik2018near}
Shiri Chechik.
\newblock Near-optimal approximate decremental all pairs shortest paths.
\newblock In {\em 2018 IEEE 59th Annual Symposium on Foundations of Computer
  Science (FOCS)}, pages 170--181. IEEE, 2018.

\bibitem[Chu21]{chuzhoy2021decremental}
Julia Chuzhoy.
\newblock Decremental all-pairs shortest paths in deterministic near-linear
  time.
\newblock In {\em Proceedings of the 53rd Annual ACM SIGACT Symposium on Theory
  of Computing}, pages 626--639, 2021.

\bibitem[CK19]{chuzhoy2019new}
Julia Chuzhoy and Sanjeev Khanna.
\newblock A new algorithm for decremental single-source shortest paths with
  applications to vertex-capacitated flow and cut problems.
\newblock In {\em Proceedings of the 51st Annual ACM SIGACT Symposium on Theory
  of Computing}, pages 389--400. ACM, 2019.

\bibitem[CS21]{chuzhoy2021deterministic}
Julia Chuzhoy and Thatchaphol Saranurak.
\newblock Deterministic algorithms for decremental shortest paths via layered
  core decomposition.
\newblock In {\em Proceedings of the 2021 ACM-SIAM Symposium on Discrete
  Algorithms (SODA)}, pages 2478--2496. SIAM, 2021.

\bibitem[CZ21]{CZ21}
Shiri Chechik and Tianyi Zhang.
\newblock Incremental single source shortest paths in sparse digraphs.
\newblock In {\em Proceedings of the 2021 ACM-SIAM Symposium on Discrete
  Algorithms (SODA)}, pages 2463--2477, 2021.

\bibitem[DI04]{demetrescu2004new}
Camil Demetrescu and Giuseppe~F Italiano.
\newblock A new approach to dynamic all pairs shortest paths.
\newblock {\em Journal of the ACM (JACM)}, 51(6):968--992, 2004.

\bibitem[EFHGWN21]{evald2020decremental}
Jacob Evald, Viktor Fredslund-Hansen, Maximilian~Probst Gutenberg, and
  Christian Wulff-Nilsen.
\newblock {Decremental APSP in Unweighted Digraphs Versus an Adaptive
  Adversary}.
\newblock In Nikhil Bansal, Emanuela Merelli, and James Worrell, editors, {\em
  48th International Colloquium on Automata, Languages, and Programming (ICALP
  2021)}, volume 198 of {\em Leibniz International Proceedings in Informatics
  (LIPIcs)}, pages 64:1--64:20, Dagstuhl, Germany, 2021. Schloss Dagstuhl --
  Leibniz-Zentrum f{\"u}r Informatik.

\bibitem[GWN20a]{PGWN20}
Maximilian~Probst Gutenberg and Christian Wulff-Nilsen.
\newblock Decremental sssp in weighted digraphs: Faster and against an adaptive
  adversary.
\newblock In {\em Proceedings of the Thirty-First Annual ACM-SIAM Symposium on
  Discrete Algorithms}, SODA '20, page 2542–2561, USA, 2020. Society for
  Industrial and Applied Mathematics.

\bibitem[GWN20b]{gutenberg2020deterministic}
Maximilian~Probst Gutenberg and Christian Wulff-Nilsen.
\newblock Deterministic algorithms for decremental approximate shortest paths:
  Faster and simpler.
\newblock In {\em Proceedings of the Fourteenth Annual ACM-SIAM Symposium on
  Discrete Algorithms}, pages 2522--2541. SIAM, 2020.

\bibitem[GWN20c]{gutenberg2020fully}
Maximilian~Probst Gutenberg and Christian Wulff-Nilsen.
\newblock Fully-dynamic all-pairs shortest paths: Improved worst-case time and
  space bounds.
\newblock In {\em Proceedings of the Fourteenth Annual ACM-SIAM Symposium on
  Discrete Algorithms}, pages 2562--2574. SIAM, 2020.

\bibitem[Hes03]{Hesse03}
William Hesse.
\newblock Directed graphs requiring large numbers of shortcuts.
\newblock In {\em Proceedings of the Fourteenth Annual ACM-SIAM Symposium on
  Discrete Algorithms}, SODA '03, page 665–669, USA, 2003. Society for
  Industrial and Applied Mathematics.

\bibitem[HK95]{HK95}
Monika Henzinger and Valerie King.
\newblock Fully dynamic biconnectivity and transitive closure.
\newblock In {\em Proceedings of IEEE 36th Annual Foundations of Computer
  Science}, pages 664--672, 1995.

\bibitem[HKM{\etalchar{+}}12]{HKM12}
Bernhard Haeupler, Telikepalli Kavitha, Rogers Mathew, Siddhartha Sen, and
  Robert~E. Tarjan.
\newblock Incremental cycle detection, topological ordering, and strong
  component maintenance.
\newblock {\em ACM Trans. Algorithms}, 8(1), January 2012.

\bibitem[HKN14]{HKN14}
Monika Henzinger, Sebastian Krinninger, and Danupon Nanongkai.
\newblock Sublinear-time decremental algorithms for single-source reachability
  and shortest paths on directed graphs.
\newblock In {\em Proceedings of the Forty-Sixth Annual ACM Symposium on Theory
  of Computing}, STOC '14, page 674–683, New York, NY, USA, 2014. Association
  for Computing Machinery.

\bibitem[HKN15]{HKN15}
Monika Henzinger, Sebastian Krinninger, and Danupon Nanongkai.
\newblock Improved algorithms for decremental single-source reachability on
  directed graphs.
\newblock In Magn{\'u}s~M. Halld{\'o}rsson, Kazuo Iwama, Naoki Kobayashi, and
  Bettina Speckmann, editors, {\em Automata, Languages, and Programming}, pages
  725--736, Berlin, Heidelberg, 2015. Springer Berlin Heidelberg.

\bibitem[HKNS15]{HKT15b}
Monika Henzinger, Sebastian Krinninger, Danupon Nanongkai, and Thatchaphol
  Saranurak.
\newblock Unifying and strengthening hardness for dynamic problems via the
  online matrix-vector multiplication conjecture.
\newblock In {\em Proceedings of the Forty-Seventh Annual ACM Symposium on
  Theory of Computing}, STOC '15, page 21–30, New York, NY, USA, 2015.
  Association for Computing Machinery.

\bibitem[HP18]{HP18}
Shang-En Huang and Seth Pettie.
\newblock {Lower Bounds on Sparse Spanners, Emulators, and Diameter-reducing
  shortcuts}.
\newblock In David Eppstein, editor, {\em 16th Scandinavian Symposium and
  Workshops on Algorithm Theory (SWAT 2018)}, volume 101 of {\em Leibniz
  International Proceedings in Informatics (LIPIcs)}, pages 26:1--26:12,
  Dagstuhl, Germany, 2018. Schloss Dagstuhl--Leibniz-Zentrum fuer Informatik.

\bibitem[JLS19]{JLS19}
Arun Jambulapati, Yang~P. Liu, and Aaron Sidford.
\newblock Parallel reachability in almost linear work and square root depth.
\newblock In {\em 2019 IEEE 60th Annual Symposium on Foundations of Computer
  Science (FOCS)}, pages 1664--1686, 2019.

\bibitem[Kin99]{K99}
Valerie King.
\newblock Fully dynamic algorithms for maintaining all-pairs shortest paths and
  transitive closure in digraphs.
\newblock In {\em 40th Annual Symposium on Foundations of Computer Science
  (Cat. No.99CB37039)}, pages 81--89, 1999.

\bibitem[KL19]{karczmarz2019reliable}
Adam Karczmarz and Jakub Lacki.
\newblock Reliable hubs for partially-dynamic all-pairs shortest paths in
  directed graphs.
\newblock In {\em 27th Annual European Symposium on Algorithms (ESA 2019)}.
  Schloss Dagstuhl-Leibniz-Zentrum fuer Informatik, 2019.

\bibitem[K{\L}20]{karczmarz2020simple}
Adam Karczmarz and Jakub {\L}acki.
\newblock Simple label-correcting algorithms for partially dynamic approximate
  shortest paths in directed graphs.
\newblock In {\em Symposium on Simplicity in Algorithms}, pages 106--120. SIAM,
  2020.

\bibitem[PGVWW20]{PGVWW20}
Maximilian Probst~Gutenberg, Virginia Vassilevska~Williams, and Nicole Wein.
\newblock New algorithms and hardness for incremental single-source shortest
  paths in directed graphs.
\newblock In {\em Proceedings of the 52nd Annual ACM SIGACT Symposium on Theory
  of Computing}, STOC 2020, page 153–166, New York, NY, USA, 2020.
  Association for Computing Machinery.

\bibitem[RZ04]{RZ04}
Liam Roditty and Uri Zwick.
\newblock On dynamic shortest paths problems.
\newblock In Susanne Albers and Tomasz Radzik, editors, {\em Algorithms -- ESA
  2004}, pages 580--591, Berlin, Heidelberg, 2004. Springer Berlin Heidelberg.

\bibitem[RZ11]{RZ11}
Liam Roditty and Uri Zwick.
\newblock On dynamic shortest paths problems.
\newblock {\em Algorithmica}, 61(2):389--401, 2011.

\bibitem[SE81]{SE81}
Yossi Shiloach and Shimon Even.
\newblock An on-line edge-deletion problem.
\newblock {\em J. ACM}, 28(1):1–4, January 1981.

\bibitem[Tho04]{thorup2004fully}
Mikkel Thorup.
\newblock Fully-dynamic all-pairs shortest paths: Faster and allowing negative
  cycles.
\newblock In {\em Scandinavian Workshop on Algorithm Theory}, pages 384--396.
  Springer, 2004.

\bibitem[Tho05]{thorup2005worst}
Mikkel Thorup.
\newblock Worst-case update times for fully-dynamic all-pairs shortest paths.
\newblock In {\em Proceedings of the thirty-seventh annual ACM symposium on
  Theory of computing}, pages 112--119, 2005.

\bibitem[vdBN19]{van2019dynamic}
Jan van~den Brand and Danupon Nanongkai.
\newblock Dynamic approximate shortest paths and beyond: Subquadratic and
  worst-case update time.
\newblock In {\em 2019 IEEE 60th Annual Symposium on Foundations of Computer
  Science (FOCS)}, pages 436--455. IEEE, 2019.

\bibitem[Wil15]{Wil15}
Virginia~Vassilevska Williams.
\newblock {Hardness of Easy Problems: Basing Hardness on Popular Conjectures
  such as the Strong Exponential Time Hypothesis (Invited Talk)}.
\newblock In Thore Husfeldt and Iyad Kanj, editors, {\em 10th International
  Symposium on Parameterized and Exact Computation (IPEC 2015)}, volume~43 of
  {\em Leibniz International Proceedings in Informatics (LIPIcs)}, pages
  17--29, Dagstuhl, Germany, 2015. Schloss Dagstuhl--Leibniz-Zentrum fuer
  Informatik.

\bibitem[WW18]{WW18}
Virginia~Vassilevska Williams and R.~Ryan Williams.
\newblock Subcubic equivalences between path, matrix, and triangle problems.
\newblock {\em J. ACM}, 65(5), August 2018.

\end{thebibliography}

\newpage

\appendix

\section{Related Work} \label{subsec:relWork}

Here we also give an overview of more broadly related work. We use the same notational conventions as in the \Cref{subsec:priorWork}.

\paragraph{Directed APSP.} For the partially-dynamic All-Pairs Shortest Paths problem, a near-optimal algorithm with approximation $(1+\epsilon)$ and running time $\tilde{O}(mn \polylog(W))$ was given by Bernstein \cite{bernstein2016maintaining} that works against an oblivious adversary. Recent work in this setting has therefore focused on removing the oblivious adversary assumption \cite{karczmarz2019reliable, evald2020decremental} both of which currently achieve total update time $\tilde{O}(n^{2+1/3} \polylog(W))$ in sparse graphs where $m = \tilde{O}(n)$. For dense graphs, a deterministic $\tilde{O}(n^3 \polylog(W))$ update time algorithm exists \cite{karczmarz2020simple}.

There is also significant research on the All-Pairs Shortest Paths problem in the fully dynamic setting. Most notably, an algorithm by Italiano and Demetrescu \cite{demetrescu2004new} that achieves amortized update time $\tilde{O}(n^2)$ per update which was thereafter improved in logarithmic factors, simplified and generalized by Thorup \cite{thorup2004fully}. There has also been significant research on fully dynamic APSP with worst-case update times \cite{thorup2005worst, abraham2017fully, van2019dynamic, gutenberg2020fully, bergamaschi2021new}.

\paragraph{Undirected SSSP.} The undirected partially-dynamic version of the SSSP problem, in a major breakthrough, was solved by Henzinger, Krinninger and Nanongkai \cite{HKN14} who gave the first near-optimal $m^{1+o(1)}\polylog(W)$ time $(1+\epsilon)$ approximation algorithm although only against an oblivious adversary. Recent efforts to derandomize/ strengthen the adversary model \cite{bernstein2016deterministic, bernstein2017deterministic, bernstein2017deterministicWeighted,chuzhoy2019new,gutenberg2020deterministic, bernstein2020fully, chuzhoy2021deterministic} has been an extensive branch of research and has recently culminated in a deterministic algorithm \cite{bernstein2021deterministic} that also achieves total update time $m^{1+o(1)}\polylog(W)$. We point out that while some of the algorithms mentioned in this paragraph only work in decremental graphs, we believe that both the results \cite{HKN14} and \cite{bernstein2021deterministic} extend rather seamlessly to the incremental setting.

\paragraph{Undirected APSP.} Finally, we remark that there is also extensive literature about the partially-dynamic APSP problem in undirected graphs. For $(1+\epsilon)$-approximation, a near-optimal deterministic algorithm follows from  \cite{bernstein2021deterministic} with total update time $mn^{1+o(1)} \polylog(W)$. In the decremental setting, there are also results for larger approximations: a recent result by Chechik \cite{chechik2018near} achieves near-optimal update time $mn^{1/k + o(1)}\polylog(W)$ and reports distance with a $(1+\epsilon)(2k-1)$ stretch, the algorithm assumes an oblivious adversary. Against an adaptive adversary, two recent results \cite{chuzhoy2021decremental, bernstein2021deterministic} obtain $m^{1+o(1)}\polylog(W)$ update time for reporting $n^{o(1)}$-approximate shortest path distances.

\section{The Hop Set Barrier and the ES-tree}  \label{subsec:hop_set_barrier}

In this section we explain the usage of the ES-tree \cite{SE81, HK95} in conjunction with a hop set for the unfamiliar reader. Although we present them in the incremental setting, these techniques naturally extend to the decremental one.  

For a graph with integer edge weights up to $W$, the ES-tree data structure initially stores the distance of each vertex to a dedicated source $s$. Then, whenever an edge $(u,v)$ is inserted, it checks if $v$ profits from using this edge. If so, $v$ gets decreased and such checks are run recursively for all out-edges of decreased vertices until nobody profits anymore. It is easy to see that this technique maintains distances to the source exactly. The total update time is $O(mnW)$ because the tail of each edge starts out at some distance upper bounded by $(n-1)W$, and the edge is explored whenever this distance decreases by at least one. 

A hop set with hop $h$ is a collection of short cut edges that guarantee the existence of a shortest path using at most $h$ edges between any two vertices $u$ and $v$. Since such a set implies maximum distance $hW$, the run-time immediately improves to $O(mhW)$ by adding the short cut edges to the graph.

Finally, standard edge rounding techniques yield a $(1 + \epsilon)$-approximation scheme with total update time $\tilde{O}(mh \log W/\epsilon)$ \cite{bernstein2011improved}. To obtain this, we distinguish between path length ranges $[\tau, 2\tau)$ as in the algorithms presented in this article. For a particular range, all edge weights $\omega(e)$ are rounded to $\tilde{\omega}(e) = \ceil{h\omega(e)/\epsilon\tau}$. We set the maximum maintained distance estimate to $3\ceil{h/\epsilon}$ and use a standard ES-tree as discussed above. For a $s-x$ path $v_1, ..., v_l$ with $l \leq h$ in the distance range $d(s,x) \in [\tau, 2\tau)$, we have
\begin{align*}
    d(s, x) = \sum_{i = 1}^{l - 1} \omega(v_i, v_{i + 1}) \leq \sum_{i = 1}^{l - 1} (\epsilon\tau /h) \tilde{\omega}(v_i, v_{i + 1}) \leq \sum_{i = 1}^{l - 1} \omega(v_i, v_{i + 1}) + \epsilon\tau /h \leq d(s,x) + \epsilon \tau.
\end{align*}
Thus, we obtain a $(1+\epsilon)$-approximation by maintaining separate data structures for all possible path length ranges $[2^i, 2^{i+1})$ and reporting the minimum among the distance estimates after scaling them back with $\tau \epsilon/h$.

Conditional on the barrier $h = \tilde{\Omega}(\sqrt{n})$, this technique cannot achieve total update time $O(n^{3/2 - c})$ for any small positive constant $c$, even for very sparse graphs. This is referred to as the hop set barrier.
\end{document}